\newif\ifarxiv
\newcommand{\appendixName}{appendix}
\newcommand{\appendixName}{extended appendix}
\ttfamily\color{gray}, 
\ttfamily\color{purple},
\scriptsize\color{gray}\ttfamily, language=ML, showspaces=false,showstringspaces=false,xleftmargin=15pt, 
\newcommand{\li}[1]{\lstinline[basicstyle=\ttfamily\fontsize{9pt}{1em}\selectfont]{#1}}
\newcommand{\lismall}[1]{\lstinline[basicstyle=\ttfamily\fontsize{9pt}{1em}\selectfont]{#1}}
\newcommand{\HazelnutLive}{\textsf{Hazelnut Live}\xspace}
\newcommand{\Hazelnut}{\textsf{Hazelnut}\xspace}
\newcommand{\Hazel}{\textsf{Hazel}\xspace}
\newtheoremstyle{slplain}
  {.15\baselineskip\@plus.1\baselineskip\@minus.1\baselineskip}
  {.15\baselineskip\@plus.1\baselineskip\@minus.1\baselineskip}
  {\slshape}
  {\parindent}
  {\bfseries}
  {.}
  { }
  {}
\theoremstyle{slplain}
\newtheorem{thm}{Theorem}  
\numberwithin{thm}{section}
\newtheorem{defn}[thm]{Definition}
\newtheorem{lem}[thm]{Lemma}
\newtheorem{prop}[thm]{Proposition}
\newtheorem{corol}[thm]{Corollary}
\ifarxiv \acmArticle{1} \else \acmArticle{14} \fi \acmMonth{1} \acmPrice{}\acmDOI{10.1145/3290327}
\newcommand{\mynote}[3]{\textcolor{#3}{\textsf{{#2}}}}
\newcommand{\rkc}[1]{\mynote{rkc}{#1}{blue}}
\newcommand{\mattOmit}[1]{\colorbox{yellow}{(Matt omitted stuff here)}}
\def\parahead#1{\paragraph{\textbf{#1.}}}
\newcommand{\Elm}{\ensuremath{\textsf{Elm}}}
\newcommand{\sns}{\ensuremath{\textrm{Sketch-n-Sketch}}}
\newcommand{\defeq}{\overset{\textrm{def}}{=}}
\newcommand{\Property}[1]{\textrm{#1}}
\newcommand{\codeSize}
  {\small}
\newcommand{\JoinTypes}[2]{\textsf{join}(#1,#2)}
\newcommand{\vsepRuleHeight}{0.08in}
\newcommand{\vsepRule}{\vspace{\vsepRuleHeight}}
\newcommand{\cmttclo}[2]{\mathsf{clo}(#1, #2)}
\newcommand{\CaptionLabel}[2]{
  \caption{#1}
  \label{#2}}
\newcommand{\llparenthesiscolor}{\textcolor{violet}{\llparenthesis}}
\newcommand{\rrparenthesiscolor}{\textcolor{violet}{\rrparenthesis}}
\newcommand{\isComplete}[1]{#1~\mathsf{complete}}
\newcommand{\htau}{\tau}
\newcommand{\tarr}[2]{#1 \rightarrow #2}
\newcommand{\tnum}{\texttt{num}}
\newcommand{\tehole}{\llparenthesiscolor\rrparenthesiscolor}
\newcommand{\tsum}[2]{{#1} + {#2}}
\newcommand{\tconsistent}[2]{#1 \sim #2}
\newcommand{\hexp}{e}
\newcommand{\hlam}[2]{\lambda #1.#2}
\newcommand{\halam}[3]{\lambda #1{:}#2.#3}
\newcommand{\hap}[2]{#1(#2)}
\newcommand{\hnum}[1]{\underline{#1}}
\newcommand{\hadd}[2]{#1 + #2}
\newcommand{\hehole}[1]{\llparenthesiscolor\rrparenthesiscolor^{#1}}
\newcommand{\hhole}[2]{\llparenthesiscolor#1\rrparenthesiscolor^{#2}}
\newcommand{\hinL}[1]{\mathsf{inl}(#1)}
\newcommand{\hinR}[1]{\mathsf{inr}(#1)}
\newcommand{\hcase}[5]{\texttt{case}({#1},{#2}.{#3},{#4}.{#5})}
\newcommand{\hGamma}{\Gamma}
\newcommand{\EmptyDelta}{\cdot} 
\newcommand{\domof}[1]{\text{dom}(#1)}
\newcommand{\hsyn}[3]{#1 \vdash #2 \Rightarrow #3}
\newcommand{\hana}[3]{#1 \vdash #2 \Leftarrow #3}
\newcommand{\removeSel}[1]{#1^{\diamond}}
\newcommand{\zexp}{\hat{e}}
\newcommand{\performSyn}[6]{#1 \vdash #2 \Rightarrow #3 \xlongrightarrow{#4} #5 \Rightarrow #6}
\newcommand{\arrmatch}[2]{#1 \blacktriangleright_{\rightarrow} #2}
\newcommand{\groundmatch}[2]{#1 \blacktriangleright_{\mathsf{ground}} #2}
\newcommand{\summatch}[2]{#1 \blacktriangleright_{+} #2}
\newcommand{\mvar}[0]{u}
\newcommand{\subst}[0]{\sigma}
\newcommand{\substitute}[3]{[#1/#2]#3}
\newcommand{\fvof}[1]{\mathsf{FV}(#1)}
\newcommand{\dexp}[0]{d}
\newcommand{\dcasttwo}[3]{#1 \langle{#2}\Rightarrow{#3}\rangle}
\newcommand{\dcastthree}[4]
  {#1 \langle{#2}\Rightarrow{#3}\Rightarrow{#4}\rangle} 
\newcommand{\dcastfail}[3]{#1 \langle{#2}\Rightarrow{\tehole}\not\Rightarrow{#3}\rangle}
\newcommand{\dlam}[3]{\halam{#1}{#2}{#3}}
\newcommand{\dap}[2]{#1(#2)}
\newcommand{\dapP}[2]{(#1)(#2)} 
\newcommand{\dnum}[1]{\underline{#1}}
\newcommand{\dadd}[2]{#1 + #2}
\newcommand{\dehole}[3]{\leftidx{^{#3}}{\llparenthesiscolor\rrparenthesiscolor}{^{#1}_{#2}}}
\newcommand{\dhole}[4]{\leftidx{^{#4}}{\llparenthesiscolor#1\rrparenthesiscolor}{^{#2}_{#3}}}
\newcommand{\dinL}[2]{\mathsf{inl}_{#1}(#2)}
\newcommand{\dinR}[2]{\mathsf{inr}_{#1}(#2)}
\newcommand{\dcase}[5]{\texttt{case}({#1},{#2}.{#3},{#4}.{#5})}
\newcommand{\elabAna}[6]{#1 \vdash #2 \Leftarrow #3 \leadsto #4 : #5 \dashv #6}
\newcommand{\elabSyn}[5]{#1 \vdash #2 \Rightarrow #3 \leadsto #4 \dashv #5}
\newcommand{\hasType}[4]{#1; #2 \vdash #3 : #4}
\newcommand{\isValue}[1]{#1~\mathsf{val}}
\newcommand{\isGround}[1]{#1~\mathsf{ground}}
\newcommand{\isBoxedValue}[1]{#1~\mathsf{boxedval}}
\newcommand{\isIndet}[1]{#1~\mathsf{indet}}
\newcommand{\isFinal}[1]{#1~\mathsf{final}}
\newcommand{\stepsToD}[3]{#2 \mapsto #3}
\newcommand{\multiStepsTo}[2]{#1 \mapsto^* #2}
\newcommand{\hDelta}{\Delta}
\newcommand{\Dunion}[2]{#1 \cup #2}
\newcommand{\idof}[1]{\mathsf{id}(#1)}
\newcommand{\Dbinding}[3]{#1 :: #3[#2]}
\newcommand{\instantiate}[3]{\llbracket#1 / #2\rrbracket #3}
\newcommand{\evalctx}{\mathcal{E}}
\newcommand{\evalhole}{\circ}
\newcommand{\isevalctx}[1]{#1~\mathsf{evalCtx}}
\newcommand{\reducesE}[3]{#2 \longrightarrow #3}
\newcommand{\selectEvalCtxR}[2]{#1\{#2\}}
\newcommand{\selectEvalCtx}[3]{#1=\selectEvalCtxR{#2}{#3}}
\newcommand{\maybePremise}[1]{{\textcolor{red}[}#1{\textcolor{red}]}}
\newcommand{\inhole}[2]{\mathsf{inhole}(#1; #2)}
\newcommand{\DoSubst}[3]{[#1/#2]{#3}}
\begin{document}

\title{Live Functional Programming with Typed Holes}         
\ifarxiv
\subtitle{Extended Version}
\subtitlenote{The original version of this article was published in the POPL 2019 edition of PACMPL \cite{HazelnutLive}. This extended version includes an additional appendix.}
\fi



\author{Cyrus Omar}
\affiliation{
  \institution{University of Chicago, USA}            
}
\email{comar@cs.uchicago.edu}          

\author{Ian Voysey}
\affiliation{
  \institution{Carnegie Mellon University, USA}            
}
\email{iev@cs.cmu.edu}          

\author{Ravi Chugh}
\affiliation{
  \institution{University of Chicago, USA}            
}
\email{rchugh@cs.uchicago.edu}          

\author{Matthew A. Hammer}
\affiliation{
  \institution{University of Colorado Boulder, USA}            
}
\email{matthew.hammer@colorado.edu}          




\begin{abstract}
Live programming environments aim to provide programmers (and sometimes audiences) 
with continuous feedback about a program's dynamic behavior as it is being edited. 
The problem is that programming languages typically assign dynamic meaning only 
to programs that are \emph{complete}, i.e. syntactically well-formed and free
of type errors. Consequently,    
live feedback presented to the programmer exhibits temporal or perceptive gaps. 

This paper confronts this ``{gap problem}'' from type-theoretic first principles by developing 
\emph{a dynamic semantics for incomplete functional programs}, 
starting from the static semantics for incomplete functional programs developed in recent work on \Hazelnut. 
We model incomplete functional programs as expressions with \emph{holes}, 
with empty holes standing for missing expressions or types, and  non-empty holes 
operating as membranes around static and dynamic type inconsistencies. 
Rather than aborting when evaluation encounters any of these holes as in
some existing systems, evaluation proceeds around holes,
tracking the 
closure around each hole instance as it flows through the remainder of the program. Editor services can use the information in these hole closures 
to help the programmer develop and confirm their mental model of the behavior of the complete portions of the program as they decide how to fill the remaining holes. 
Hole closures also enable a \emph{fill-and-resume} operation that 
avoids the need to restart evaluation after edits that amount to hole filling. 
Formally, the semantics borrows machinery from both gradual type theory (which supplies the basis for handling unfilled 
type holes) and contextual modal type theory (which supplies a
logical basis for hole closures), combining these and developing additional machinery necessary 
to continue evaluation past holes while maintaining type safety. We have mechanized the metatheory of the core calculus, called \HazelnutLive{}, using the Agda proof assistant.

We have also implemented these ideas into the \Hazel programming environment. The implementation inserts holes automatically, following the \Hazelnut edit action calculus, to guarantee that every editor state has some (possibly incomplete)
type.
Taken together with this paper's type safety property, the 
result is  
a proof-of-concept live programming environment where rich dynamic feedback 
is truly available without gaps, i.e. for every reachable editor state.


\end{abstract}

\begin{CCSXML}
<ccs2012>
<concept>
<concept_id>10011007.10011006.10011008.10011009.10011012</concept_id>
<concept_desc>Software and its engineering~Functional languages</concept_desc>
<concept_significance>500</concept_significance>
</concept>
</ccs2012>
\end{CCSXML}

\ccsdesc[500]{Software and its engineering~Functional languages}

\keywords{live programming, gradual typing, contextual modal type theory, typed holes, structured editing}  

\maketitle

\ifarxiv \clearpage \fi
\newcommand{\introSec}{Introduction}
\section{\introSec} 
\label{sec:intro}


Programmers typically shift between program editing and program evaluation many times before converging upon a program that behaves as intended. 
Live programming environments aim to granularly interleave editing and evaluation so as to   
narrow what \citet{burckhardt2013s} call the ``temporal and perceptive gap'' between these activities.

For example, read-evaluate-print loops (REPLs) and derivatives thereof, like the IPython/Jupyter lab notebooks popular in data science~\cite{PER-GRA:2007}, allow the programmer to edit and immediately execute program fragments organized into a sequence of cells. 
Spreadsheets are live functional dataflow environments, with cells organized into a grid \cite{DBLP:journals/jfp/Wakeling07}. 
More specialized examples include live direct manipulation programming environments like SuperGlue
\cite{McDirmid:2007}, \sns{}~\cite{sns-pldi,sns-uist}, and the tools
demonstrated by \citet{victor2012inventing} in his lectures;
live GUI frameworks \cite{burckhardt2013s};
live image processing languages~\cite{DBLP:journals/vlc/Tanimoto90};
and live visual and auditory dataflow languages \cite{DBLP:conf/vl/BurnettAW98}, which can support live coding as performance art \cite{DBLP:journals/programming/ReinRLHP19}.
Editor-integrated debuggers \cite{mccauley2008debugging} and editors that support inspecting run-time state, like Smalltalk environments \cite{Goldberg:1983cn}, are also live programming environments. 

%
The problem at the heart of this paper is that
programming languages typically assign meaning only to {complete programs}, i.e. programs that are syntactically well-formed and free of type and binding errors. Programming environments, however, often encounter incomplete, and therefore conventionally meaningless, editor states. As a result, live feedback either ``flickers out'', creating a temporal gap, or it ``goes stale'', i.e. it relies on the most recent complete editor state, creating a perceptive gap because the feedback does not accurately reflect the editor state.

In some cases these gaps are brief, like while the programmer
is in the process of entering  
a short expression. In other cases, these gaps can persist over substantial lengths of time, such as when there are many branches of a case analysis whose bodies are initially left blank or when the programmer is puzzling over a mistake.
This is particularly problematic for novice programmers, who make more mistakes \cite{mccauley2008debugging,fitzgerald2008debugging}.
%
The problem is also particularly pronounced for languages with rich static type systems where certain program changes, such as a change to a type definition, can cause type errors to propagate throughout the program. Throughout the process of addressing these errors, which can sometimes span many days, the program text remains formally meaningless. 
About 40\% of edits performed by Java programmers using Eclipse left the program text malformed \cite{popl-paper} and some additional number, which could not be determined from the data gathered by \citet{6883030}, were well-formed but ill-typed.

In recognition of this ``{gap problem}''---that incomplete programs are formally meaningless---\citet{popl-paper} describe a static semantics (i.e. a type system) for incomplete 
functional programs, modeling them formally as typed expressions with \emph{holes} in 
both expression and type position. 
Empty holes stand for missing expressions or types,
and non-empty holes operate as ``membranes'' around static type inconsistencies 
(i.e. they internalize the ``red underline'' that editors commonly display under a type inconsistency).
\citet{HazelnutSNAPL} discuss several ways to determine an incomplete expression from the editor state. When the editor state is a text buffer, error recovery mechanisms might insert holes implicitly \cite{DBLP:journals/siamcomp/AhoP72,charles1991practical,graham1979practical,DBLP:conf/oopsla/KatsJNV09}. Alternatively, the language might provide explicit syntax for holes, so that the programmer can insert them either manually  
or semi-automatically via a code completion service \cite{Amorim2016}. For example, GHC Haskell supports the notation \li{_u} for empty holes, where \li{u} is an optional hole name \cite{GHCHoles}. When the editor state is instead a tree or graph structure, i.e. in a structure editor, the editor inserts explicitly represented holes fully automatically \cite{popl-paper}; we say more about structure editors in Sec.~\ref{sec:implementation}.

A static semantics for incomplete programs is useful, but for the purposes of live programming, it does not suffice---%
we also need a corresponding dynamic semantics that specifies how to evaluate expressions with holes. That is the focus of this paper. 
%

The simplest approach would be to define a dynamic semantics that aborts with an error when evaluation reaches a hole. 
%
This mirrors a workaround that programmers commonly deploy: 
raising an exception as a placeholder, e.g. \lismall{raise Unimplemented}. 
GHC Haskell supports this ``exceptional approach'' using the \lismall{-fdefer-typed-holes} flag.\footnote{
Without this flag, holes cause compilation to fail with an error message that reports information about each hole's type and typing context. 
Proof assistants like Agda \cite{norell:thesis,norell2009dependently} and Idris \cite{brady2013idris} also respond to holes in this way.
} 
Although better than nothing, the exceptional approach to expression holes has limitations 
within a live programming environment because 
(1)~it provides no information about the behavior of the remainder of the program, 
parts of which may not depend on the missing or erroneous expression (e.g. subsequent cells in a lab notebook, or tests involving other components of the program);  
(2)~it provides limited information about the dynamic state of the program where the hole appears 
(typically only a stack trace);  and
(3)~it provides no means by which to resume evaluation after hole filling.

Furthermore, exceptions can appear only in expressions, but we might also like to be able to evaluate programs that have type holes. Again, existing approaches do not support this situation well---GHC supports type holes, but compilation fails if type inference cannot automatically fill them (such as when a variable is used at multiple types) \cite{GHCHoles}. 
The static semantics developed by \citet{popl-paper} provides more hope because it derives the machinery for 
reasoning about type holes from gradual type theory, 
identifying the type hole with the unknown type \cite{DBLP:conf/snapl/SiekVCB15,Siek06a}.
As such, we might look to the dynamic semantics from
gradual type theory, 
which inserts dynamic casts as necessary. 
The only problem is that when a cast fails, evaluation stops with 
an exception, again leaving the live programming environment unable to provide rich, continuous feedback about the behavior of the remainder of the program.

\parahead{Contributions}

This paper develops a dynamic semantics for incomplete functional programs, starting from the static semantics developed by \citet{popl-paper},  that addresses the three limitations of the exceptional approach enumerated above.

In particular, rather than stopping when evaluation encounters an expression hole instance, evaluation continues ``around'' it.   
The system tracks the closure around each expression hole instance as evaluation proceeds. The live programming environment can feed the incomplete result and relevant information from these {hole closures} to the programmer to help them develop and confirm their mental model of the portions of the program that are complete as they work to fill the remaining holes. 
Then, when the programmer performs an edit that fills an empty expression hole or that replaces a non-empty hole with a type-correct expression, evaluation can resume from the previous evaluation state. We call this operation \emph{fill-and-resume}. For programs with unfilled type holes, casts are inserted as in gradual type theory (GTT) \cite{DBLP:conf/snapl/SiekVCB15} and, uniquely, evaluation proceeds around failed casts as it does around expression holes. 

The primary contribution of this paper is a simple type-theoretic account of this approach in the form of a core calculus, \HazelnutLive. We observe that expression hole closures are closely related to metavariable closures from contextual modal type theory~(CMTT)~\cite{Nanevski2008}, which, by its Curry-Howard correspondence with contextual modal logic, provides a logical basis for reasoning about and operating on hole closures. These connections to well-established systems (GTT and CMTT), together with mechanized proofs of the metatheory of \HazelnutLive, serve to support our main claim: that this approach to live programming is theoretically well-grounded.

There are many possible ways to present incomplete results and hole closure information to programmers. A secondary contribution of this paper is one  proof-of-concept user interface, which has been implemented into the \Hazel programming environment being developed by \citet{HazelnutSNAPL}. In particular, we describe \Hazel's novel live context inspector, which  combines static type information with hole closure information and interactively presents nested hole closures. We make no strong claims about this particular user interface; we evaluate it only with several suggestive example programming tasks where this interface presents information that would not otherwise be available and that we conjecture would be useful when teaching functional programming.

The editor component of \Hazel is organized around a language of structured edit actions, 
based on the \Hazelnut structure editor calculus developed by \citet{popl-paper}, that insert holes automatically to guarantee that
every editor state has some, possibly incomplete, type. 
The type safety invariant that we establish then guarantees that every editor state has dynamic meaning. Taken together, the result is an end-to-end solution to the gap problem, i.e. a proof-of-concept
live functional programming environment that continuously provides rich static and dynamic feedback.

\vspace{-2px}
\parahead{Paper Outline}

\newcommand{\contribution}[2]{\paragraph{#1. #2}}

We begin in Sec.~\ref{sec:examples} by detailing the approach informally, with several example programming tasks, in the setting of the \Hazel design. 

Sec.~\ref{sec:calculus} then abstracts away the inessential details of the language and user interface and makes the  intuitions developed in Sec.~\ref{sec:examples} formally precise by detailing the primary contribution of this paper: a core calculus, \HazelnutLive, that supports evaluating incomplete expressions and tracking hole closures. 
Sec.~\ref{sec:agda-mechanization} outlines our Agda-based mechanization of \HazelnutLive, which is included in the archived artifact and is also available from the following URL: 

\begin{center}
\url{https://github.com/hazelgrove/hazelnut-dynamics-agda}
\end{center}

\noindent
Sec.~\ref{sec:implementation} states the continuity invariant, which formally solves the gap problem, as a corollary of the primary theorems of \Hazelnut and \HazelnutLive. It also provides some additional details on the implementation of \Hazel. A snapshot of the implementation is included in the archived artifact. An online version of the ongoing implementation of \Hazel is available from \url{hazel.org}, and the source code is available from the following URL:

\begin{center}
\url{https://github.com/hazelgrove/hazel}
\end{center}

Sec.~\ref{sec:resumption} defines the fill-and-resume operation, which is rooted in the contextual substitution operation from CMTT. We establish the correctness of fill-and-resume with a commutativity theorem. We also discuss how the fill-and-resume operation allows us to semantically interpret the act of editing and evaluating cells in a REPL or Jupyter-like live lab notebook environment.

Sec.~\ref{sec:relatedWork} describes related work in detail and simultaneously discusses limitations and a number of directions for future work. Sec.~\ref{sec:discussion} briefly concludes. 

\ifarxiv
The appendix 
\else
The extended version of the paper, which is available in the ArXiV \cite{2018arXiv180500155O}, includes an appendix that 
\fi
(1) provides some straightforward auxiliary definitions and proofs that were omitted from the \ifarxiv original \fi paper  for the sake of space; and (2) defines some simple extensions to the core calculus (namely, numbers, and sum types), together with a brief discussion on defining other extensions (in part by by following the ``gradualization'' approach of \citet{DBLP:conf/popl/CiminiS16}). 
\newcommand{\examplesSec}{Live Programming in Hazel}
\section{\examplesSec} 
\label{sec:examples}


\begin{figure}[t]
\begin{subfigure}[t]{\textwidth}
\centering
\includegraphics[width=\textwidth,interpolate=false]{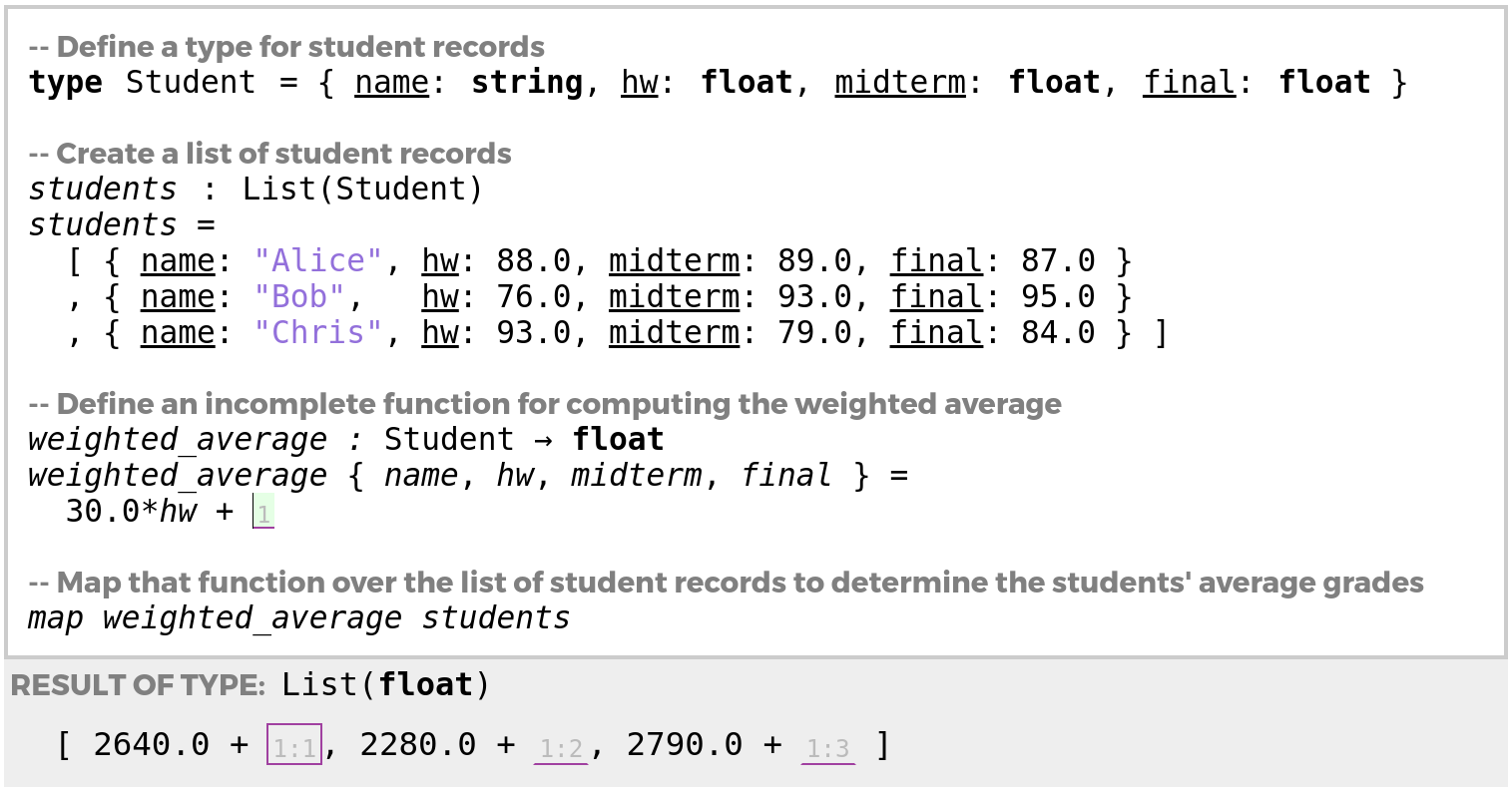}
\vspace{-10px}
\caption{Evaluating an incomplete functional program past the first hole}
\label{fig:grades-cell-mockup}
\end{subfigure}

\vspace{10px}

\begin{subfigure}[t]{\textwidth}
\centering
\includegraphics[width=0.31\textwidth,interpolate=false,valign=c]{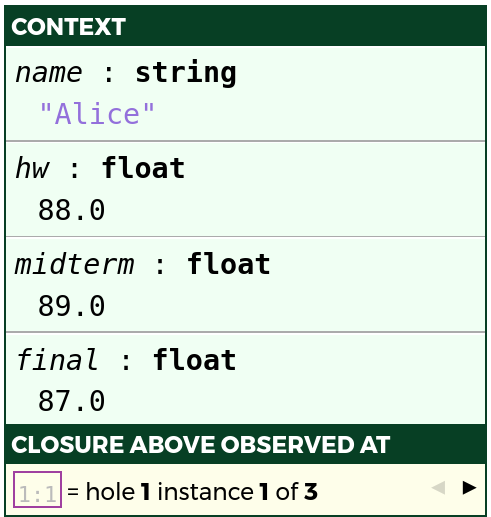}
~${}^\blacktriangleright$
\includegraphics[width=0.31\textwidth,interpolate=false,valign=c]{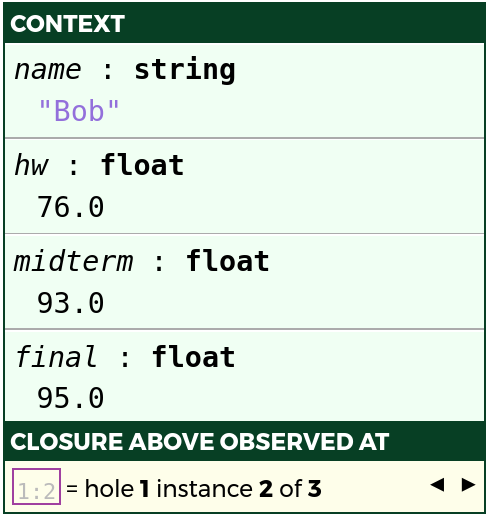}
~${}^\blacktriangleright$
\includegraphics[width=0.31\textwidth,interpolate=false,valign=c]{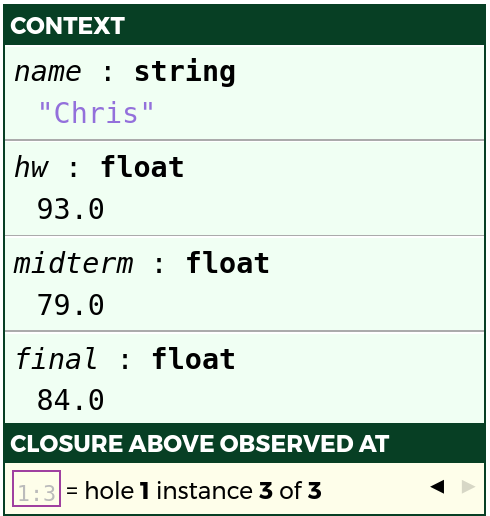}
\caption{The live context inspector communicates relevant static \emph{and} dynamic information about variables in scope.}
\label{fig:grades-sidebar}
\end{subfigure}


\vspace{3px}

\caption{Example 1: Grades}
\label{fig:grades-example}
\vspace{-5px}
\end{figure}

\newcommand{\overviewExample}[2]{\paragraph{Example {#1}: {#2}}}

Let us start with an example-driven overview of this paper's approach in \Hazel, a live programming environment being developed by \citet{HazelnutSNAPL}. The \Hazel user interface is based roughly on IPython/Jupyter \cite{PER-GRA:2007}, with a result appearing below each cell that contains an expression, and the \Hazel language is tracking toward feature parity with \Elm~(\url{elm-lang.org}) \cite{czaplicki2012elm,Elm}, a popular pure functional programming language similar to ``core ML'', with which we assume familiarity. \Hazel is intended initially for use by students and instructors in introductory functional programming courses (where \Elm~ has been successful \cite{DBLP:journals/corr/abs-1805-05125,zhang2018graphics}). 

For the sake of 
exposition, we have post-processed the screenshots in this section after generating them in \Hazel to make use of  ``syntactic and semantic sugar'' from \Elm~that was not available in \Hazel (which, as of this writing, implements little more than the language features described in Sec.~\ref{sec:calculus} and \ifarxiv Appendix \ref{sec:extensions}\else the appendix\fi). These conveniences are orthogonal to the contributions of this paper; all of the user interface features demonstrated in this section have been implemented and all of the computations can be expressed using standard ``encoding tricks''.


\subsection{Example 1: Evaluating Past Holes and Hole Closures}

Consider the perspective of a teacher in the midst of developing a \Hazel{} notebook
to compute final student grades at the end of a course.
Fig.~\ref{fig:grades-cell-mockup} depicts the cell containing the incomplete program that 
the teacher has written so far (we omit irrelevant parts of the UI). 

At the top of this program, the teacher defines a
record type, \lismall{Student}, for recording a student's course data---here,
the student's name, of type
\lismall{string}, and, for simplicity, three grades, each of type \lismall{float}. Next, the teacher constructs a list of student records, binding it to the variable \lismall{students}. For simplicity, we include only three example students.
At the bottom of the program, the teacher maps a function \li{weighted_average} over this student data (\li{map} is the standard map function over lists, not shown), intending to compute a final weighted average for each student.
However, the program is incomplete because the teacher has not yet completed the body of the \lismall{weighted_average} function. This pattern is quite common: programmers often consume a function before implementing it.

Thus far in the body of \lismall{weighted_average}, the teacher has decomposed the function argument into variables by record destructuring, then multiplied the homework grade, \li{hw}, by \li{30.0} and finally inserted the \li{+} operator. 
In a conventional ``batch'' programming system, writing \li{30.0*hw +} by itself would simply cause parsing to fail and there would be no static or dynamic feedback.
In response, the programmer might temporarily raise an exception. This would cause typechecking to succeed. 
However, 
evaluation would proceed only as far as the first \li{map} iteration, 
which would call into \li{weighted_average} and then fail when the exception is raised.%
\footnote{In a lazy language, like Haskell, the result would be much the same because the environment forces the result for printing.}
Hazel instead inserts an \emph{empty hole} at the cursor, as indicated by the vertical bar and the green background. Each hole has a unique name (generated automatically in \Hazel), here simply \li{1}. 

When running the program, \Hazel{} does not take an ``exceptional'' approach to holes. Instead, evaluation continues past the hole, treating it as an opaque expression of the appropriate type. The result, shown at the bottom of Fig.~\ref{fig:grades-cell-mockup}, is a list of length $3$, confirming that \li{map} does indeed behave as expected in this regard despite the teacher having provided an incomplete argument. Furthermore, each element of the resulting list has been evaluated as far as possible, i.e. the arithmetic expression \li{30.0*hw} has been evaluated for each corresponding value of \li{hw}. Evaluation cannot proceed any further because holes appear as addends. We say that each of these addition expressions is an \emph{indeterminate} sub-expression, and the result as a whole is therefore also indeterminate, because it is not yet a value, nor can it take a step due to holes in elimination positions.

At this point, the teacher might take notice of the magnitude of the numbers being computed, e.g. \li{2640.0} and \li{2280.0}, and
realize immediately that a mistake has been made: the teacher wants to compute a 
weighted average between \li{0.0} and \li{100.0}, and so the correct
constant is \li{0.30}, not \li{30.0}. 

Although these observations might 
save only a small amount of time in this case, 
it demonstrates
the broader motivations of live programming: continuous feedback about the dynamic behavior of the 
program can help confirm the mental model 
that the programmer has developed (in this case, regarding the behavior of \li{map}), and also help quickly dispel 
misconceptions about the 
actual behavior of the program (in this case, the magnitude of the arithmetic expressions being computed).

Going further, \Hazel helps programmers reason about the dynamic behavior of expressions bound to variables in scope at a hole via the \emph{live context inspector}, normally displayed  as a sidebar but shown disembodied in three states in Fig.~\ref{fig:grades-sidebar}. 
In all three states, the live context inspector displays the names and types of the variables that are in scope. 
New to our approach are the values associated with bindings, which come from the environment associated with the selected hole instance in the result. 
We call a hole instance paired with an  environment a \emph{hole closure}, by analogy to function closures (and also due to the logical connection detailed in Sec.~\ref{sec:calculus}). 
In this case, there are three instances of hole \li{1} in the result, numbered sequentially \li{1:1}, \li{1:2} and \li{1:3}, arising from the three calls to \li{weighted_averages} by \li{map}. 
The programmer can directly select different closures by clicking on a hole instance in the result (the first instance, \li{1:1}, is selected by default, as indicated by the purple outline in Fig.~\ref{fig:grades-cell-mockup}), or cycle through all of the closures for the hole at the cursor by clicking the arrows at the bottom of the context inspector, e.g. $\blacktriangleright$, or using the corresponding keyboard shortcuts.

\vspace{-4px}
\subsection{Example 2: Recursive Functions}\label{sec:qsort1}\label{sec:paths}
\vspace{-2px}


\begin{figure}[t]
\begin{subfigure}[t]{0.70\textwidth}
\centering
\includegraphics[width=\textwidth,interpolate=false,valign=t]{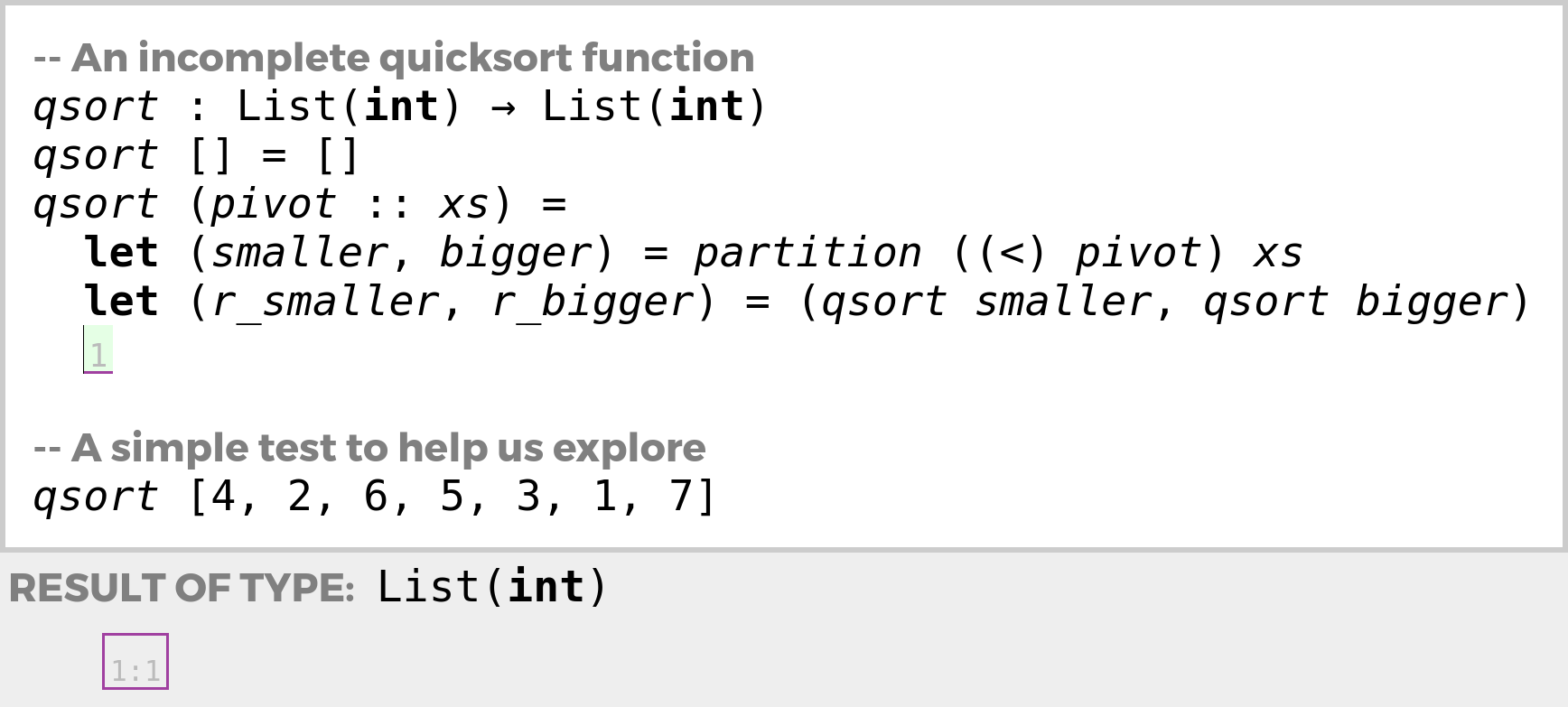}
\vspace{-3px}
\caption{The result of evaluation is a hole closure.}
\label{fig:qsort-example-code}
\end{subfigure}
~
\begin{subfigure}[t]{0.29\textwidth}
\centering
\includegraphics[width=\textwidth,interpolate=false,valign=t]{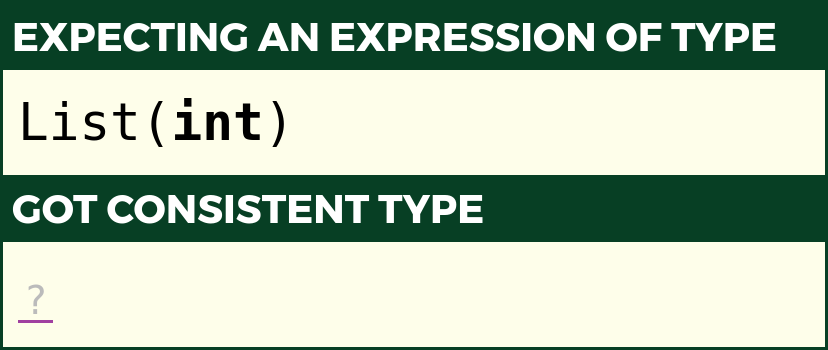}
\vspace{-3px}
\caption{The type inspector provides static feedback about the term at the cursor. Currently, the hole should be filled by a list expression. Holes have the hole (i.e. unknown) type, which  is universally consistent (see Sec.~\ref{sec:calculus} and \cite{popl-paper}).
}
\label{fig:qsort-type-inspector}
\end{subfigure}

\vspace{8px}

\begin{subfigure}[t]{\textwidth}
\centering
\includegraphics[width=0.29\textwidth,interpolate=false,valign=c]{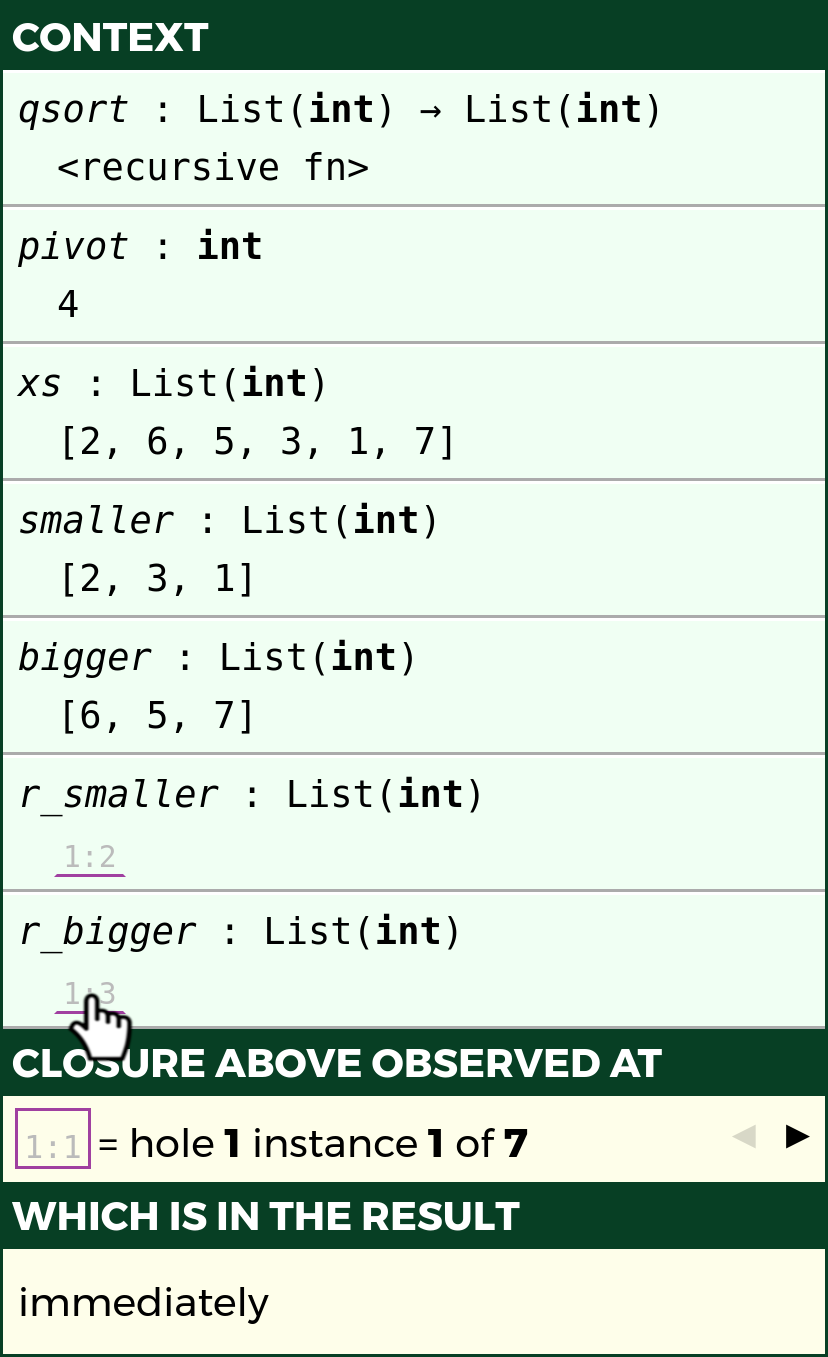}
~$\xrightarrow[\text{click}]{}$
\includegraphics[width=0.29\textwidth,interpolate=false,valign=c]{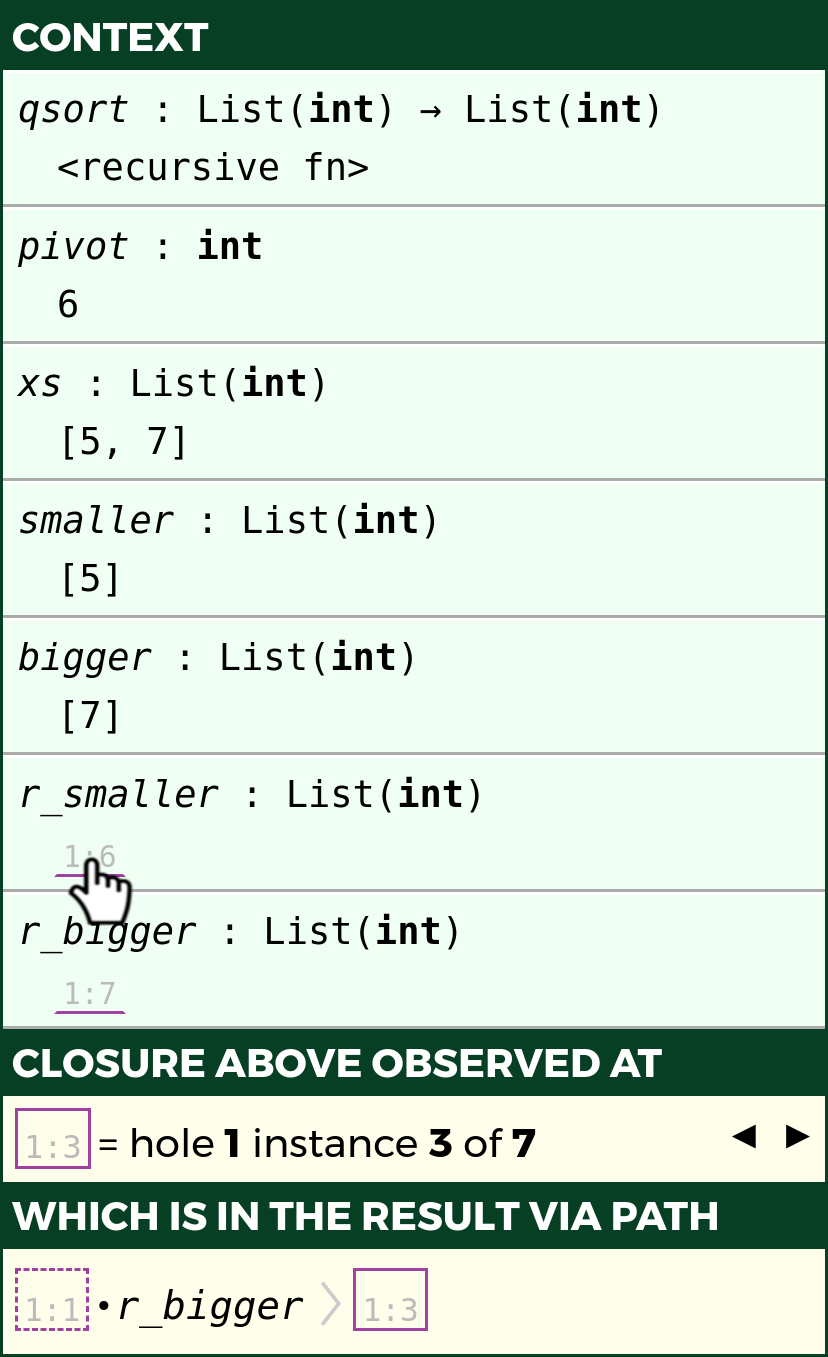}
~$\xrightarrow[\text{click}]{}$
\includegraphics[width=0.29\textwidth,interpolate=false,valign=c]{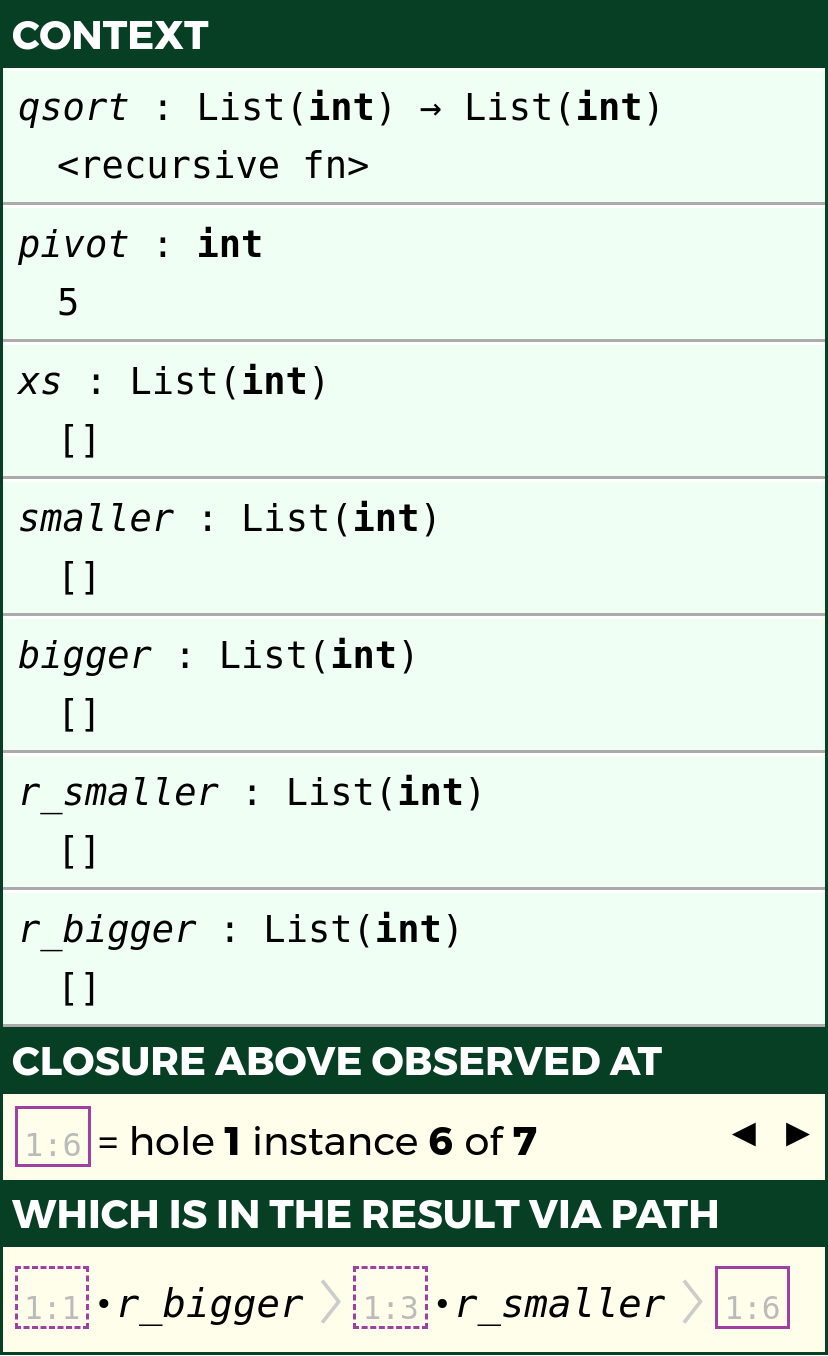}
\caption{The programmer can explore the recursive structure of the computation by clicking on hole instances.}
\label{fig:qsort-sidebars}
\end{subfigure}

\vspace{3px}

\caption{Example 2: Incomplete Quicksort}
\label{fig:qsort-cell-mockup}

\end{figure}




Let us now consider a second more sophisticated example: an incomplete implementation of the recursive quicksort function, shown in Fig.~\ref{fig:qsort-example-code}. So far, the programmer (perhaps a student, or a lecturer using \Hazel as a presentational aid) has filled in the base case, and in the recursive case, partitioned the remainder of the list relative to the head, and made the two recursive calls. A hole appears in return position as the programmer contemplates how to fill the hole with an appropriate expression of list type, as indicated by the \emph{type inspector} in Fig.~\ref{fig:qsort-type-inspector}.

At the bottom of the cell in Fig.~\ref{fig:qsort-example-code}, the programmer has applied \li{qsort} to an example list. However, 
the indeterminate result of this function application is simply an instance of hole \li{1}, which serves only to confirm that evaluation went through the recursive case of \li{qsort}. 
More interesting is the live context inspector, shown in three states in Fig.~\ref{fig:qsort-sidebars}, which provides feedback about the values of the variables in scope at hole \li{1} from the the various instances of hole \li{1} that appear in the result, either immediately or within an outer closure. For example, in its initial state (Fig.~\ref{fig:qsort-sidebars}, left) it shows the closure at the instance of hole \li{1} that appears immediately in the result due to the outermost application of \li{qsort}. From this, the programmer can confirm (or the lecturer can visually point out) that 
the lists \li{smaller} and \li{bigger} computed by the call to \li{partition} are appropriately named, and observe that they are not yet themselves sorted.

The results from the subsequent recursive calls, \li{r_smaller} and \li{r_bigger}, are again hole instances, \li{1:2} and \li{1:3}. 
The programmer can click on either of these hole instances to reveal the associated closures from the corresponding recursive calls. 
For example, clicking on \li{1:3} reveals the hole closure from the \li{r_bigger} recursive call as shown in Fig.~\ref{fig:qsort-sidebars} (middle). 
From there, the programmer can click another hole closure, e.g. \li{1:6} to reveal the hole closure from the subsequent \li{r_smaller} recursive call as shown in Fig.~\ref{fig:qsort-sidebars} (right). 
Notice in each case that the path from the result to the selected hole closure is reported as shown at the bottom of the context inspector in Fig.~\ref{fig:qsort-sidebars}. 
In exploring these paths rooted at the result, the programmer can develop concrete intuitions about the recursive structure of the computation (e.g. by following through to the base case as shown in Fig.~\ref{fig:qsort-sidebars}) even before the program is complete. 

\subsection{Example 3: Live Programming with Static Type Errors}
\label{sec:static-errors}

\begin{figure}
\centering
\includegraphics[width=\textwidth,interpolate=false,valign=t]{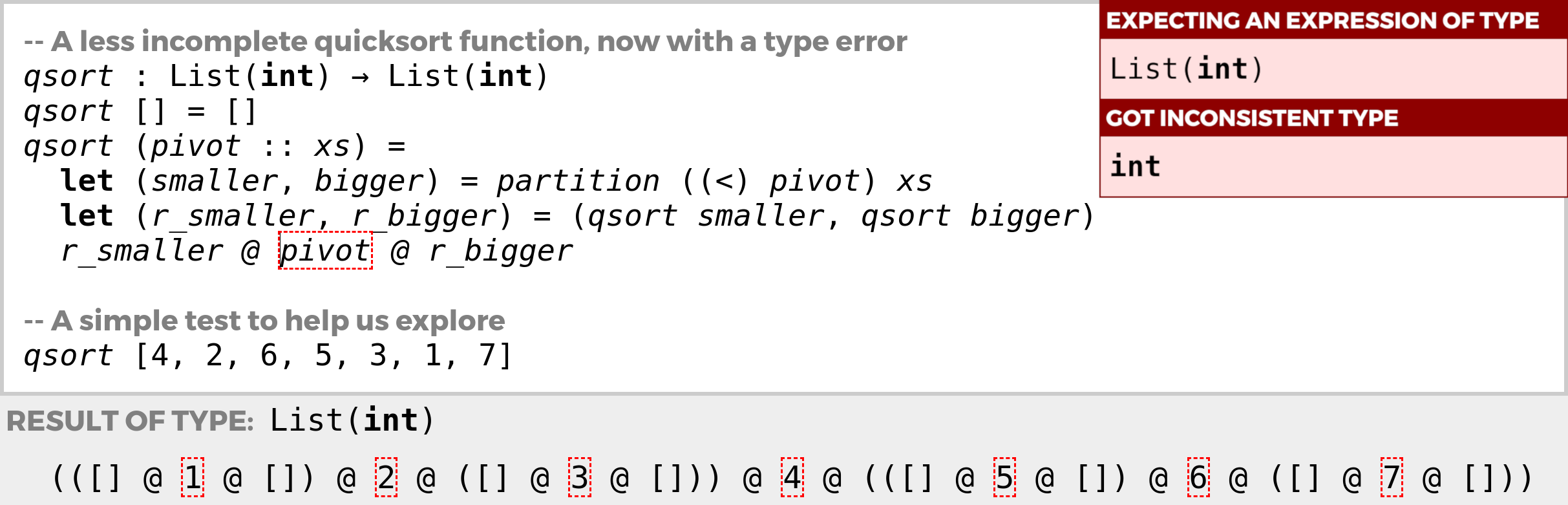}
\vspace{2px}
\caption{Example 3: Ill-Typed Quicksort}
\label{fig:qsort-type-error}
\vspace{-3px}
\end{figure}

The previous examples were incomplete 
because of \emph{missing} expressions.
Now, we discuss programs that are incomplete, 
and therefore conventionally meaningless, because of
\emph{type inconsistencies}. 
Let us return to the quicksort example just described, 
but assume that the programmer has filled in the previous hole
as shown in Fig.~\ref{fig:qsort-type-error}. In Sec.~\ref{sec:resumption}, we discuss
how the programming environment could avoid restarting evaluation after such edits by using the values stored in the hole closures.

The programmer appears to be on the right track conceptually
in recognizing that the pivot needs to appear between the 
smaller and bigger elements. 
However, the types do not quite work out: the \li{@} operator here
performs list concatenation, but the pivot is an integer. 
Most compilers and editors will report a static error message
to the programmer in this case, and \Hazel 
follows suit in the type inspector (shown inset in Fig.~\ref{fig:qsort-type-error}). 
However, our argument is that the presence of a static type error should not cause all feedback about 
the dynamic behavior of the program to ``flicker out'' or ``go stale'' --
after all, there are perfectly meaningful parts of the program (both nearby
and far away from the error) 
whose dynamic behavior may be of interest. Concrete results can also help the programmer understand the implications of the type error \cite{Seidel2016}.

Our approach, following the prior work of \citet{popl-paper}, 
is to semantically internalize the ``red outline'' around
a type-inconsistent expression as a \emph{non-empty hole} around that expression.
Evaluation safely proceeds past a non-empty hole just as if it were an empty hole.
The semantics also associates an environment with each instance of a non-empty hole,
so we can use the live context inspector essentially as in Fig.~\ref{fig:qsort-sidebars} (not shown). 
Evaluation proceeds inside the hole, so that 
feedback about the type-inconsistent expression, which might ``almost'' be correct, is available. 
In this case, the result at the bottom of Fig.~\ref{fig:qsort-type-error}
reveals that the programmer is on the right track: the list elements 
appear in the correct order.
They simply have not been combined correctly.

We are treating \li{@} as a primitive operator in this example. If it were defined as a function, then
it would be possible to further reduce the example by proceeding into the function body. This is likely unhelpful for
functions other than those that programmer is actively working on. Our semantics \emph{allows} beta reduction when the argument is a hole, but it does not \emph{require} it.

Although \Hazel inserts non-empty holes automatically, the earlier work on \Hazelnut allowed the programmer to explicitly insert non-empty holes. It may be that these are useful even when there is not a type inconsistency because the non-empty hole defers elimination of the enclosed expression and causes hole closure information to be tracked. For example, if we repaired the program in Fig.~\ref{fig:qsort-type-error} by replacing the erroneous variable \li{pivot} with \li{[pivot]} and then inserted a non-empty hole around this expression, the result would show all of the list concatenation operations that would be performed without actually performing them, producing a result much like that in Fig.~\ref{fig:qsort-type-error}. This provides another way to explore the recursive structure of the \li{qsort} function.

\subsection{Example 4: Type Holes and Dynamic Type Errors}
\label{sec:dynamic-type-errors}

In \Hazel, the program can also be incomplete because holes appear in types. 
\citet{popl-paper} confirmed that the literature on \emph{gradual type systems} \cite{Siek06a,DBLP:conf/snapl/SiekVCB15} is directly relevant to the problem of reasoning with type holes, by identifying the type hole with the unknown type. 
Gradual type systems can run programs that are not yet sufficiently annotated with types by inserting \emph{casts}  where necessary. We take the same well-studied approach in \Hazel. As such, let us consider only a small synthetic example to demonstrate what is unique to our approach.

Fig.~\ref{fig:cast-errors} defines a simple function, \li{f}, of two arguments. 
The type annotation on the first line leaves the type of those arguments unknown. 
As such, the \Hazel type system, following the gradual typing approach,
allows the body of the function to use those two
arguments at any type (that is, the hole type is universally consistent). 
Here, the first argument, \li{simple}, is used at one type, \li{bool}, 
and the second argument, \li{x}, is used at two different types in the two branches (perhaps because the programmer made a mistake), 
 \li{int} and  \li{string}  ( \li{^} is string concatenation).
Although \Hazel supports only local type inference as of this writing, 
a system that uses ML-style type reconstruction to fill type holes statically, like GHC Haskell, would only be able to fill the first hole. Leaving the second hole unfilled is 
a parsimonious alternative to arbitrarily or heuristically choosing one of the possibilities and marking the
other uses of \li{x} as ill-typed (see \cite{DBLP:journals/jfp/ChenE18}).

\begin{figure}
\centering
\includegraphics[width=0.75\textwidth,interpolate=false,valign=t]{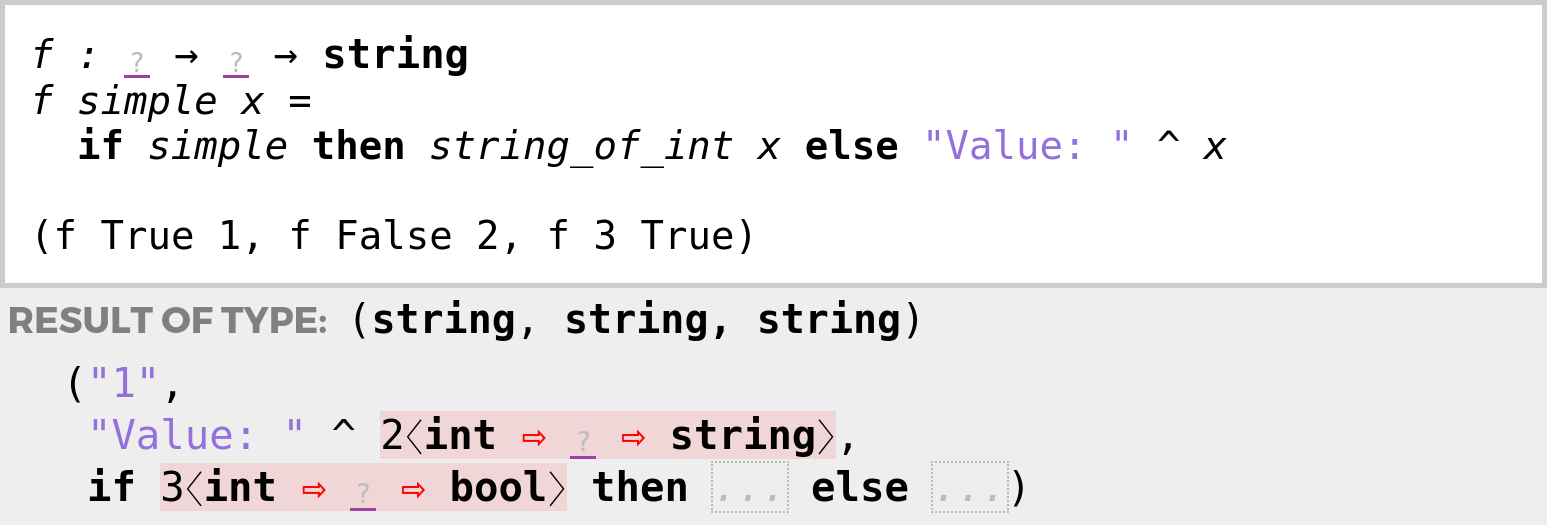}
\caption{Example 4: Type Holes and Dynamic Type Errors}
\label{fig:cast-errors}
\vspace{-6px}
\end{figure}

At the bottom of the cell in Fig.~\ref{fig:cast-errors}, we have three 
example applications of \li{f}, tupled together for concision. All three are statically
well-typed, again because the hole type is universally consistent. 
The result at the bottom of Fig.~\ref{fig:cast-errors} demonstrates that the first application
of \li{f} is dynamically unproblematic. This allows the programmer to confirm that 
the first branch operates as intended without the need to 
address the typing problems in the other branch \cite{Bayne:2011:ASD:1985793.1985864}. 

The second application of \li{f}, in contrast, causes a dynamic type error because the second argument, \li{2}, is an \li{int} but evaluation takes the branch where it is used as a \li{string}. 
Rather than aborting evaluation when this occurs, as in existing gradual type systems, the problematic term becomes a \emph{failed cast} term, shown shaded in red, which can be read ``\li{2} is an \li{int} that was used through a variable of hole type (\li{?}) as a \li{string}''. 
A failed cast acts much like a non-empty hole as a membrane around a problematic term. The surrounding concatenation operation becomes indeterminate, but  
evaluation can continue on to the third application of \li{f}, which is also problematic, this time because the first argument is not a \li{bool} (perhaps because the programmer had an incorrect understanding of the argument order). 
Again, this causes a failed cast to appear, this time in guard position. Like a hole in guard position, evaluation cannot determine which branch to take so the whole conditional becomes indeterminate. 
The pretty printer
hides the two branches behind ellipses for concision.

\ifarxiv

\else
In this small example, it might have only been a small burden for the programmer to provide the intended types in the signature for \li{f}, but there are situations (e.g. during rapid prototyping or a live performance) where the programmer might consider the burden more substantial. This approach ensures that dynamic feedback does not exhibit gaps even when there is a dynamic type error.
\fi

\newcommand{\calculusSec}{Hazelnut Live}
\section{\calculusSec}
\label{sec:calculus}

\begin{figure}[t]
$\arraycolsep=4pt\begin{array}{rllllll}
\mathsf{HTyp} & \htau & ::= &
  b ~\vert~
  \tarr{\htau}{\htau} ~\vert~
  \tehole\\
\mathsf{HExp} & \hexp & ::= &
  c ~\vert~
  x ~\vert~
  \halam{x}{\htau}{\hexp} ~\vert~
      {\hlam{x}{\hexp}} ~\vert~
  \hap{\hexp}{\hexp} ~\vert~
  \hehole{u} ~\vert~
  \hhole{\hexp}{u} ~\vert~
  \hexp : \htau\\
 \mathsf{IHExp} & \dexp  & ::= &
  c ~\vert~
  x ~\vert~
  {\halam{x}{\htau}{\dexp}} ~\vert~
  \hap{\dexp}{\dexp} ~\vert~
  \dehole{\mvar}{\subst}{} ~\vert~
  \dhole{\dexp}{\mvar}{\subst}{} ~\vert~
  \dcasttwo{\dexp}{\htau}{\htau} ~\vert~
  \dcastfail{\dexp}{\htau}{\htau}\\
\end{array}$
$$
\dcastthree{\dexp}{\htau_1}{\htau_2}{\htau_3} \defeq
  \dcasttwo{\dcasttwo{\dexp}{\htau_1}{\htau_2}}{\htau_2}{\htau_3}
$$
\vspace{-12px}
\CaptionLabel{Syntax of types, $\htau$, external expressions, $\hexp$, and internal expressions, $\dexp$.
We write $x$ to range over variables,
$u$ over hole names, and
$\sigma$ over finite substitutions (i.e., environments) 
which map variables to internal expressions, written $d_1/x_1, ~\cdots, d_n/x_n$ for $n \geq 0$.}{fig:hazelnut-live-syntax}
\label{fig:HTyp}
\label{fig:HExp}
\end{figure}

We will now make the intuitions developed in the previous section formally
precise by specifying a core calculus, which we call \HazelnutLive, and
characterizing its metatheory.

\noindent
\parahead{Overview} The syntax of the core calculus given in
Fig.~\ref{fig:hazelnut-live-syntax} consists of types and expressions with
holes.  We distinguish between {external} expressions, $e$, and {internal}
expressions, $d$.  External expressions correspond to programs as entered
by the programmer (see Sec.~\ref{sec:intro} for discussion of implicit,
manual, semi-automated and fully automated hole entry methods).  Each
well-typed external expression (see Sec.~\ref{sec:external-statics} below)
elaborates to a well-typed internal expression (see
Sec.~\ref{sec:elaboration}) before it is evaluated (see
Sec.~\ref{sec:evaluation}).  We take this approach, notably also taken in
the ``redefinition'' of Standard ML by \citet{Harper00atype-theoretic},
because (1) the external language supports type inference and explicit type
ascriptions, $\hexp : \htau$, but it is formally simpler to eliminate
ascriptions and specify a type assignment system when defining the dynamic
semantics; and (2) we need additional syntactic machinery during evaluation
for tracking hole closures and dynamic type casts.  This machinery is
inserted by elaboration, rather than entered explicitly by the programmer.
In this regard, the internal language is analogous to the cast calculus in
the gradually typed lambda calculus
\cite{DBLP:conf/snapl/SiekVCB15,Siek06a}, though as we will see the
\HazelnutLive internal language goes beyond the cast calculus in several
respects. We have mechanized these formal developments using the Agda proof
assistant \cite{norell:thesis,norell2009dependently} (see
Sec.~\ref{sec:agda-mechanization}). Rule names in this section,
e.g. \rulename{SVar}, correspond to variables from the
mechanization. The \Hazel implementation substantially follows the formal
specification of \Hazelnut (for the editor) and \HazelnutLive
(for the evaluator). We can formally state a continuity invariant for a
putative combined calculus (see Sec.~\ref{sec:implementation}).


\subsection{Static Semantics of the External Language}
\label{sec:external-statics}


\begin{figure}[t]
\judgbox{\hsyn{\hGamma}{\hexp}{\htau}}{$\hexp$ synthesizes type $\htau$}
\vspace{-10px}
\begin{mathpar}
\inferrule[SConst]{ }{
  \hsyn{\hGamma}{c}{b}
}

\inferrule[SVar]{
  x : \htau \in \hGamma
}{
  \hsyn{\hGamma}{x}{\htau}
}

\inferrule[SLam]{
  \hsyn{\hGamma, x : \htau_1}{\hexp}{\htau_2}
}{
  \hsyn{\hGamma}{\halam{x}{\htau_1}{\hexp}}{\tarr{\htau_1}{\htau_2}}
}

\inferrule[SAp]{
    \hsyn{\hGamma}{\hexp_1}{\htau_1}    \\
    \arrmatch{\htau_1}{\tarr{\htau_2}{\htau}}\\\\
        \hana{\hGamma}{\hexp_2}{\htau_2}
}{
  \hsyn{\hGamma}{\hap{\hexp_1}{\hexp_2}}{\htau}
}

\inferrule[SEHole]{ }{
  \hsyn{\hGamma}{\hehole{u}}{\tehole}
}

\inferrule[SNEHole]{
  \hsyn{\hGamma}{\hexp}{\htau}
}{
  \hsyn{\hGamma}{\hhole{\hexp}{u}}{\tehole}
}

\inferrule[SAsc]{
  \hana{\hGamma}{\hexp}{\htau}
}{
  \hsyn{\hGamma}{\hexp : \htau}{\htau}
}
\end{mathpar}

\vsepRule

\judgbox{\hana{\hGamma}{\hexp}{\htau}}{$\hexp$ analyzes against type $\htau$}
\vspace{-4px}
\begin{mathpar}
\inferrule[ALam]{
  \arrmatch{\htau}{\tarr{\htau_1}{\htau_2}}\\
  \hana{\hGamma, x : \htau_1}{\hexp}{\htau_2}
}{
  \hana{\hGamma}{\hlam{x}{\hexp}}{\htau}
}

\inferrule[ASubsume]{
  \hsyn{\hGamma}{\hexp}{\htau}\\
  \tconsistent{\htau}{\htau'}
}{
  \hana{\hGamma}{\hexp}{\htau'}
}
\end{mathpar}
\vspace{-2px}
\CaptionLabel{Bidirectional Typing of External Expressions}{fig:bidirectional-typing}
\vspace{-2px}
\end{figure}

We start with the type system of the \HazelnutLive external language,
which closely follows the \Hazelnut type system \cite{popl-paper}; we summarize the minor differences as they come up.


\Figref{fig:bidirectional-typing} defines the type system in the \emph{bidirectional} style
with two mutually defined judgements \cite{Pierce:2000ve,bidi-tutorial,DBLP:conf/icfp/DunfieldK13,Chlipala:2005da}. The type synthesis
judgement~$\hsyn{\hGamma}{\hexp}{\htau}$ synthesizes a type~$\htau$
for external expression~$\hexp$ under typing context $\hGamma$, which tracks typing
assumptions of the form $x : \htau$ in the usual
manner \cite{pfpl,tapl}.
The type analysis judgement~$\hana{\hGamma}{\hexp}{\htau}$ checks
expression~$\hexp$ against a given type~$\htau$.
Algorithmically, analysis accepts a type as input, and synthesis gives
a type as output.
We start with synthesis for the programmer's top level external
expression.


The primary benefit of specifying the \HazelnutLive external language
bidirectionally is that the programmer need not annotate each hole with a type.
An empty hole is
written simply $\hehole{u}$, where $u$ is the hole name, which we tacitly assume is unique
(holes in \Hazelnut were not named).
Rule \rulename{SEHole} specifies that an empty hole synthesizes hole type, written $\tehole$.
If an empty hole appears where an expression of some other type is
expected, e.g. under an explicit ascription (governed by Rule \rulename{SAsc})
or in the argument position of a function application (governed by
Rule \rulename{SAp}, discussed below), we apply the \emph{subsumption rule},
Rule \rulename{ASubsume}, which specifies that if an expression $e$ synthesizes
type $\htau$, then it may be checked against any \emph{consistent}
type, $\htau'$.


\begin{figure}[t]
\judgbox{\tconsistent{\htau_1}{\htau_2}}{$\htau_1$ is consistent with $\htau_2$}
\vspace{-5px}
\begin{mathpar}
\inferrule[TCHole1]{ }{
  \tconsistent{\tehole}{\htau}
}

\inferrule[TCHole2]{ }{
  \tconsistent{\htau}{\tehole}
}

\inferrule[TCRefl]{ }{
  \tconsistent{\htau}{\htau}
}

\inferrule[TCArr]{
  \tconsistent{\htau_1}{\htau_1'}\\
  \tconsistent{\htau_2}{\htau_2'}
}{
  \tconsistent{\tarr{\htau_1}{\htau_2}}{\tarr{\htau_1'}{\htau_2'}}
}
%
%
\end{mathpar}






\vsepRule

\judgbox{\arrmatch{\htau}{\tarr{\htau_1}{\htau_2}}}{$\htau$ has matched arrow type $\tarr{\htau_1}{\htau_2}$}
\vspace{-4px}
\begin{mathpar}
\inferrule[MAHole]{ }{
  \arrmatch{\tehole}{\tarr{\tehole}{\tehole}}
}

\inferrule[MAArr]{ }{
  \arrmatch{\tarr{\htau_1}{\htau_2}}{\tarr{\htau_1}{\htau_2}}
}
\end{mathpar}




\CaptionLabel{Type Consistency and Matching}{fig:tconsistent}
\label{fig:arrmatch}
\vspace{-4px}
\end{figure}

Fig.~\ref{fig:tconsistent} specifies the type consistency relation, written $\tconsistent{\htau}{\htau'}$, which specifies that two types are consistent if they differ only up to type holes in corresponding positions.
The hole type is consistent with every type, and so, by the subsumption rule, expression holes may appear where an expression of any type is expected. The type consistency relation here coincides with the type consistency relation from gradual type theory by identifying the hole type with the unknown type~\cite{Siek06a}.
Type consistency is reflexive and symmetric, but it is \emph{not} transitive.
This stands in contrast to subtyping, which is anti-symmetric and transitive; subtyping may be integrated into a gradual type system following \citet{Siek:2007qy}.

Non-empty expression holes, written $\hhole{\hexp}{u}$, behave similarly to empty holes.
Rule \rulename{SNEHole} specifies that a non-empty expression hole also synthesizes hole type as long as the expression inside the hole, $\hexp$, synthesizes some (arbitrary) type.
Non-empty expression holes therefore internalize the ``red underline/outline'' that many editors display around type inconsistencies in a program.

For the familiar forms of the lambda calculus, the rules again follow prior work.
For simplicity, the core calculus includes only a single base type~$b$ with a single constant~$c$, governed by Rule \rulename{SConst} (i.e. $b$ is the unit type).
%
By contrast, \citet{popl-paper} instead defined a number type with a single operation. That paper also defined sum types as an extension to the core calculus. We follow suit on both counts in \ifarxiv Appendix \ref{sec:extensions}\else the \appendixName\fi.

Rule \rulename{SVar} synthesizes the corresponding type from $\hGamma$.
For the sake of exposition, \HazelnutLive includes ``half-annotated'' lambdas, $\halam{x}{\htau}{\hexp}$, in addition to the unannotated lambdas, $\hlam{x}{\hexp}$, from \Hazelnut.
Half-annotated lambdas may appear in synthetic position according to Rule \rulename{SLam}, which is standard \cite{Chlipala:2005da}.
Unannotated lambdas may only appear where the expected type is known to be either an arrow type or the hole type, which is treated as if it were $\tarr{\tehole}{\tehole}$.\footnote{A system supporting ML-style type reconstruction \cite{damas1982principal} might also include a synthetic rule for unannotated lambdas, e.g. as outlined by \citet{DBLP:conf/icfp/DunfieldK13}, but we stick to this simpler ``Scala-style'' local type inference scheme in this paper \cite{Pierce:2000ve,Odersky:2001lb}.} 
To avoid the need for separate rules for these two cases, Rule \rulename{ALam} uses the matching relation $\arrmatch{\htau}{\tarr{\htau_1}{\htau_2}}$ defined in \Figref{fig:arrmatch}, which produces the matched arrow type $\tarr{\tehole}{\tehole}$ given the hole type, and operates as the identity on arrow types \cite{DBLP:conf/snapl/SiekVCB15,DBLP:conf/popl/GarciaC15}. The rule governing function application, Rule \rulename{SAp}, similarly treats an expression of hole type in function position as if it were of type $\tarr{\tehole}{\tehole}$ using the same matching relation.
%
%
%
%

We do not formally need an explicit fixpoint operator because this calculus supports general recursion due to type holes, e.g. we can express the Y combinator as $(\halam{x}{\tehole}{x(x)}) (\halam{x}{\tehole}{x(x)})$. More generally, the untyped lambda calculus can be embedded as described by \citet{Siek06a}.

\vspace{-4px}
\subsection{Elaboration}
\label{sec:elaboration}
\vspace{-1px}


\begin{figure}[p]
\judgbox
  {\elabSyn{\hGamma}{\hexp}{\htau}{\dexp}{\Delta}}
  {$\hexp$ synthesizes type $\htau$ and elaborates to $\dexp$}
\begin{mathpar}
\inferrule[ESConst]{ }{
  \elabSyn{\hGamma}{c}{b}{c}{\emptyset}
}
\and
\inferrule[ESVar]{
  x : \htau \in \hGamma
}{
  \elabSyn{\hGamma}{x}{\htau}{x}{\emptyset}
}
\and
\inferrule[ESLam]{
  \elabSyn{\hGamma, x : \htau_1}{\hexp}{\htau_2}{\dexp}{\Delta}
}{
  \elabSyn{\hGamma}{\halam{x}{\htau_1}{\hexp}}{\tarr{\htau_1}{\htau_2}}{\halam{x}{\htau_1}{\dexp}}{\Delta}
}
\and
\inferrule[ESAp]{
  \hsyn{\hGamma}{\hexp_1}{\htau_1}\\
  \arrmatch{\htau_1}{\tarr{\htau_2}{\htau}}
  \\\\
  \elabAna{\hGamma}{\hexp_1}{\tarr{\htau_2}{\htau}}{\dexp_1}{\htau_1'}{\Delta_1}\\
  \elabAna{\hGamma}{\hexp_2}{\htau_2}{\dexp_2}{\htau_2'}{\Delta_2}
}{
  \elabSyn
    {\hGamma}
    {\hap{\hexp_1}{\hexp_2}}
    {\htau}
    {\hap{(\dcasttwo{\dexp_1}{\htau_1'}{\tarr{\htau_2}{\htau}})}
         {\dcasttwo{\dexp_2}{\htau_2'}{\htau_2}}}
    {\Dunion{\Delta_1}{\Delta_2}}
}
%
%
%
%
%
%
%
\and
\inferrule[ESEHole]{ }{
  \elabSyn{\hGamma}{\hehole{u}}{\tehole}{\dehole{u}{\idof{\hGamma}}{}}{\Dbinding{u}{\hGamma}{\tehole}}
}
\and
\inferrule[ESNEHole]{
  \elabSyn{\hGamma}{\hexp}{\htau}{\dexp}{\Delta}
}{
  \elabSyn{\hGamma}{\hhole{\hexp}{u}}{\tehole}{\dhole{\dexp}{u}{\idof{\hGamma}}{}}{\Delta, \Dbinding{u}{\hGamma}{\tehole}}
}
\and
\inferrule[ESAsc]{
  \elabAna{\hGamma}{\hexp}{\htau}{\dexp}{\htau'}{\Delta}
}{
  \elabSyn{\hGamma}{\hexp : \htau}{\htau}{\dcasttwo{\dexp}{\htau'}{\htau}}{\Delta}
}
\end{mathpar}

\vsepRule

\judgbox
  {\elabAna{\hGamma}{\hexp}{\htau_1}{\dexp}{\htau_2}{\Delta}}
  {$\hexp$ analyzes against type $\htau_1$ and
   elaborates to $\dexp$ of consistent type $\htau_2$}
\begin{mathpar}
\inferrule[EALam]{
  \arrmatch{\htau}{\tarr{\htau_1}{\htau_2}}\\
  \elabAna{\hGamma, x : \htau_1}{\hexp}{\htau_2}{\dexp}{\htau'_2}{\Delta}
}{
  \elabAna{\hGamma}{\hlam{x}{\hexp}}{\htau}{\halam{x}{\htau_1}{\dexp}}{\tarr{\htau_1}{\htau_2'}}{\Delta}
}

\inferrule[EASubsume]{
  \hexp \neq \hehole{u}\\
  \hexp \neq \hhole{\hexp'}{u}\\\\
  \elabSyn{\hGamma}{\hexp}{\htau'}{\dexp}{\Delta}\\
  \tconsistent{\htau}{\htau'}
}{
  \elabAna{\hGamma}{\hexp}{\htau}{\dexp}{\htau'}{\Delta}
}

\inferrule[EAEHole]{ }{
  \elabAna{\hGamma}{\hehole{u}}{\htau}{\dehole{u}{\idof{\hGamma}}{}}{\htau}{\Dbinding{u}{\hGamma}{\htau}}
}

\inferrule[EANEHole]{
  \elabSyn{\hGamma}{\hexp}{\htau'}{\dexp}{\Delta}\\
}{
  \elabAna{\hGamma}{\hhole{\hexp}{u}}{\htau}{\dhole{\dexp}{u}{\idof{\hGamma}}{}}{\htau}{\Delta, \Dbinding{u}{\hGamma}{\htau}}
}
\end{mathpar}
\CaptionLabel{Elaboration}{fig:elaboration}
\label{fig:expandSyn}
\label{fig:expandAna}
\end{figure}


\begin{figure}[p]
\judgbox{\hasType{\Delta}{\hGamma}{\dexp}{\htau}}{$\dexp$ is assigned type $\htau$}
\begin{mathpar}
\inferrule[TAConst]{ }{
  \hasType{\Delta}{\hGamma}{c}{b}
}

\inferrule[TAVar]{
  x : \htau \in \hGamma
}{
	\hasType{\Delta}{\hGamma}{x}{\htau}
}

\inferrule[TALam]{
  \hasType{\Delta}{\hGamma, x : \htau_1}{\dexp}{\htau_2}
}{
  \hasType{\Delta}{\hGamma}{\halam{x}{\htau_1}{\dexp}}{\tarr{\htau_1}{\htau_2}}
}

\inferrule[TAAp]{
  \hasType{\Delta}{\hGamma}{\dexp_1}{\tarr{\htau_2}{\htau}}\\
  \hasType{\Delta}{\hGamma}{\dexp_2}{\htau_2}
}{
  \hasType{\Delta}{\hGamma}{\hap{\dexp_1}{\dexp_2}}{\htau}
}

\inferrule[TAEHole]{
  \Dbinding{u}{\hGamma'}{\htau} \in \Delta\\
  \hasType{\Delta}{\hGamma}{\sigma}{\hGamma'}
}{
  \hasType{\Delta}{\hGamma}{\dehole{u}{\sigma}{}}{\htau}
}

\inferrule[TANEHole]{
  \hasType{\Delta}{\hGamma}{\dexp}{\htau'}\\\\
  \Dbinding{u}{\hGamma'}{\htau} \in \Delta\\
  \hasType{\Delta}{\hGamma}{\sigma}{\hGamma'}
}{
  \hasType{\Delta}{\hGamma}{\dhole{\dexp}{u}{\sigma}{}}{\htau}
}

\inferrule[TACast]{
  \hasType{\Delta}{\Gamma}{\dexp}{\htau_1}\\
  \tconsistent{\htau_1}{\htau_2}
}{
  \hasType{\Delta}{\hGamma}{\dcasttwo{\dexp}{\htau_1}{\htau_2}}{\htau_2}
}

\inferrule[TAFailedCast]{
  \hasType{\Delta}{\Gamma}{\dexp}{\htau_1}\\
  \isGround{\htau_1}\\
  \isGround{\htau_2}\\
  \htau_1\neq\htau_2
}{
  \hasType{\Delta}{\hGamma}{\dcastfail{\dexp}{\htau_1}{\htau_2}}{\htau_2}
}
\end{mathpar}
\CaptionLabel{Type Assignment for Internal Expressions}{fig:hasType}
\end{figure}

Each well-typed external expression~$e$ elaborates to a well-typed internal
expression~$d$, for evaluation.
\Figref{fig:elaboration} specifies elaboration, and \Figref{fig:hasType}
specifies type assignment for internal expressions.
%

As with the type system for the external language (above), we specify
elaboration bidirectionally \cite{DBLP:conf/ppdp/FerreiraP14}.
The synthetic elaboration
judgement~$\elabSyn{\hGamma}{\hexp}{\htau}{\dexp}{\Delta}$
produces an elaboration~$d$ and a hole context~$\hDelta$ when synthesizing
type $\htau$ for $\hexp$.
%
We describe hole contexts, which serve as ``inputs'' to the type assignment
judgement~$\hasType{\Delta}{\hGamma}{d}{\htau}$, further below.
The analytic elaboration
judgement~$\elabAna{\hGamma}{\hexp}{\htau}{\dexp}{\htau'}{\Delta}$,
produces an elaboration~$d$ of type~$\htau'$, and a hole context~$\hDelta$,
when checking~$\hexp$ against~$\htau$.
The following theorem establishes that elaborations are well-typed and in
the analytic case that the assigned type, $\htau'$, is consistent with
provided type, $\htau$.
\begin{thm}[Typed Elaboration]\label{thm:typed-elaboration} ~
  \begin{enumerate}[nolistsep]
    \item
      If $\elabSyn{\hGamma}{\hexp}{\htau}{\dexp}{\Delta}$
      then $\hasType{\Delta}{\hGamma}{\dexp}{\htau}$.
    \item
      If $\elabAna{\hGamma}{\hexp}{\htau}{\dexp}{\htau'}{\Delta}$ then
      $\tconsistent{\htau}{\htau'}$ and
      $\hasType{\Delta}{\hGamma}{\dexp}{\htau'}$.
  \end{enumerate}
\end{thm}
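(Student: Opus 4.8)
The plan is to prove both statements simultaneously by mutual rule induction on the elaboration derivations $\elabSyn{\hGamma}{\hexp}{\htau}{\dexp}{\Delta}$ and $\elabAna{\hGamma}{\hexp}{\htau}{\dexp}{\htau'}{\Delta}$, following exactly the case structure of \Figref{fig:elaboration}. Each elaboration rule fixes the outermost shape of $\dexp$, so in every case the argument is: apply the induction hypotheses to the sub-elaborations, then discharge the corresponding type-assignment rule from \Figref{fig:hasType}. Before starting I would record a few small auxiliary facts: that type consistency is symmetric (used to flip the direction of the consistency produced by the analytic IH so it matches the premise shape of \rulename{TACast}); a matching--consistency lemma stating that if $\arrmatch{\htau}{\tarr{\htau_1}{\htau_2}}$ and $\tconsistent{\htau_2}{\htau_2'}$ then $\tconsistent{\htau}{\tarr{\htau_1}{\htau_2'}}$ (a two-case check on \rulename{MAHole}/\rulename{MAArr} using \rulename{TCHole1}, \rulename{TCArr}, \rulename{TCRefl}); a lemma that the identity substitution $\idof{\hGamma}$ has type $\hGamma$ under $\hGamma$, i.e. $\hasType{\Delta}{\hGamma}{\idof{\hGamma}}{\hGamma}$, which is immediate from \rulename{TAVar} applied pointwise; and a weakening/monotonicity principle for hole contexts, discussed below.

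The structural cases are routine. \rulename{ESConst} and \rulename{ESVar} are immediate from \rulename{TAConst}/\rulename{TAVar}; \rulename{ESLam} and \rulename{EALam} combine the IH with \rulename{TALam}, with the consistency obligation in \rulename{EALam} discharged by the matching--consistency lemma. \rulename{EASubsume} is immediate: its conclusion $\tconsistent{\htau}{\htau'}$ is literally a premise, and $\hasType{\Delta}{\hGamma}{\dexp}{\htau'}$ is the synthetic IH. The cast-producing cases \rulename{ESAsc} and \rulename{ESAp} are where \rulename{TACast} is applied: for \rulename{ESAsc}, the analytic IH gives $\hasType{\Delta}{\hGamma}{\dexp}{\htau'}$ and $\tconsistent{\htau}{\htau'}$, so by symmetry $\tconsistent{\htau'}{\htau}$ and \rulename{TACast} yields $\hasType{\Delta}{\hGamma}{\dcasttwo{\dexp}{\htau'}{\htau}}{\htau}$. \rulename{ESAp} is the same idea applied twice, wrapping $\dexp_1$ with a cast to $\tarr{\htau_2}{\htau}$ and $\dexp_2$ with a cast to $\htau_2$ (again flipping the consistencies from the two analytic IHs), followed by \rulename{TAAp}; here one additionally needs both casts to typecheck under the combined hole context $\Dunion{\Delta_1}{\Delta_2}$, which is exactly where hole-context weakening enters.

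The hole cases --- \rulename{ESEHole}, \rulename{ESNEHole}, \rulename{EAEHole}, \rulename{EANEHole} --- exercise the design of hole contexts. Each elaborates to a hole closure $\dehole{u}{\idof{\hGamma}}{}$ or $\dhole{\dexp}{u}{\idof{\hGamma}}{}$ and records the fresh binding $\Dbinding{u}{\hGamma}{\htau}$ (the analytic rules record the analyzed type; the synthetic rules record $\tehole$). To apply \rulename{TAEHole}/\rulename{TANEHole} I need (i) $\Dbinding{u}{\hGamma}{\htau} \in \Delta$, which holds by construction, and (ii) that the recorded substitution is well-typed, $\hasType{\Delta}{\hGamma}{\idof{\hGamma}}{\hGamma}$, which is the identity-substitution lemma. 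For the non-empty cases I also push the inner IH $\hasType{\Delta'}{\hGamma}{\dexp}{\htau'}$ through hole-context weakening to obtain typing under $\Delta', \Dbinding{u}{\hGamma}{\htau}$, and then \rulename{TANEHole} applies with witness type $\htau'$. The consistency obligations in the analytic hole rules are just $\tconsistent{\htau}{\htau}$ by \rulename{TCRefl}.

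The main obstacle I anticipate is not any single case but the hole-context bookkeeping: one must state and prove that typing is monotone in $\Delta$ --- if $\hasType{\Delta}{\hGamma}{\dexp}{\htau}$ and $\Delta'$ extends $\Delta$ compatibly, then $\hasType{\Delta'}{\hGamma}{\dexp}{\htau}$ --- and check that the context combinations produced during elaboration (the union $\Dunion{\Delta_1}{\Delta_2}$ in \rulename{ESAp} and the extensions with $\Dbinding{u}{\hGamma}{\htau}$ in the hole rules) are in fact compatible. This rests on the global freshness assumption for hole names, which guarantees that the hole contexts returned by distinct sub-derivations have disjoint domains. Once this lemma --- together with symmetry of $\sim$, the matching--consistency lemma, and the identity-substitution lemma --- is in hand, every case of the mutual induction closes mechanically, which is consistent with this metatheory being amenable to the Agda mechanization described in Sec.~\ref{sec:agda-mechanization}.
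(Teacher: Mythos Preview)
Your proposal is correct. The paper does not give an inline proof of this theorem; it defers to the Agda mechanization (Sec.~\ref{sec:agda-mechanization}), and your mutual rule induction on the elaboration derivations, together with the auxiliary facts you identify (symmetry of $\sim$, the matching--consistency lemma, well-typedness of $\idof{\hGamma}$, and weakening of $\Delta$), is exactly the standard shape such a mechanized proof takes. The paper even explicitly notes two of your lemmas in the surrounding prose: that $\hasType{\Delta}{\hGamma}{\idof{\hGamma}}{\hGamma}$ is ``easy to verify'' and that ``we must have structural weakening of $\Delta$''.
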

\noindent
%
%
The reason that $\htau'$ is only consistent with the provided type $\htau$ is because
the subsumption rule permits us to check an external expression against any
type consistent with the type that the expression \emph{actually}
synthesizes, whereas every internal expression can be assigned at most one
type, i.e. the following standard unicity property holds of the type
assignment system.
\begin{thm}[Type Assignment Unicity]
  If $\hasType{\Delta}{\hGamma}{\dexp}{\htau}$
  and $\hasType{\Delta}{\hGamma}{\dexp}{\htau'}$
  then $\htau=\htau'$.
\end{thm}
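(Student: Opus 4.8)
The plan is to prove this by a straightforward structural induction on the internal expression $\dexp$, equivalently by induction on either of the two given type assignment derivations. The observation that makes the argument go through immediately is that the type assignment system of \Figref{fig:hasType} is \emph{syntax-directed}: in contrast to the external bidirectional system, it has no subsumption rule, and the head constructor of $\dexp$ uniquely determines which rule can conclude a derivation with subject $\dexp$. So the first step is to note that the two derivations of $\hasType{\Delta}{\hGamma}{\dexp}{\htau}$ and $\hasType{\Delta}{\hGamma}{\dexp}{\htau'}$ necessarily end with the same rule; it then remains only to inspect the rules one by one and check that the two conclusion types must agree.

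I would dispatch the non-recursive rules first. For \rulename{TAConst} both types are $b$. For \rulename{TAVar} both derivations read the type of $x$ out of $\hGamma$, which --- adopting the usual convention that $\hGamma$ is a finite function from variables to types --- yields a single type. For \rulename{TACast} on $\dcasttwo{\dexp_0}{\htau_1}{\htau_2}$ and for \rulename{TAFailedCast} on $\dcastfail{\dexp_0}{\htau_1}{\htau_2}$, the conclusion type is the target type $\htau_2$, which occurs syntactically in the subject; it is therefore literally identical in the two derivations, with no appeal to the induction hypothesis needed (the premises and side conditions of these rules constrain only $\dexp_0$ and $\htau_1$). For \rulename{TAEHole} on $\dehole{u}{\sigma}{}$, and analogously for \rulename{TANEHole}, the conclusion type is obtained from the binding $\Dbinding{u}{\hGamma'}{\htau} \in \Delta$; taking $\Delta$ to be a finite map keyed by hole name, both derivations select the same binding and hence the same type, with the well-typedness of $\sigma$ against $\hGamma'$ playing no part in the assigned type.

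Then I would handle the two recursive rules. For \rulename{TALam} on $\halam{x}{\htau_1}{\dexp_0}$, the domain $\htau_1$ is fixed by the subject, and the induction hypothesis applied to the strict subterm $\dexp_0$ under $\hGamma, x : \htau_1$ gives a unique type $\htau_2$ for the body, so both derivations conclude $\tarr{\htau_1}{\htau_2}$. For \rulename{TAAp} on $\hap{\dexp_1}{\dexp_2}$, the induction hypothesis on $\dexp_1$ forces its type to be a single arrow type; since one derivation has it as $\tarr{\htau_2}{\htau}$ and the other as $\tarr{\htau_2'}{\htau'}$, these are equal and in particular $\htau = \htau'$. Every recursive use is on a proper subterm of $\dexp$, so the induction is well-founded.

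The hard part will be essentially nonexistent: this is a routine unicity argument. The only things worth stating carefully are the bookkeeping conventions --- that typing contexts $\hGamma$ and hole contexts $\Delta$ are functional, so that variable and hole lookups are deterministic --- and the syntax-directedness remark that lets the two derivations be aligned rule by rule. The same reasoning carries over unchanged to the numeric and sum-type extensions in the \appendixName, whose internal type assignment rules are likewise syntax-directed.
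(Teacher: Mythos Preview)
Your proposal is correct and is exactly the standard argument the paper has in mind; the paper itself does not spell out a proof, noting only that this is ``the standard unicity property'' (with the details relegated to the Agda mechanization), and your syntax-directed structural induction is precisely how that mechanization proceeds.
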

\noindent
Consequently, analytic elaboration reports the type actually assigned to
the elaboration it produces.
For example, we can derive that
$\elabAna{\hGamma}{c}{\tehole}{c}{b}{\emptyset}$.

Before describing the rules in detail, let us state two other guiding
theorems.  The following theorem establishes that every well-typed external
expression can be elaborated.
 \begin{thm}[Elaborability] \label{thm:elaborability}~
  \begin{enumerate}[nolistsep]
    \item
      If $\hsyn{\hGamma}{\hexp}{\htau}$
      then $\elabSyn{\hGamma}{\hexp}{\htau}{\dexp}{\Delta}$
      for some $\dexp$ and $\Delta$.
    \item
      If $\hana{\hGamma}{\hexp}{\htau}$
      then $\elabAna{\hGamma}{\hexp}{\htau}{\dexp}{\htau'}{\Delta}$
      for some $\dexp$ and $\htau'$ and $\Delta$.
  \end{enumerate}
 \end{thm}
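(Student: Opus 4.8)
The plan is to prove both parts simultaneously by rule induction on the derivations of $\hsyn{\hGamma}{\hexp}{\htau}$ and $\hana{\hGamma}{\hexp}{\htau}$, exploiting the fact that each rule of \Figref{fig:bidirectional-typing} has a counterpart in \Figref{fig:elaboration} whose premises are, modulo the extra elaboration outputs, the very same typing judgements. For \rulename{SConst}, \rulename{SVar}, and \rulename{SEHole} the corresponding elaboration rule fires with no inductive appeal. For \rulename{SLam}, \rulename{SNEHole}, \rulename{SAsc}, and \rulename{ALam} I would apply the induction hypothesis to the typing subderivation and then assemble the matching rule (\rulename{ESLam}, \rulename{ESNEHole}, \rulename{ESAsc}, \rulename{EALam}), taking the existentially quantified $\dexp$, $\htau'$, and $\Delta$ to be exactly what the conclusion of that rule produces. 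Since the subderivation fed to the induction hypothesis in \rulename{ASubsume} is a synthesis derivation for the same expression, it is still strictly smaller, so the mutual induction is well-founded.

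Two cases resist this purely mechanical matching, because elaboration is not in exact one-to-one correspondence with typing. First, the elaboration counterpart \rulename{EASubsume} of the subsumption rule carries the side conditions $\hexp \neq \hehole{u}$ and $\hexp \neq \hhole{\hexp'}{u}$, so it is unavailable when $\hexp$ is a hole. Second, \rulename{ESAp} demands an \emph{analytic} elaboration of the function subexpression $\hexp_1$ against the matched arrow type $\tarr{\htau_2}{\htau}$, whereas \rulename{SAp} only hands us the \emph{synthetic} typing $\hsyn{\hGamma}{\hexp_1}{\htau_1}$. To bridge both gaps I would first establish two auxiliary lemmas: (i) if $\arrmatch{\htau}{\tarr{\htau_1}{\htau_2}}$ then $\tconsistent{\htau}{\tarr{\htau_1}{\htau_2}}$, immediate by cases on \rulename{MAHole} and \rulename{MAArr} using \rulename{TCHole1} and \rulename{TCRefl}; and (ii) a ``promotion'' lemma stating that if $\elabSyn{\hGamma}{\hexp}{\htau}{\dexp}{\Delta}$ and $\tconsistent{\htau}{\htau'}$, then $\elabAna{\hGamma}{\hexp}{\htau'}{\dexp'}{\htau''}{\Delta'}$ for some $\dexp'$, $\htau''$, $\Delta'$. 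Lemma (ii) I would prove by a three-way case split on $\hexp$: when $\hexp$ is neither form of hole, \rulename{EASubsume} applies directly with $\dexp' = \dexp$, $\htau'' = \htau$, $\Delta' = \Delta$; when $\hexp = \hehole{u}$, \rulename{EAEHole} suffices; and when $\hexp = \hhole{\hexp'}{u}$, inverting the hypothesis (necessarily built by \rulename{ESNEHole}) recovers an elaboration of $\hexp'$, after which \rulename{EANEHole} applies.

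With these lemmas the two remaining cases go through. For \rulename{ASubsume}, with subderivation $\hsyn{\hGamma}{\hexp}{\htau}$ and $\tconsistent{\htau}{\htau'}$, the induction hypothesis (part 1) yields $\elabSyn{\hGamma}{\hexp}{\htau}{\dexp}{\Delta}$ and the promotion lemma finishes, a single appeal that absorbs the hole/non-hole distinction cleanly. For \rulename{SAp}, with premises $\hsyn{\hGamma}{\hexp_1}{\htau_1}$, $\arrmatch{\htau_1}{\tarr{\htau_2}{\htau}}$, and $\hana{\hGamma}{\hexp_2}{\htau_2}$: the induction hypothesis (part 1) on $\hexp_1$ gives $\elabSyn{\hGamma}{\hexp_1}{\htau_1}{\dexp_1}{\Delta_1}$; lemma (i) gives $\tconsistent{\htau_1}{\tarr{\htau_2}{\htau}}$, so the promotion lemma gives $\elabAna{\hGamma}{\hexp_1}{\tarr{\htau_2}{\htau}}{\dexp_1'}{\htau_1'}{\Delta_1'}$; the induction hypothesis (part 2) on $\hexp_2$ gives $\elabAna{\hGamma}{\hexp_2}{\htau_2}{\dexp_2}{\htau_2'}{\Delta_2}$; and then \rulename{ESAp} fires, its first two premises being exactly the inherited $\hsyn{\hGamma}{\hexp_1}{\htau_1}$ and $\arrmatch{\htau_1}{\tarr{\htau_2}{\htau}}$, producing the desired elaboration and hole context $\Dunion{\Delta_1'}{\Delta_2}$. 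I expect \rulename{SAp} to be the main obstacle, not because it is deep but because it is the one place where the elaboration judgement genuinely diverges from the typing judgement; the real work is stating the promotion lemma with enough generality to serve both \rulename{ASubsume} and \rulename{SAp} (including the subcase where the subexpression is itself a hole), after which everything else is bookkeeping that, as the paper notes, has already been mechanized in Agda.
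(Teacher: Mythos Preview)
Your proof is correct; the paper does not give a prose argument for this theorem, deferring to the Agda mechanization, but the mutual rule-induction you outline is the natural approach and correctly isolates the only two structural mismatches between the typing and elaboration rules---the hole-exclusion side conditions on \rulename{EASubsume} and the analytic (rather than synthetic) premise on $\hexp_1$ in \rulename{ESAp}. Your promotion lemma is exactly the auxiliary fact needed to bridge both gaps, and its three-way case split (non-hole via \rulename{EASubsume}, empty hole via \rulename{EAEHole}, non-empty hole via inversion of \rulename{ESNEHole} then \rulename{EANEHole}) is sound.
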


\noindent
The following theorem establishes that elaboration generalizes external
typing.
\begin{thm}[Elaboration Generality] \label{thm:elaboration-generality}~
  \begin{enumerate}[nolistsep]
    \item
      If $\elabSyn{\hGamma}{\hexp}{\htau}{\dexp}{\Delta}$
      then $\hsyn{\hGamma}{\hexp}{\htau}$.
    \item
      If $\elabAna{\hGamma}{\hexp}{\htau}{\dexp}{\htau'}{\Delta}$
      then $\hana{\hGamma}{\hexp}{\htau}$.
  \end{enumerate}
\end{thm}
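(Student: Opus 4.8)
The plan is to proceed by simultaneous rule induction on the two elaboration derivations, proving parts~(1) and~(2) together. For each elaboration rule I would exhibit a derivation of the corresponding external typing judgement, appealing to the appropriate part of the inductive hypothesis on any sub-derivation that is itself an elaboration judgement.

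Most cases will be immediate, since the \HazelnutLive elaboration rules of \Figref{fig:elaboration} are essentially cast-annotated copies of the bidirectional typing rules of \Figref{fig:bidirectional-typing}. Concretely: \rulename{ESConst}, \rulename{ESVar}, and \rulename{ESEHole} have no elaboration premises and map directly onto \rulename{SConst}, \rulename{SVar}, and \rulename{SEHole}; for \rulename{ESLam} and \rulename{ESNEHole} I apply part~(1) of the IH to the single synthetic premise and conclude with \rulename{SLam} / \rulename{SNEHole}; for \rulename{ESAsc} I apply part~(2) to the analytic premise and conclude with \rulename{SAsc}. For \rulename{ESAp}, the premises $\hsyn{\hGamma}{\hexp_1}{\htau_1}$ and $\arrmatch{\htau_1}{\tarr{\htau_2}{\htau}}$ are already exactly the first two premises of \rulename{SAp}, so it remains only to obtain $\hana{\hGamma}{\hexp_2}{\htau_2}$, which follows from part~(2) of the IH on the analytic elaboration of $\hexp_2$; the analytic elaboration of $\hexp_1$ and the inserted casts are irrelevant to this direction. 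In the analytic judgement, \rulename{EALam} reuses its $\arrmatch$ premise verbatim together with part~(2) of the IH on the body to build \rulename{ALam}, and \rulename{EASubsume} reuses its consistency premise together with part~(1) of the IH to build \rulename{ASubsume}, with the two disequality side conditions playing no role here.

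The one place where elaboration does not correspond one-to-one with an external rule is the analytic elaboration of holes, \rulename{EAEHole} and \rulename{EANEHole}: the external type system has no analytic rules for holes and instead reaches the analytic hole judgements only through subsumption. Here I would reconstruct the analytic typing by hand. For \rulename{EAEHole}: $\hehole{u}$ synthesizes $\tehole$ by \rulename{SEHole}, $\tconsistent{\tehole}{\htau}$ by \rulename{TCHole1}, and so \rulename{ASubsume} gives $\hana{\hGamma}{\hehole{u}}{\htau}$. For \rulename{EANEHole}: first apply part~(1) of the IH to the synthetic elaboration of the hole's contents to get $\hsyn{\hGamma}{\hexp}{\htau'}$, then \rulename{SNEHole} gives $\hsyn{\hGamma}{\hhole{\hexp}{u}}{\tehole}$, and again \rulename{ASubsume} with \rulename{TCHole1} yields $\hana{\hGamma}{\hhole{\hexp}{u}}{\htau}$.

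I do not expect any real difficulty. The only point that requires attention is precisely this last observation --- that elaboration has dedicated analytic rules for holes whereas external typing handles analytic holes only via \rulename{ASubsume}, so the mapping from elaboration derivations to typing derivations is not literally structure-preserving in those two cases. Everything else is bookkeeping: checking that each premise of an external rule is either shared verbatim with the corresponding elaboration rule or obtainable from the mutual inductive hypothesis.
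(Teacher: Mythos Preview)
Your proposal is correct and matches the natural approach: the paper does not spell out a prose proof but defers to its Agda mechanization, which proceeds by exactly this mutual rule induction on the elaboration derivations. You have correctly identified the only non-structural cases, \rulename{EAEHole} and \rulename{EANEHole}, where the analytic typing must be rebuilt via \rulename{ASubsume} and \rulename{TCHole1} rather than read off directly.
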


We also establish that the elaboration produces unique results.
\begin{thm}[Elaboration Unicity] \label{thm:expansion-unicity}~
  \begin{enumerate}[nolistsep]
  \item If $\elabSyn{\hGamma}{\hexp}{\htau_1}{\dexp{}_1}{\Delta_1}$ and
    $\elabSyn{\hGamma}{\hexp}{\htau_2}{\dexp{}_2}{\Delta_2}$, then
    $\htau_1 = \htau_2$ and $\dexp{}_1 = \dexp{}_2$ and $\Delta_1 =
    \Delta_2$.
  \item If
    $\elabAna{\hGamma}{\hexp}{\htau}{\dexp{}_1}{\htau_1}{\Delta_1}$ and
    $\elabAna{\hGamma}{\hexp}{\htau}{\dexp{}_2}{\htau_2}{\Delta_2}$, then
    $\htau_1 = \htau_2$ and $\dexp{}_1 = \dexp{}_2$ and $\Delta_1 =
    \Delta_2$
  \end{enumerate}
\end{thm}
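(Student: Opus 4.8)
The plan is to prove both parts at once by mutual rule induction on the given elaboration derivations. The crucial structural observation is that the elaboration judgements are \emph{syntax-directed}: for each external form $\hexp$ the conclusion of at most one elaboration rule matches. For the synthetic judgement this is immediate by inspecting \rulename{ESConst}, \rulename{ESVar}, \rulename{ESLam}, \rulename{ESAp}, \rulename{ESEHole}, \rulename{ESNEHole}, \rulename{ESAsc}, which cover pairwise-distinct syntactic forms (and, like external synthesis, leave unannotated lambdas without a synthetic rule). For the analytic judgement the only possible overlaps are (i) between \rulename{EASubsume} and the hole rules \rulename{EAEHole}/\rulename{EANEHole}, ruled out by the side conditions $\hexp \neq \hehole{u}$ and $\hexp \neq \hhole{\hexp'}{u}$ on \rulename{EASubsume}; and (ii) between \rulename{EASubsume} and \rulename{EALam}, ruled out because \rulename{EALam} applies only to an unannotated lambda $\hlam{x}{\hexp}$, for which the premise $\elabSyn{\hGamma}{\hlam{x}{\hexp}}{\htau'}{\dexp}{\Delta}$ of \rulename{EASubsume} is never derivable (there is no synthetic elaboration rule for that form). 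Hence any two derivations with the same inputs end with the same rule, and it remains only to check that that rule's outputs are determined by its inputs.

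Before the induction I would record three easy facts used in the \rulename{ESAp} case: (a) external type synthesis is unique --- if $\hsyn{\hGamma}{\hexp}{\htau_1}$ and $\hsyn{\hGamma}{\hexp}{\htau_2}$ then $\htau_1 = \htau_2$ --- which is inherited from \Hazelnut~\cite{popl-paper} (or proved by a short parallel induction); (b) the matching relation is deterministic in its output, since \rulename{MAHole} and \rulename{MAArr} apply to disjoint sets of types and each fixes $\tarr{\htau_1}{\htau_2}$ uniquely; and (c) $\idof{\hGamma}$ and $\Dbinding{u}{\hGamma}{\htau}$ are functions of their arguments.

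The case analysis is then mostly mechanical. \rulename{ESConst}, \rulename{ESVar}, \rulename{ESEHole}, \rulename{EAEHole} are immediate, the outputs being determined by $\hexp$, a context lookup, $\idof{\hGamma}$, and $\Dbinding{\cdot}{\cdot}{\cdot}$. \rulename{ESLam}, \rulename{ESNEHole}, \rulename{ESAsc}, \rulename{EALam}, \rulename{EANEHole}, and \rulename{EASubsume} each have a single elaboration subderivation, so the corresponding induction hypothesis fixes the subexpression's elaboration, type, and hole context; the rule's own outputs are then assembled from these together with data drawn directly from $\hexp$ (the annotation in \rulename{ESLam}, the ascription in \rulename{ESAsc}) and, for \rulename{EALam}, the unique matched arrow type from fact (b). The only case with genuine content is \rulename{ESAp}: its premise $\hsyn{\hGamma}{\hexp_1}{\htau_1}$ together with (a) fixes $\htau_1$; (b) then fixes the matched type $\tarr{\htau_2}{\htau}$, hence the synthesized type and the argument type; the two analytic induction hypotheses fix $(\dexp_1,\htau_1',\Delta_1)$ and $(\dexp_2,\htau_2',\Delta_2)$; and the elaborated application $\hap{(\dcasttwo{\dexp_1}{\htau_1'}{\tarr{\htau_2}{\htau}})}{\dcasttwo{\dexp_2}{\htau_2'}{\htau_2}}$ and the hole context $\Dunion{\Delta_1}{\Delta_2}$ are functions of quantities already shown unique.

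I do not expect a real obstacle: this is a routine syntactic unicity proof. The only point requiring attention is the preliminary bookkeeping --- confirming that analytic elaboration is syntax-directed (in particular that \rulename{EASubsume} never competes with \rulename{EALam} or the hole rules) and having unicity of external synthesis available --- after which every case closes by inspection or by a direct appeal to an induction hypothesis.
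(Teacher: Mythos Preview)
Your proposal is correct and follows the same approach that the paper's Agda mechanization takes: a mutual rule induction exploiting that both elaboration judgements are syntax-directed, together with unicity of external synthesis and determinism of arrow matching for the \rulename{ESAp} case. The only detail worth adding is that your ``rule induction'' is really a simultaneous induction on the pair of derivations (since \rulename{ESAp} has analytic subderivations and \rulename{EASubsume} has a synthetic one), but you already say ``mutual'' so this is implicit.
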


The rules governing elaboration of constants, variables and lambda
expressions---Rules \rulename{ESConst}, \rulename{ESVar}, \rulename{ESLam}
and \rulename{EALam}---mirror the corresponding type assignment rules---
Rules \rulename{TAConst}, \rulename{TAVar} and \rulename{TALam}---and in
turn, the corresponding bidirectional typing rules from
Fig.~\ref{fig:bidirectional-typing}.
%
%
To support type assignment, all lambdas in the internal language are
half-annotated---Rule \rulename{EALam} inserts the annotation when
elaborating an unannotated external lambda based on the given type.
The rules governing hole elaboration, and the rules that perform \emph{cast
  insertion}---those governing function application and type
ascription---are more interesting. Let us consider each of these two groups
of rules in turn in Sec.~\ref{sec:hole-elaboration} and
Sec.~\ref{sec:cast-insertion}, respectively.

\subsubsection{Hole Elaboration}\label{sec:hole-elaboration}
Rules \rulename{ESEHole}, \rulename{ESNEHole}, \rulename{EAEHole} and
\rulename{EANEHole} govern the elaboration of empty and non-empty
expression holes to empty and non-empty \emph{hole closures},
$\dehole{u}{\sigma}{}$ and $\dhole{\dexp}{u}{\sigma}{}$.
The hole name~$u$ on a hole closure identifies the external hole to which
the hole closure corresponds.
While we assume each hole name to be unique in the external language, once
evaluation begins, there may be multiple hole closures with the same name
due to substitution.
For example, the result from Fig.~\ref{fig:grades-example} shows three
closures for the hole named 1.
There, we numbered each hole closure for a given hole sequentially,
\li{1:1}, \li{1:2} and \li{1:3}, but this is strictly for the sake of
presentation, so we omit hole closure numbers from the core calculus.

For each hole, $u$, in an external expression, the hole context generated
by elaboration, $\Delta$, contains a hypothesis of the
form~$\Dbinding{u}{\hGamma}{\htau}$, which records the hole's type, $\tau$,
and the typing context, $\Gamma$, from where it appears in the original
expression.\footnote{ We use a hole context, rather than recording the
  typing context and type directly on each hole closure, to ensure that all
  closures for a hole name have the same typing context and type.}
We borrow this hole context notation from contextual modal type theory
(CMTT) \cite{Nanevski2008}, identifying hole names with metavariables and
hole contexts with modal contexts (we say more about the connection with
CMTT below).
%
%
In the synthetic hole elaboration rules~\rulename{ESEHole}
and~\rulename{ESNEHole}, the generated hole context assigns the hole
type~$\tehole$ to hole name~$u$, as in the external typing rules.
However, the first two premises of the elaboration subsumption
rule~\rulename{EASubsume} disallow the use of subsumption for holes in
analytic position.
Instead, we employ separate analytic rules~\rulename{EAEHole}
and~\rulename{EANEHole}, which each record the checked type~$\tau$ in the
hole context.
Consequently, we can use type assignment for the internal language --- the
type assignment rules \rulename{TAEHole} and \rulename{TANEHole} in
Fig.~\ref{fig:hasType} assign a hole closure for hole name~$u$ the
corresponding type from the hole context.

Each hole closure also has an associated environment~$\sigma$ which
consists of a finite substitution of the form $[d_1/x_1, ~\cdots, d_n/x_n]$
for $n \geq 0$.
The closure environment keeps a record of the substitutions that occur around the hole as evaluation occurs.
Initially, when no evaluation has yet occurred, the hole elaboration rules
generate the identity substitution for the typing context associated with
hole name~$u$ in hole context~$\Delta$, which we notate $\idof{\hGamma}$,
and define as follows.
\begin{defn}[Identity Substitution] $\idof{x_1 : \tau_1, ~\cdots, x_n : \tau_n} = [x_1/x_1, ~\cdots, x_n/x_n]$
\end{defn}
\noindent
The type assignment rules for hole closures,~\rulename{TAEHole} and \rulename{TANEHole}, each require that the hole closure environment~$\sigma$ be consistent with the corresponding typing context, written as $\hasType{\Delta}{\hGamma}{\sigma}{\hGamma'}$.
Formally, we define this relation in terms of type assignment as follows:
\begin{defn}[Substitution Typing]
$\hasType{\Delta}{\hGamma}{\sigma}{\hGamma'}$ iff $\domof{\sigma} = \domof{\hGamma'}$ and for each $x : \htau \in \hGamma'$ we have that $d/x \in \sigma$ and $\hasType{\Delta}{\hGamma}{d}{\tau}$.
\end{defn}
\noindent
It is easy to verify that the identity substitution satisfies this requirement, i.e. that $\hasType{\Delta}{\hGamma}{\idof{\hGamma}}{\hGamma}$.

Empty hole closures, $\dehole{u}{\sigma}{}$,  correspond to the metavariable closures (a.k.a. deferred substitutions) from CMTT, $\cmttclo{u}{\sigma}$.
\Secref{sec:evaluation} defines how these closure environments evolve during evaluation.
Non-empty hole closures~$\dhole{d}{u}{\sigma}{}$ have no direct correspondence with a notion from CMTT (see Sec.~\ref{sec:resumption}).

\subsubsection{Cast Insertion}\label{sec:cast-insertion}
%
%
Holes in types require us to defer certain structural checks to run time.
%
To see why this is necessary, consider the following
example: $\hap{(\halam{x}{\tehole}{\hap{x}{c}})}{c}$.
Viewed as an external expression, this example synthesizes type
$\tehole$, since the hole type annotation on variable~$x$ permits
applying~$x$ as a function of type~$\tarr{\tehole}{\tehole}$, and base
constant~$c$ may be checked against type~$\tehole$, by subsumption.
However, viewed as an internal expression, this example is not
well-typed---the type assignment system defined in
\Figref{fig:hasType} lacks subsumption.
Indeed, it would violate type safety if we could assign a type to this
example in the internal language, because beta reduction of this
example viewed as an internal expression would result in $c(c)$, which
is clearly not well-typed.
The difficulty arises because leaving the argument type unknown also leaves unknown how
the argument is being used (in this case, as a function).\footnote{In a system where type reconstruction is first used
to try to fill in type holes, we could express a similar example by
using $x$ at two or more different types, thereby causing type
reconstruction to fail.
%
}
By our interpretation of hole types as unknown types from gradual type
theory, we can address the problem by performing cast insertion.

The cast form in \HazelnutLive is $\dcasttwo{\dexp}{\htau_1}{\htau_2}$.
This form serves to ``box'' an expression of type $\htau_1$ for
treatment as an expression of a consistent type $\htau_2$
(Rule~\rulename{TACast} in \Figref{fig:hasType}).%
\footnote{
In the earliest work on gradual type theory, the cast form only gave
the target type~$\htau_2$ \cite{Siek06a}, but it simplifies the dynamic semantics substantially
to include the assigned type~$\htau_1$ in the syntax \cite{DBLP:conf/snapl/SiekVCB15}.
}

Elaboration inserts casts at function applications and ascriptions.
The latter is more straightforward: Rule~\rulename{ESAsc}
in \Figref{fig:expandSyn} inserts a cast from the assigned type to the
ascribed type.
Theorem~\ref{thm:typed-elaboration} inductively ensures that the two types
are consistent.
We include ascription for expository purposes---this form is derivable
by using application together with the half-annotated identity, $e
: \tau = \hap{(\halam{x}{\htau}{x})}{e}$; as such, application
elaboration, discussed below, is more general.

Rule~\rulename{ESAp} elaborates function applications.
To understand the rule, consider the elaboration of the example
discussed above, $\hap{(\halam{x}{\tehole}{\hap{x}{c}})}{c}$:
\[
        \hap{\dcasttwo{
        (\halam{x}{\tehole}{\underbrace{
                \hap{\dcasttwo{x}{\tehole}{\tarr{\tehole}{\tehole}}}
                {\dcasttwo{c}{b}{\tehole}}
                }_{\textrm{elaboration of function body}~x(c)}
        }
        )}{\tarr{\tehole}{\tehole}}{
           \tarr{\tehole}{\tehole}}
           }
           {\dcasttwo{c}{b}{\tehole}}
\]
Consider the (indicated) function body,
where elaboration inserts a cast on both the function expression~$x$ and
its argument~$c$.
Together, these casts for~$x$ and~$c$ permit assigning a type to the
function body according to the rules in \Figref{fig:hasType}, where we
could not do so under the same context without casts.
We separately consider the elaborations of~$x$ and of~$c$.

First, consider the function position of this application, here variable~$x$.
Without any cast, the type of variable~$x$ is the hole type~$\tehole$;
however, the inserted cast on~$x$ permits treating it as though it has
arrow type $\tarr{\tehole}{\tehole}$.
The first three premises of Rule~\rulename{ESAp} accomplish this
by first synthesizing a type for the function expression, here
$\tehole$, then
by determining the matched arrow type~$\tarr{\tehole}{\tehole}$, and
finally,
by performing analytic elaboration on the function expression with this
matched arrow type.
The resulting elaboration has some type~$\tau_1'$ consistent with the
matched arrow type.
In this case, because the subexpression~$x$ is a variable, analytic
elaboration goes through subsumption so that type~$\tau_1'$ is
simply~$\tehole$.
The conclusion of the rule inserts the corresponding cast.
We go through type synthesis, \emph{then} analytic elaboration, so that the
hole context records the matched arrow type for holes in function position,
rather than the type~$\tehole$ for all such holes, as would be the case in
a variant of this rule using synthetic elaboration for the function
expression.

Next, consider the application's argument, here constant~$c$.
The conclusion of Rule~\rulename{ESAp} inserts the cast on the argument's
elaboration, from the type it is assigned by the final premise of the
rule~(type~$b$), to the argument type of the matched arrow type of the
function expression~(type~$\tehole$).

The example's second, outermost application goes through the same
application elaboration rule.
In this case, the cast on the function is the identity cast for
$\tarr{\tehole}{\tehole}$.
For simplicity, we do not attempt to avoid the insertion of identity
casts in the core calculus; these will simply never fail during
evaluation.
However, it is safe in practice to eliminate such identity casts during
elaboration, and some formal accounts of gradual typing do so by defining
three application elaboration rules, including the original account of
\citet{Siek06a}.

\subsection{Dynamic Semantics}
\label{sec:evaluation}

To recap, the result of elaboration is a well-typed internal expression
with appropriately initialized hole closures and casts.  This section
specifies the dynamic semantics of \HazelnutLive as a ``small-step''
transition system over internal expressions equipped with a meaningful
notion of type safety even for incomplete programs, i.e. expressions typed
under a non-empty hole context, $\Delta$.  We establish that evaluation
does not stop immediately when it encounters a hole, nor when a cast fails,
by precisely characterizing when evaluation \emph{does} stop. In the case
of complete programs, we also recover the familiar statements of preservation
and progress for the simply typed lambda calculus.

%
%



\begin{figure}
\begin{subfigure}[t]{0.5\textwidth}
\judgbox{\isGround{\htau}}{$\htau$ is a ground type}
\begin{mathpar}
\inferrule[GBase]{ }{
  \isGround{b}
}

\inferrule[GHole]{ }{
  \isGround{\tarr{\tehole}{\tehole}}
}
\end{mathpar}
\end{subfigure}
\hfill
\begin{subfigure}[t]{0.46\textwidth}
\judgbox{\groundmatch{\htau}{\htau'}}{$\htau$ has matched ground type $\htau'$}
\begin{mathpar}
\inferrule[MGArr]{
  \tarr{\htau_1}{\htau_2}\neq\tarr{\tehole}{\tehole}
}{
  \groundmatch{\tarr{\htau_1}{\htau_2}}{\tarr{\tehole}{\tehole}}
}
\end{mathpar}
\end{subfigure}
\CaptionLabel{Ground Types}{fig:isGround}
\label{fig:groundmatch}
\end{figure}

\begin{figure}

\begin{tabular}[t]{cc}

\begin{minipage}{0.5\textwidth}
\judgbox{\isFinal{\dexp}}{$\dexp$ is final}
\begin{mathpar}
\inferrule[FBoxedVal]
{\isBoxedValue{\dexp}}{\isFinal{\dexp}}
\and
\inferrule[FIndet]
{\isIndet{\dexp}}{\isFinal{\dexp}}
\end{mathpar}
\end{minipage}

&

\begin{minipage}{0.5\textwidth}

\judgbox{\isValue{\dexp}}{$\dexp$ is a value}
\begin{mathpar}
\inferrule[VConst]{ }{
  \isValue{c}
}

\inferrule[VLam]{ }{
  \isValue{\halam{x}{\htau}{\dexp}}
}
\end{mathpar}
\end{minipage}

\end{tabular}

\vsepRule

\judgbox{\isBoxedValue{\dexp}}{$\dexp$ is a boxed value}
\begin{mathpar}
\inferrule[BVVal]{
  \isValue{\dexp}
}{
  \isBoxedValue{\dexp}
}

\inferrule[BVArrCast]{
  \tarr{\htau_1}{\htau_2} \neq \tarr{\htau_3}{\htau_4}\\
  \isBoxedValue{\dexp}
}{
  \isBoxedValue{\dcasttwo{\dexp}{\tarr{\htau_1}{\htau_2}}{\tarr{\htau_3}{\htau_4}}}
}

\inferrule[BVHoleCast]{
  \isBoxedValue{\dexp}\\
  \isGround{\htau}
}{
  \isBoxedValue{\dcasttwo{\dexp}{\htau}{\tehole}}
}
\end{mathpar}

\vsepRule

\judgbox{\isIndet{\dexp}}{$\dexp$ is indeterminate}
\begin{mathpar}
\inferrule[IEHole]
{ }
{\isIndet{\dehole{\mvar}{\subst}{}}}

\inferrule[INEHole]
{\isFinal{\dexp}}
{\isIndet{\dhole{\dexp}{\mvar}{\subst}{}}}

\inferrule[IAp]
{\dexp_1\neq
   \dcasttwo{\dexp_1'}
            {\tarr{\htau_1}{\htau_2}}
            {\tarr{\htau_3}{\htau_4}}\\
 \isIndet{\dexp_1}\\
 \isFinal{\dexp_2}}
{\isIndet{\dap{\dexp_1}{\dexp_2}}}

\inferrule[ICastGroundHole] {
  \isIndet{\dexp}\\
  \isGround{\htau}
}{
  \isIndet{\dcasttwo{\dexp}{\htau}{\tehole}}
}

\inferrule[ICastHoleGround] {
  \dexp\neq\dcasttwo{\dexp'}{\htau'}{\tehole}\\
  \isIndet{\dexp}\\
  \isGround{\htau}
}{
  \isIndet{\dcasttwo{\dexp}{\tehole}{\htau}}
}

\inferrule[ICastArr]{
  \tarr{\htau_1}{\htau_2} \neq \tarr{\htau_3}{\htau_4}\\
  \isIndet{\dexp}
}{
  \isIndet{\dcasttwo{\dexp}{\tarr{\htau_1}{\htau_2}}{\tarr{\htau_3}{\htau_4}}}
}

\inferrule[IFailedCast] {
  \isFinal{\dexp}\\
  \isGround{\htau_1}\\
  \isGround{\htau_2}\\
  \htau_1\neq\htau_2
}{
  \isIndet{\dcastfail{\dexp}{\htau_1}{\htau_2}}
}


\end{mathpar}


\CaptionLabel{Final Forms}{fig:isFinal}
\label{fig:isValue}
\label{fig:isIndet}
\end{figure}

Figures~\ref{fig:isGround}-\ref{fig:step} define the dynamic semantics.
Most of the cast-related machinery closely follows the cast calculus from
the ``refined'' account of the gradually typed lambda calculus
by \citet{DBLP:conf/snapl/SiekVCB15}, which is known to be
theoretically well-behaved.
In particular, \Figref{fig:isGround} defines the judgement
$\isGround{\htau}$, which distinguishes the base type~$b$ and the
least specific arrow type~$\tarr{\tehole}{\tehole}$ as \emph{ground
types}; this judgement helps simplify the treatment of function casts, discussed below.

\Figref{fig:isFinal} defines the judgement $\isFinal{d}$, which
distinguishes the final, i.e. irreducible, forms of the transition system.
The two rules distinguish two classes of final forms: (possibly-)boxed values and
indeterminate forms.%
\footnote{
        Most accounts of the cast calculus distinguish ground types and values
        with separate grammars together with an
        implicit identification convention.
        Our judgemental formulation is more faithful to the mechanization and
        cleaner for our purposes, because we are distinguishing several
        classes of final forms.
}
The judgement $\isBoxedValue{d}$ defines (possibly-)boxed values as either
ordinary values~(Rule~\rulename{BVVal}), or one of two cast forms: casts
between unequal function types and casts from a ground type to the hole
type. In each case, the cast must appear inductively on a boxed value.
These forms are irreducible because they represent values that have been
boxed but have never flowed into a corresponding ``unboxing'' cast,
discussed below.
%
%
The judgement $\isIndet{d}$ defines \emph{indeterminate} forms, so named
because they are rooted at expression holes and failed casts, and so,
conceptually, their ultimate value awaits programmer action (see
Sec.~\ref{sec:resumption}). Note that no term is both complete, i.e. has no
holes, and indeterminate.
The first two rules specify that {empty} hole closures are always
indeterminate, and that {non}-empty hole closures are indeterminate when
they consist of a {final} inner expression.
Below, we describe failed casts and the remaining indeterminate forms simultaneously
with the corresponding transition rules.

Figures~\ref{fig:instruction-transitions}-\ref{fig:step} define the transition rules.
Top-level transitions are \emph{steps}, $\stepsToD{}{d}{d'}$, governed by Rule~\rulename{Step} in \Figref{fig:step}, which
(1) decomposes $d$ into an evaluation context, $\evalctx$, and a selected sub-term, $d_0$;
(2) takes an \emph{instruction transition}, $\reducesE{}{d_0}{d_0'}$, as specified in \Figref{fig:instruction-transitions};
and (3) places $d_0'$ back at the selected position, indicated in
the evaluation context by the \emph{mark}, $\evalhole$, to obtain $d'$.%
\footnote{
        We say ``mark'', rather than the more conventional ``hole'', to avoid confusion with the (orthogonal) holes of \HazelnutLive.
        %
        }
%
This approach was originally developed in the reduction semantics of \citet{DBLP:journals/tcs/FelleisenH92} and is the predominant style of operational semantics in the literature on gradual typing.
Because we distinguish final forms judgementally, rather than syntactically, we use a judgemental formulation of this approach called a \emph{contextual dynamics} by \citet{pfpl}.
It would be straightforward to construct an equivalent structural operational semantics \cite{DBLP:journals/jlp/Plotkin04a} by using search rules instead of evaluation contexts (\citet{pfpl} relates the two approaches).

The rules maintain the property that final expressions truly cannot take a step.%
\begin{thm}[Finality] There does not exist $d$ such that both $\isFinal{d}$ and $\stepsToD{}{d}{d'}$ for some $d'$.
\end{thm}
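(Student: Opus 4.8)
The plan is to argue by contradiction: assume some $\dexp$ satisfies both $\isFinal{\dexp}$ and $\stepsToD{}{\dexp}{\dexp'}$, and derive absurdity. Inverting the sole step rule, Rule~\rulename{Step}, on $\stepsToD{}{\dexp}{\dexp'}$ exposes an evaluation context $\evalctx$ with $\isevalctx{\evalctx}$, a decomposition $\dexp = \selectEvalCtxR{\evalctx}{\dexp_0}$, and an instruction transition $\reducesE{}{\dexp_0}{\dexp_0'}$. The theorem then follows from two lemmas: (i) \emph{decomposition preserves finality} --- if $\isevalctx{\evalctx}$ and $\isFinal{\selectEvalCtxR{\evalctx}{\dexp_0}}$, then $\isFinal{\dexp_0}$; and (ii) \emph{no redex is final} --- there is no $\dexp_0$ with both $\reducesE{}{\dexp_0}{\dexp_0'}$ and $\isFinal{\dexp_0}$. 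Instantiating lemma~(i) with the decomposition above and the hypothesis $\isFinal{\dexp}$ gives $\isFinal{\dexp_0}$, which directly contradicts lemma~(ii).

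I would prove lemma~(i) by induction on the derivation of $\isevalctx{\evalctx}$. The base case $\evalctx = \evalhole$ is immediate, since then $\dexp_0 = \selectEvalCtxR{\evalhole}{\dexp_0}$ is final by hypothesis. In each inductive case the mark $\evalhole$ lies strictly inside a constructor --- a function or argument position of an application, the subject of a cast, the subject of a failed cast, or the body of a non-empty hole closure --- and the crux is the observation that whenever the resulting compound term is final, the immediate subterm housing the mark is itself final. This is checked by inverting $\isFinal{\selectEvalCtxR{\evalctx}{\dexp_0}}$ in each case: an application is final only via Rule~\rulename{IAp}, which forces the function subterm to be indeterminate and the argument subterm to be final; a cast $\dcasttwo{\dexp''}{\htau_1}{\htau_2}$ is final only via one of Rules~\rulename{BVArrCast}, \rulename{BVHoleCast}, \rulename{ICastGroundHole}, \rulename{ICastHoleGround}, or \rulename{ICastArr}, each of which requires $\dexp''$ to be a boxed value or indeterminate, hence final; a failed cast is final only via Rule~\rulename{IFailedCast}, which requires its subject to be final; and a non-empty hole closure is final only via Rule~\rulename{INEHole}, which requires its contents to be final. (The evaluation contexts of \HazelnutLive never descend under a $\lambda$ nor into a hole closure's environment, so we never reach the one place --- inside the environment of an empty hole closure, which is unconditionally final by Rule~\rulename{IEHole} --- where finality of the whole would fail to transfer to the marked subterm.) In every case the inductive hypothesis, applied to the immediate subterm, yields $\isFinal{\dexp_0}$.

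Lemma~(ii) is a finite inspection of the instruction transition rules of \Figref{fig:instruction-transitions}. Every left-hand side of an instruction transition is a bona fide redex --- the application of a $\lambda$-value or of a function cast to a final argument, or one of the cast-firing configurations (identity cast, ground-related cast collapse, or the rule that produces a failed cast) --- and one checks by inspection that no such form is a value or a boxed value. Crucially, none is indeterminate either: the side condition $\dexp_1 \neq \dcasttwo{\dexp_1'}{\tarr{\htau_1}{\htau_2}}{\tarr{\htau_3}{\htau_4}}$ on Rule~\rulename{IAp}, the restriction that the subject is not a hole cast on Rule~\rulename{ICastHoleGround}, and the arrow-type and ground-type disequalities on the remaining cast-indeterminacy rules are precisely what rule out the redex shapes. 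Hence no instruction-transition left-hand side is final.

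I expect lemma~(i) to be the only step requiring real care: it demands that the inductive invariant ``finality of the whole implies finality of the marked subterm'' be maintained in lockstep with the shapes simultaneously admitted by the evaluation-context judgment and by the boxed-value and indeterminate judgments, and in particular one must check that the ground-type and arrow-type disequality premises attached to the cast-related final forms never obstruct the needed inversion. Lemma~(ii) is routine bookkeeping once the instruction transitions are laid out, and the whole argument should collapse to a short case split after inverting Rule~\rulename{Step} in the Agda development.
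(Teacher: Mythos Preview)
Your argument is correct and is the natural decomposition one would expect the Agda mechanization to follow; the paper itself gives no prose proof, so there is nothing substantive to compare against. One small technical mismatch: the paper does not expose a judgment $\isevalctx{\evalctx}$ (evaluation contexts are given by a grammar, and the relevant inductive structure lives in the selection judgment $\selectEvalCtx{\dexp}{\evalctx}{\dexp_0}$ with rules \rulename{FHOuter}, \rulename{FHAp1}, \rulename{FHAp2}, \rulename{FHNEHoleInside}, \rulename{FHCastInside}, \rulename{FHFailedCast}), so your lemma~(i) should be phrased as an induction on that derivation or as a structural induction on $\evalctx$; the case analysis you describe carries over unchanged.
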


\subsubsection{Application and Substitution}

\begin{figure}
\judgbox{\reducesE{}{\dexp}{\dexp'}}{$\dexp$ takes an instruction transition to $\dexp'$}
\begin{mathpar}
\inferrule[ITLam]{
  \maybePremise{\isFinal{\dexp_2}}
}{
  \reducesE{}{\hap{(\halam{x}{\htau}{\dexp_1})}{\dexp_2}}{[\dexp_2/x]\dexp_1}
}

\inferrule[ITApCast]{
  \maybePremise{\isFinal{\dexp_1}}\\
  \maybePremise{\isFinal{\dexp_2}}\\
  \tarr{\htau_1}{\htau_2} \neq \tarr{\htau_1'}{\htau_2'}
}{
  \reducesE{}
    {\hap{\dcasttwo{\dexp_1}{\tarr{\htau_1}{\htau_2}}{\tarr{\htau_1'}{\htau_2'}}}{\dexp_2}}
    {\dcasttwo{(\hap{\dexp_1}{\dcasttwo{\dexp_2}{\htau_1'}{\htau_1}})}{\htau_2}{\htau_2'}}
}

\inferrule[ITCastId]{
  \maybePremise{\isFinal{\dexp}}
}{
  \reducesE{}{\dcasttwo{\dexp}{\htau}{\htau}}{\dexp}
}

\inferrule[ITCastSucceed]{
  \maybePremise{\isFinal{\dexp}}\\
  \isGround{\htau}
}{
  \reducesE{}{\dcastthree{\dexp}{\htau}{\tehole}{\htau}}{\dexp}
}

\inferrule[ITCastFail]{
  \maybePremise{\isFinal{\dexp}}\\
  \htau_1\neq\htau_2\\\\
  \isGround{\htau_1}\\
  \isGround{\htau_2}
}{
  \reducesE{}
    {\dcastthree{\dexp}{\htau_1}{\tehole}{\htau_2}}
    {\dcastfail{\dexp}{\htau_1}{\htau_2}}
}

\inferrule[ITGround]{
  \maybePremise{\isFinal{\dexp}}\\
  \groundmatch{\htau}{\htau'}
}{
  \reducesE{}
    {\dcasttwo{\dexp}{\htau}{\tehole}}
    {\dcastthree{\dexp}{\htau}{\htau'}{\tehole}}
}

\inferrule[ITExpand]{
  \maybePremise{\isFinal{\dexp}}\\
  \groundmatch{\htau}{\htau'}
}{
  \reducesE{}
    {\dcasttwo{\dexp}{\tehole}{\htau}}
    {\dcastthree{\dexp}{\tehole}{\htau'}{\htau}}
}

\end{mathpar}
\CaptionLabel{Instruction Transitions}{fig:instruction-transitions}
\end{figure}

\begin{figure}
$\arraycolsep=4pt\begin{array}{rllllll}
\mathsf{EvalCtx} & \evalctx & ::= &
  \evalhole ~\vert~
  \hap{\evalctx}{\dexp} ~\vert~
  \hap{\dexp}{\evalctx} ~\vert~
  \dhole{\evalctx}{\mvar}{\subst}{} ~\vert~
  \dcasttwo{\evalctx}{\htau}{\htau} ~\vert~
  \dcastfail{\evalctx}{\htau}{\htau}
\end{array}$

\vsepRule

\judgbox{\selectEvalCtx{\dexp}{\evalctx}{\dexp'}}{$\dexp$ is obtained by placing $\dexp'$ at the mark in $\evalctx$}
\begin{mathpar}
\vspace{-3px}
\inferrule[FHOuter]{ }{
  \selectEvalCtx{\dexp}{\evalhole}{\dexp}
}


\inferrule[FHAp1]{
  \selectEvalCtx{\dexp_1}{\evalctx}{\dexp_1'}
}{
  \selectEvalCtx{\hap{\dexp_1}{\dexp_2}}{\hap{\evalctx}{\dexp_2}}{\dexp_1'}
}

\inferrule[FHAp2]{
  \maybePremise{\isFinal{\dexp_1}}\\
  \selectEvalCtx{\dexp_2}{\evalctx}{\dexp_2'}
}{
  \selectEvalCtx{\hap{\dexp_1}{\dexp_2}}{\hap{\dexp_1}{\evalctx}}{\dexp_2'}
}


\inferrule[FHNEHoleInside]{
  \selectEvalCtx{\dexp}{\evalctx}{\dexp'}
}{
  \selectEvalCtx{\dhole{\dexp}{\mvar}{\subst}{}}{\dhole{\evalctx}{\mvar}{\subst}{}}{\dexp'}
}


\inferrule[FHCastInside]{
  \selectEvalCtx{\dexp}{\evalctx}{\dexp'}
}{
  \selectEvalCtx{\dcasttwo{\dexp}{\htau_1}{\htau_2}}
                {\dcasttwo{\evalctx}{\htau_1}{\htau_2}}
                {\dexp'}
}

\inferrule[FHFailedCast]{
  \selectEvalCtx{\dexp}{\evalctx}{\dexp'}
}{
  \selectEvalCtx{\dcastfail{\dexp}{\htau_1}{\htau_2}}
                {\dcastfail{\evalctx}{\htau_1}{\htau_2}}
                {\dexp'}
}

\end{mathpar}

\vsepRule

\judgbox{\stepsToD{}{\dexp}{\dexp'}}{$\dexp$ steps to $\dexp'$}
\vspace{-20px}
\begin{mathpar}
\inferrule[Step]{
  \selectEvalCtx{d}{\evalctx}{\dexp_0}\\
  \reducesE{}{\dexp_0}{\dexp_0'}\\
  \selectEvalCtx{\dexp'}{\evalctx}{\dexp_0'}
}{
  \stepsToD{}{\dexp}{\dexp'}
}
\end{mathpar}
\vspace{-6px}
\CaptionLabel{Evaluation Contexts and Steps}{fig:step}
\vspace{-4px}
\end{figure}

Rule \rulename{ITLam} in Fig.~\ref{fig:instruction-transitions} defines the
standard beta reduction transition.
The bracketed premises of the form $\maybePremise{\isFinal{\dexp}}$ in
Fig.~\ref{fig:instruction-transitions}-\ref{fig:step} may be \emph{included}
to specify an eager, left-to-right evaluation strategy, or \emph{excluded} to
leave the evaluation strategy and order unspecified.
In our metatheory, we exclude these premises, both for
the sake of generality, and to support the specification of the fill-and-resume operation~(see \Secref{sec:resumption}).

Substitution, written $[d/x]d'$, operates in the standard capture-avoiding manner~\cite{pfpl} (see \ifarxiv Appendix \ref{sec:substitution} \else the \appendixName~\fi for the full definition).
%
The only cases of special interest arise when substitution reaches a hole closure:
\[
\begin{array}{rcl}
  [d/x]\dehole{u}{\sigma}{} & = & \dehole{u}{[d/x]\sigma}{} \\%
  \substitute{d}{x}{\dhole{d'}{u}{\sigma}{}} & = & \dhole{[d/x]d'}{u}{[d/x]\sigma}{}
\end{array}
\]
In both cases, we write~$[d/x]\sigma$ to perform substitution on each expression in the hole environment~$\sigma$, i.e. the environment ``records'' the substitution.
For example, $\stepsToD{}
    {\hap{(\halam{x}{b}{\halam{y}{b}{\dehole{u}{[x/x, y/y]}{}}})}{c}}
    {\halam{y}{b}{\dehole{u}{[c/x, y/y]}{}}}$.
Beta reduction can duplicate hole closures.
Consequently, the environments of different closures with the same hole name may differ,
e.g., when a reduction applies a function with a hole closure body multiple times as in Fig.~\ref{fig:grades-example}.
Hole closures may also appear within the environments of other hole
closures, giving rise to the closure paths described in
Sec.~\ref{sec:paths}.

The \rulename{ITLam} rule is not the only rule we need to handle function
application, because lambdas are not the only final form of arrow type.
Two other situations may also arise.

First, the expression in function position might be a cast between
arrow types, in which case we apply the arrow cast conversion rule,
Rule \rulename{ITApCast}, to rewrite the application form, obtaining an
equivalent application where the expression~$d_1$ under the function
cast is exposed.
We know from inverting the typing rules that~$d_1$ has type
$\tarr{\htau_1}{\htau_2}$, and that~$d_2$ has type~$\htau_1'$, where
$\tconsistent{\htau_1}{\htau_1'}$.
Consequently, we maintain type
safety by placing a cast on~$d_2$ from~$\htau_1'$ to~$\htau_1$.
The result of this application has type $\htau_2$, but the
original cast promised that the result would have consistent type
$\htau_2'$, so we also need a cast on the result from $\htau_2$ to
$\htau_2'$.

Second, the expression in function position may be indeterminate,
where arrow cast conversion is not applicable,
e.g. $\hap{(\dehole{u}{\sigma}{})}{c}$.
In this case, the application is indeterminate (Rule~\rulename{IAp}
in \Figref{fig:isFinal}), and the application reduces no
further.

\subsubsection{Casts}
Rule \rulename{ITCastId} strips identity casts. The remaining instruction
transition rules assign meaning to non-identity casts.
As discussed in Sec.~\ref{sec:cast-insertion}, the structure of a term
cast \emph{to} hole type is statically obscure,
so we must await a \emph{use} of the term at some other type, via a
cast \emph{away} from hole type, to detect the type error dynamically.
Rules \rulename{ITCastSucceed} and \rulename{ITCastFail} handle this situation when the
two types involved are ground types (Fig.~\ref{fig:isGround}).
If the two ground types are equal, then the cast succeeds and the cast
may be dropped.
If they are not equal, then the cast fails and the failed cast form,
$\dcastfail{\dexp}{\htau_1}{\htau_2}$, arises.
Rule \rulename{TAFailedCast} specifies that a failed cast is well-typed exactly
when $d$ has ground type $\tau_1$ and $\tau_2$ is a ground type
not equal to $\tau_1$.
Rule \rulename{IFailedCast} specifies that a failed cast operates as an
indeterminate form (once $d$ is final), i.e. evaluation does not stop. For simplicity, we do not include blame labels as found in some accounts of gradual typing \cite{DBLP:conf/esop/WadlerF09,DBLP:conf/snapl/SiekVCB15}, but it would be straightforward to do so by recording the blame labels from the two constituent casts on the two arrows of the failed cast.

%
The two remaining instruction transition rules, Rule~\rulename{ITGround}
and~\rulename{ITExpand}, insert intermediate casts from non-ground type to a
consistent ground type, and \emph{vice versa}.
These rules serve as technical devices, permitting us to restrict our
interest exclusively to casts involving ground types and type holes elsewhere.
Here, the only non-ground types are the arrow types, so the grounding
judgement~$\groundmatch{\tau_1}{\htau_2}$ (\Figref{fig:groundmatch}),
produces the ground arrow type~$\tarr{\tehole}{\tehole}$.
More generally, the following invariant governs this judgement.
\begin{lem}[Grounding]
  If $\groundmatch{\htau_1}{\htau_2}$
  then $\isGround{\htau_2}$
  and $\tconsistent{\htau_1}{\htau_2}$
  and $\htau_1\neq\htau_2$.
\end{lem}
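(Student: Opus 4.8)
The plan is to prove this by a single inversion on the grounding judgement. The judgement $\groundmatch{\htau_1}{\htau_2}$ is defined by exactly one rule, \rulename{MGArr}, whose conclusion has the shape $\groundmatch{\tarr{\htau_1'}{\htau_2'}}{\tarr{\tehole}{\tehole}}$ subject to the side condition $\tarr{\htau_1'}{\htau_2'}\neq\tarr{\tehole}{\tehole}$. So from the hypothesis $\groundmatch{\htau_1}{\htau_2}$ we immediately learn that $\htau_1=\tarr{\htau_1'}{\htau_2'}$ for some $\htau_1',\htau_2'$, that $\htau_2=\tarr{\tehole}{\tehole}$, and that $\tarr{\htau_1'}{\htau_2'}\neq\tarr{\tehole}{\tehole}$.

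The three conjuncts then follow directly. For $\isGround{\htau_2}$, substitute $\htau_2=\tarr{\tehole}{\tehole}$ and apply \rulename{GHole}. For $\htau_1\neq\htau_2$, this is precisely the recorded side condition once we rewrite $\htau_1$ and $\htau_2$ in their unfolded forms. For $\tconsistent{\htau_1}{\htau_2}$, we must establish $\tconsistent{\tarr{\htau_1'}{\htau_2'}}{\tarr{\tehole}{\tehole}}$; apply \rulename{TCArr} to the premises $\tconsistent{\htau_1'}{\tehole}$ and $\tconsistent{\htau_2'}{\tehole}$, each of which holds by \rulename{TCHole2}.

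There is no real obstacle here: the grounding judgement has a single rule and the lemma is essentially read off from it, so the proof is a one-line case analysis (and indeed a one-constructor pattern match in the Agda mechanization). The only points needing a moment's care are that consistency must be routed through \rulename{TCArr} rather than taken straight from \rulename{TCHole1}/\rulename{TCHole2}, since the target type $\tarr{\tehole}{\tehole}$ is an arrow type rather than $\tehole$ itself; and that the disequality asserted in the conclusion is between the whole types $\htau_1$ and $\htau_2$, which coincides with the rule's side condition only after both are replaced by their unfolded shapes.
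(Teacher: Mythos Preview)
Your proposal is correct and matches the paper's approach: the lemma is immediate by inversion on the single rule \rulename{MGArr} defining the grounding judgement, exactly as you describe. The paper does not spell out a proof in prose (relying on the Agda mechanization), but the one-constructor case analysis you give is precisely what that mechanization amounts to.
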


In all other cases, casts evaluate either to boxed values or to
indeterminate forms according to the remaining rules
in \Figref{fig:isFinal}.
Of note, Rule \rulename{ICastHoleGround} handles casts from hole to
ground type that are not of
the form $\dcastthree{\dexp}{\htau_1}{\tehole}{\htau_2}$.

\subsubsection{Type Safety}
The purpose of establishing type safety is to ensure that the static and dynamic semantics of a
language cohere.
We follow the approach developed by \citet{wright94:_type_soundness},
now standard \cite{pfpl}, which distinguishes two type safety
properties, preservation and progress.
To permit the evaluation of incomplete programs, we establish these
properties for terms typed under arbitrary hole context $\Delta$.
We assume an empty typing context, $\hGamma$; to run open programs, the
system may treat free variables as empty holes with a corresponding
name.

The preservation theorem establishes that transitions preserve type
assignment, i.e. that the type of an expression accurately predicts
the type of the result of reducing that expression.

\begin{thm}[Preservation]
  If $\hasType{\Delta}{\emptyset}{\dexp}{\htau}$ and
  $\stepsToD{\Delta}{\dexp}{\dexp'}$ then
  $\hasType{\Delta}{\emptyset}{\dexp'}{\htau}$.
\end{thm}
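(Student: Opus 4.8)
The plan is to follow the standard Wright--Felleisen recipe, adapted to the evaluation-context presentation of the dynamics, reducing Preservation to three ingredients: (i)~a typed decomposition/recomposition lemma for evaluation contexts; (ii)~a substitution lemma that additionally accounts for hole closures and their environments; and (iii)~a case analysis showing that each instruction transition preserves type assignment. Inverting \rulename{Step}, the hypothesis $\stepsToD{\Delta}{\dexp}{\dexp'}$ gives $\selectEvalCtx{\dexp}{\evalctx}{\dexp_0}$, $\reducesE{}{\dexp_0}{\dexp_0'}$, and $\selectEvalCtx{\dexp'}{\evalctx}{\dexp_0'}$, so these three ingredients suffice: decompose to expose $\dexp_0$ at some type $\htau_0$, preserve that type across the instruction transition, and recompose. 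Throughout, the typing context stays $\emptyset$ (the statement treats free variables as holes), which is what lets the evaluation-context lemma avoid binder bookkeeping entirely.

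First I would prove the evaluation-context lemma by induction on the structure of $\evalctx$: if $\selectEvalCtx{\dexp}{\evalctx}{\dexp_0}$ and $\hasType{\Delta}{\emptyset}{\dexp}{\htau}$, then there is a type $\htau_0$ (unique, by Type Assignment Unicity) with $\hasType{\Delta}{\emptyset}{\dexp_0}{\htau_0}$; and conversely, replacing $\dexp_0$ by any $\dexp_0'$ with $\hasType{\Delta}{\emptyset}{\dexp_0'}{\htau_0}$ yields $\hasType{\Delta}{\emptyset}{\evalctx\{\dexp_0'\}}{\htau}$. Since no production of $\evalctx$ binds a variable, each inductive step just inverts the one applicable type-assignment rule at that layer---\rulename{TAAp} for $\hap{\evalctx}{\dexp}$ and $\hap{\dexp}{\evalctx}$, \rulename{TANEHole} for $\dhole{\evalctx}{\mvar}{\subst}{}$, \rulename{TACast} for $\dcasttwo{\evalctx}{\htau_1}{\htau_2}$, \rulename{TAFailedCast} for $\dcastfail{\evalctx}{\htau_1}{\htau_2}$---applies the induction hypothesis, and re-applies the same rule with $\dexp_0'$ in place of $\dexp_0$, the surrounding type information being undisturbed.

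Second I would establish the substitution lemma, which I expect to be the crux. It must be proved simultaneously for expressions and for environments, by mutual induction on the type-assignment derivations: (a)~if $\hasType{\Delta}{\hGamma, x{:}\htau}{\dexp_1}{\htau_2}$ and $\hasType{\Delta}{\hGamma}{\dexp_2}{\htau}$ then $\hasType{\Delta}{\hGamma}{[\dexp_2/x]\dexp_1}{\htau_2}$; and (b)~if $\hasType{\Delta}{\hGamma, x{:}\htau}{\subst}{\hGamma'}$ and $\hasType{\Delta}{\hGamma}{\dexp_2}{\htau}$ then $\hasType{\Delta}{\hGamma}{[\dexp_2/x]\subst}{\hGamma'}$. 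The \rulename{TALam} case needs weakening/exchange on $\hGamma$; the cases of interest are the hole closures. For $\dehole{u}{\subst}{}$, substitution only rewrites $\subst$, the hole-context entry $\Dbinding{u}{\hGamma'}{\htau_2}$ is untouched, and part~(b) re-establishes the substitution-typing premise $\hasType{\Delta}{\hGamma}{[\dexp_2/x]\subst}{\hGamma'}$, so \rulename{TAEHole} re-applies at $\htau_2$; for $\dhole{\dexp'}{u}{\subst}{}$ one additionally uses part~(a) on $\dexp'$. Part~(b) itself unfolds, via the definition of Substitution Typing, to one instance of part~(a) per component of $\subst$---hence the mutual induction.

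Finally I would carry out the case analysis on $\reducesE{}{\dexp_0}{\dexp_0'}$ using the rules of Fig.~\ref{fig:instruction-transitions}. \rulename{ITLam} is immediate from the substitution lemma after inverting \rulename{TAAp} and \rulename{TALam}. The rules \rulename{ITCastId}, \rulename{ITCastSucceed}, \rulename{ITCastFail}, \rulename{ITGround}, and \rulename{ITExpand} each follow by inverting \rulename{TACast} (unfolding the $\dcastthree{\cdot}{\cdot}{\cdot}{\cdot}$ abbreviation where it occurs), using the Grounding lemma and the consistency rules \rulename{TCHole1}/\rulename{TCHole2} to supply the required consistencies, and re-deriving the result with \rulename{TACast} or (for \rulename{ITCastFail}) \rulename{TAFailedCast}; because the cast forms carry their source and target types explicitly, all of these are mechanical. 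The one subtle case is \rulename{ITApCast}: inverting \rulename{TAAp} and then \rulename{TACast} on the function gives $\hasType{\Delta}{\emptyset}{\dexp_1}{\tarr{\htau_1}{\htau_2}}$, $\hasType{\Delta}{\emptyset}{\dexp_2}{\htau_1'}$, overall type $\htau_2'$, and $\tconsistent{\tarr{\htau_1}{\htau_2}}{\tarr{\htau_1'}{\htau_2'}}$; since the two arrow types are distinct and neither is $\tehole$, this consistency must be derived by \rulename{TCArr}, yielding $\tconsistent{\htau_1}{\htau_1'}$ and $\tconsistent{\htau_2}{\htau_2'}$, and symmetry of consistency then gives $\tconsistent{\htau_1'}{\htau_1}$. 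Hence in the contractum $\dcasttwo{(\hap{\dexp_1}{\dcasttwo{\dexp_2}{\htau_1'}{\htau_1}})}{\htau_2}{\htau_2'}$ the inner cast is well-typed at $\htau_1$ (\rulename{TACast}), the application at $\htau_2$ (\rulename{TAAp}), and the outer cast re-casts to $\htau_2'$ (\rulename{TACast}), matching the original type. As indicated, the substitution lemma over hole closures---in particular setting up the mutual induction with environment substitution so that it threads correctly through Substitution Typing---is the step I expect to require the most care; everything else is standard or rendered routine by the explicit cast annotations.
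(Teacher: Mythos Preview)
Your proposal is correct and matches the paper's approach: the paper says the proof ``relies on an analogous preservation lemma for instruction transitions and a standard substitution lemma,'' and your three ingredients---typed decomposition/recomposition for evaluation contexts, the mutual substitution lemma for expressions and environments (exactly Lemma~\ref{thm:substitution} in the appendix), and the case analysis on instruction transitions---unpack precisely this. Your treatment of the hole-closure cases in the substitution lemma and of \rulename{ITApCast} via inversion of \rulename{TCArr} is accurate.
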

\noindent
The proof relies on an analogous preservation lemma for instruction
transitions and a standard substitution lemma stated in \ifarxiv Appendix \ref{sec:substitution}\else the \appendixName\fi.
%
Hole closures can disappear during evaluation, so we must have structural weakening 
of $\Delta$.

The progress theorem establishes that the dynamic semantics accounts
for every well-typed term, i.e. that we have not forgotten some
necessary rules or premises.
\begin{thm}[Progress]
  If $\hasType{\Delta}{\emptyset}{\dexp}{\htau}$ then either
  (a) there exists $\dexp'$ such that $\stepsToD{}{\dexp}{\dexp'}$ or
  (b) $\isBoxedValue{\dexp}$ or
  (c) $\isIndet{\dexp}$.
\end{thm}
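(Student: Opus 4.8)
The plan is to argue by rule induction on the type assignment derivation $\hasType{\Delta}{\emptyset}{\dexp}{\htau}$, following the now-standard pattern for the cast calculus of \citet{DBLP:conf/snapl/SiekVCB15}. Two auxiliary ingredients are needed. The first is a canonical forms lemma for boxed values, obtained by inversion on the rules defining $\isBoxedValue{\cdot}$ (Rules \rulename{BVVal}, \rulename{BVArrCast}, \rulename{BVHoleCast}) together with Type Assignment Unicity: if $\isBoxedValue{\dexp}$ and $\hasType{\Delta}{\emptyset}{\dexp}{\htau}$, then (i) if $\htau = b$ then $\dexp = c$; (ii) if $\htau = \tarr{\htau_1}{\htau_2}$ then either $\dexp = \halam{x}{\htau_1}{\dexp'}$ or $\dexp = \dcasttwo{\dexp'}{\tarr{\htau_3}{\htau_4}}{\tarr{\htau_1}{\htau_2}}$ with $\tarr{\htau_3}{\htau_4} \neq \tarr{\htau_1}{\htau_2}$; and (iii) if $\htau = \tehole$ then $\dexp = \dcasttwo{\dexp'}{\htau'}{\tehole}$ with $\isGround{\htau'}$. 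The second is the observation that evaluation contexts compose, so that whenever $\stepsToD{}{\dexp_0}{\dexp_0'}$ and $\evalctx'$ is any single frame (e.g. $\hap{\evalhole}{\dexp_2}$, $\hap{\dexp_1}{\evalhole}$, $\dhole{\evalhole}{\mvar}{\subst}{}$, $\dcasttwo{\evalhole}{\htau_1}{\htau_2}$, or $\dcastfail{\evalhole}{\htau_1}{\htau_2}$), plugging $\dexp_0$ into $\evalctx'$ also steps, via Rule \rulename{Step}. Since we exclude the bracketed $\maybePremise{\isFinal{\cdot}}$ premises in the metatheory, the relevant frame-selection rules carry no side conditions, so this composition is immediate.

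The leaf and purely structural cases are routine. For \rulename{TAConst} and \rulename{TALam}, $c$ and $\halam{x}{\htau_1}{\dexp}$ are values, hence boxed values by \rulename{BVVal}; \rulename{TAVar} is vacuous under the empty typing context; and for \rulename{TAEHole}, $\dehole{\mvar}{\subst}{}$ is indeterminate by \rulename{IEHole} (its environment is never reduced). For \rulename{TANEHole}, apply the induction hypothesis to the inner expression: if it steps, the non-empty hole closure steps by composing with the frame $\dhole{\evalhole}{\mvar}{\subst}{}$; otherwise the inner expression is final and \rulename{INEHole} makes $\dhole{\dexp}{\mvar}{\subst}{}$ indeterminate. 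The \rulename{TAFailedCast} case is analogous: apply the induction hypothesis to the cast subject; if it steps, compose with $\dcastfail{\evalhole}{\htau_1}{\htau_2}$; otherwise it is final and \rulename{IFailedCast} makes the failed cast indeterminate.

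The application case \rulename{TAAp} is the first case with real content: $\hasType{\Delta}{\emptyset}{\dexp_1}{\tarr{\htau_2}{\htau}}$ and $\hasType{\Delta}{\emptyset}{\dexp_2}{\htau_2}$. Apply the induction hypothesis to $\dexp_1$. If $\dexp_1$ steps, compose with $\hap{\evalhole}{\dexp_2}$. If $\dexp_1$ is a boxed value, canonical forms at arrow type gives either $\dexp_1 = \halam{x}{\htau_2}{\dexp_1'}$, whence \rulename{ITLam} applies (no constraint on $\dexp_2$, since that premise is excluded), or $\dexp_1$ is a cast between unequal arrow types, whence \rulename{ITApCast} applies. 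If $\dexp_1$ is indeterminate, then either it is itself a cast between unequal arrow types (the shape admitted by \rulename{ICastArr}), so \rulename{ITApCast} again applies, or it is not, in which case we apply the induction hypothesis to $\dexp_2$: if $\dexp_2$ steps, compose with $\hap{\dexp_1}{\evalhole}$; otherwise $\dexp_2$ is final and \rulename{IAp} makes $\hap{\dexp_1}{\dexp_2}$ indeterminate.

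The cast case \rulename{TACast}, where $\dexp = \dcasttwo{\dexp'}{\htau_1}{\htau_2}$ with $\tconsistent{\htau_1}{\htau_2}$, is the main obstacle --- not because any single step is hard, but because it requires an exhaustive case split on $(\htau_1,\htau_2)$ after applying the induction hypothesis to $\dexp'$ (if $\dexp'$ steps we compose with $\dcasttwo{\evalhole}{\htau_1}{\htau_2}$, so assume $\dexp'$ is final). The subcases, constrained throughout by $\tconsistent{\htau_1}{\htau_2}$ --- which in particular rules out $\htau_1$ and $\htau_2$ being two distinct ground types --- are: (i) $\htau_1 = \htau_2$: Rule \rulename{ITCastId}. (ii) $\htau_1$ non-ground (hence a non-trivial arrow) with $\htau_2 = \tehole$: Rule \rulename{ITGround} via $\groundmatch{\htau_1}{\tarr{\tehole}{\tehole}}$; symmetrically, $\htau_1 = \tehole$ with $\htau_2$ non-ground: Rule \rulename{ITExpand}. (iii) $\htau_1$ and $\htau_2$ both arrow types and unequal: a boxed value by \rulename{BVArrCast} when $\dexp'$ is a boxed value, indeterminate by \rulename{ICastArr} when $\dexp'$ is indeterminate. (iv) $\isGround{\htau_1}$ with $\htau_2 = \tehole$: a boxed value by \rulename{BVHoleCast}, or indeterminate by \rulename{ICastGroundHole}. (v) $\htau_1 = \tehole$ with $\isGround{\htau_2}$: if $\dexp'$ has the form $\dcasttwo{\dexp''}{\htau'}{\tehole}$ --- which, by canonical forms when $\dexp'$ is a boxed value and by inversion on \rulename{ICastGroundHole} when $\dexp'$ is indeterminate, forces $\isGround{\htau'}$ --- then $\dexp = \dcastthree{\dexp''}{\htau'}{\tehole}{\htau_2}$, and Rule \rulename{ITCastSucceed} applies if $\htau' = \htau_2$ and Rule \rulename{ITCastFail} otherwise; if $\dexp'$ does not have that form, then $\dexp$ is indeterminate by \rulename{ICastHoleGround}. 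The delicate part, and the step I would most want to cross-check against the Agda mechanization, is confirming that this split is genuinely exhaustive --- equivalently, that every configuration in which $\dexp'$ is final but $\dcasttwo{\dexp'}{\htau_1}{\htau_2}$ takes no instruction transition is caught by one of the boxed-value or indeterminate rules of Fig.~\ref{fig:isFinal}; this is exactly the property that certifies the cast machinery is complete.
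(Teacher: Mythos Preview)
Your proposal is correct and follows essentially the same approach as the paper: rule induction on the type assignment derivation, appealing to the canonical forms lemmas (the paper states Canonical Boxed Forms and Canonical Indeterminate Forms in the appendix, the latter of which you effectively inline in case (v) by direct inversion on the $\isIndet{\cdot}$ rules), and the standard cast-calculus case analysis of \citet{DBLP:conf/snapl/SiekVCB15}. Your explicit confirmation that the bracketed $\maybePremise{\isFinal{\cdot}}$ premises are excluded in the metatheory, and your exhaustiveness check on the $(\htau_1,\htau_2)$ pairs in the \rulename{TACast} case, match what the Agda mechanization certifies.
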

\noindent
The key to establishing the progress theorem under a non-empty hole
context is to explicitly account for indeterminate forms,
i.e. those rooted at either a hole closure or a failed cast.
The proof relies on canonical forms lemmas stated in \ifarxiv Appendix~\ref{sec:canonical-forms}\else the \appendixName\fi.

\subsubsection{Complete Programs}
Although this paper focuses on running \emph{incomplete} programs, it helps
to know that the necessary machinery does not interfere with running
\emph{complete} programs, i.e. those with no type or expression holes.
%
\ifarxiv Appendix~\ref{sec:complete-programs} \else The \appendixName{} \fi defines the predicates~$\isComplete{\htau}$,
$\isComplete{\hexp}$, $\isComplete{\dexp}$ and~$\isComplete{\hGamma}$.
Of note, failed casts cannot appear in complete internal expressions.
The following theorem establishes that elaboration preserves program
completeness.

\begin{thm}[Complete Elaboration]
      If $\isComplete{\hGamma}$ and $\isComplete{\hexp}$
      and $\elabSyn{\hGamma}{\hexp}{\htau}{\dexp}{\Delta}$
      then $\isComplete{\htau}$ and $\isComplete{\dexp}$ and $\Delta = \emptyset$.
\end{thm}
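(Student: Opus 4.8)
The plan is to prove a mutually inductive strengthening of the statement that also covers analytic elaboration: \emph{if $\isComplete\hGamma$, $\isComplete\hexp$, $\isComplete\htau$, and $\elabAna{\hGamma}{\hexp}{\htau}{\dexp}{\htau'}{\Delta}$, then $\isComplete{\htau'}$, $\isComplete\dexp$, and $\Delta=\emptyset$.} The extra premise $\isComplete\htau$ in the analytic half is needed because \rulename{EALam} recurses on a subterm against a component of the (matched) goal type. I would first record three auxiliary facts. \emph{(i) Matching preserves completeness}: if $\isComplete\htau$ and $\arrmatch\htau{\tarr{\htau_1}{\htau_2}}$ then $\isComplete{\htau_1}$ and $\isComplete{\htau_2}$ --- \rulename{MAHole} cannot apply to a complete type, so the derivation must be \rulename{MAArr}, the identity, and $\isComplete{\tarr{\htau_1}{\htau_2}}$ unfolds componentwise. \emph{(ii) Complete consistency is equality}: if $\isComplete{\htau_1}$, $\isComplete{\htau_2}$, and $\tconsistent{\htau_1}{\htau_2}$ then $\htau_1=\htau_2$ --- by induction on the consistency derivation, where \rulename{TCHole1}/\rulename{TCHole2} are impossible (the hole type $\tehole$ is not complete), \rulename{TCRefl} is immediate, and \rulename{TCArr} follows from the induction hypotheses after decomposing completeness of the two arrow types. \emph{(iii) Complete synthesis}: if $\isComplete\hGamma$ and $\isComplete\hexp$ and $\hsyn\hGamma\hexp\htau$ then $\isComplete\htau$ --- by induction on the synthesis derivation, using context completeness for \rulename{SVar} and fact (i) for \rulename{SAp}; the hole rules are vacuous.

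Given these facts, most of the induction is routine bookkeeping. The four hole-elaboration rules \rulename{ESEHole}, \rulename{ESNEHole}, \rulename{EAEHole}, \rulename{EANEHole} are vacuous because $\hehole u$ and $\hhole\hexp u$ are never complete. The rules \rulename{ESConst}, \rulename{ESVar}, \rulename{ESLam}, and \rulename{EALam} go through directly: for \rulename{ESLam}, completeness of $\halam{x}{\htau_1}{\hexp}$ gives $\isComplete{\htau_1}$ and $\isComplete\hexp$, hence $\isComplete{(\hGamma, x : \htau_1)}$, and the synthetic induction hypothesis yields $\isComplete{\htau_2}$, $\isComplete\dexp$, and $\Delta=\emptyset$, so $\isComplete{\tarr{\htau_1}{\htau_2}}$ and $\isComplete{\halam{x}{\htau_1}{\dexp}}$; \rulename{EALam} is analogous, obtaining completeness of the matched arrow type's components from fact (i). For \rulename{EASubsume} the type reported by the conclusion is exactly the synthesized type $\htau'$, so the synthetic induction hypothesis delivers the conclusion with no use of the goal type.

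The substance is in the two cast-inserting rules \rulename{ESAp} and \rulename{ESAsc}, where facts (ii) and (iii) together with Theorem~\ref{thm:typed-elaboration} (Typed Elaboration) show that every inserted cast is an identity cast, hence complete. In \rulename{ESAp}, $\hexp_1$ and $\hexp_2$ are complete; fact (iii) on the synthesis premise gives $\isComplete{\htau_1}$, and fact (i) on $\arrmatch{\htau_1}{\tarr{\htau_2}{\htau}}$ gives $\isComplete{\htau_2}$ and $\isComplete\htau$; the analytic induction hypothesis applied to the two analytic-elaboration premises (with goal types $\tarr{\htau_2}{\htau}$ and $\htau_2$, both complete) yields $\isComplete{\htau_1'}$, $\isComplete{\htau_2'}$, $\isComplete{\dexp_1}$, $\isComplete{\dexp_2}$, and $\Delta_1=\Delta_2=\emptyset$; Theorem~\ref{thm:typed-elaboration} gives $\tconsistent{\tarr{\htau_2}{\htau}}{\htau_1'}$ and $\tconsistent{\htau_2}{\htau_2'}$, so fact (ii) forces $\htau_1'=\tarr{\htau_2}{\htau}$ and $\htau_2'=\htau_2$; thus the inserted casts $\dcasttwo{\dexp_1}{\tarr{\htau_2}{\htau}}{\tarr{\htau_2}{\htau}}$ and $\dcasttwo{\dexp_2}{\htau_2}{\htau_2}$ are identity casts over complete subterms at complete types, the whole application is complete, $\Delta=\Delta_1\cup\Delta_2=\emptyset$, and $\isComplete\htau$ holds. \rulename{ESAsc} follows the same pattern with the ascription annotation in the role of the goal type.

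The main obstacle I anticipate is lining up the completeness predicate on internal expressions: one must confirm that its clause for casts admits identity casts $\dcasttwo{\dexp}{\htau}{\htau}$ (with $\dexp$ and $\htau$ complete) while excluding non-identity casts and failed casts --- this is precisely what makes facts (ii) and (iii) load-bearing. A secondary point requiring care is the strengthening of the induction hypothesis for the analytic judgement so that the goal type is known complete; without it the \rulename{EALam} case cannot be closed.
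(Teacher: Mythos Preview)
Your proposal is correct and follows the natural mutual-induction structure; the paper itself does not spell out a proof in the text (it defers to the Agda mechanization), so there is no detailed argument to compare against line-by-line. Your strengthening to cover analytic elaboration with the added premise $\isComplete{\htau}$ is exactly right, and your auxiliary facts (i)--(iii) are the expected ingredients; fact (ii) is precisely the paper's Complete Consistency lemma in the appendix.

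One simplification is available in the cast-inserting cases. The paper's completeness predicate for internal expressions (rule \rulename{DCCast} in the appendix) declares $\dcasttwo{\dexp}{\htau_1}{\htau_2}$ complete whenever $\dexp$, $\htau_1$, and $\htau_2$ are each complete---it does \emph{not} require $\htau_1=\htau_2$. So in \rulename{ESAp} and \rulename{ESAsc}, once you have established that $\dexp_1$, $\dexp_2$, $\htau_1'$, $\htau_2'$, $\tarr{\htau_2}{\htau}$, and $\htau_2$ are all complete (which you do), the casts are complete immediately by \rulename{DCCast}; the further step of invoking fact (ii) to collapse them to identity casts is true but unnecessary for the stated conclusion. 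This dissolves the ``main obstacle'' you anticipated: you do not need to confirm that the cast clause excludes non-identity casts, because it does not. Fact (ii) remains interesting---it is what underlies the paper's separate Complete Casts lemma---but it is not load-bearing for Complete Elaboration itself.
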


The following preservation theorem establishes that stepping preserves
program completeness.
\begin{thm}[Complete Preservation]
  If $\hasType{\hDelta}{\emptyset}{\dexp}{\htau}$
  and $\isComplete{\dexp}$
  and $\stepsToD{}{\dexp}{\dexp'}$
  then $\hasType{\hDelta}{\emptyset}{\dexp'}{\htau}$
  and $\isComplete{\dexp'}$.
\end{thm}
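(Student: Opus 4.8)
The plan is to read off the type-assignment half, $\hasType{\hDelta}{\emptyset}{\dexp'}{\htau}$, directly from the general Preservation theorem, and to concentrate the real work on $\isComplete{\dexp'}$. Since the step is by Rule~\rulename{Step}, there is an evaluation context $\evalctx$ and a redex $\dexp_0$ with $\selectEvalCtx{\dexp}{\evalctx}{\dexp_0}$, $\reducesE{}{\dexp_0}{\dexp_0'}$, and $\selectEvalCtx{\dexp'}{\evalctx}{\dexp_0'}$. First I would record two routine facts about how $\isComplete{\cdot}$ interacts with evaluation-context decomposition, each proved by induction on the decomposition derivations of Fig.~\ref{fig:step}: \emph{(i)} if $\isComplete{\dexp}$ and $\selectEvalCtx{\dexp}{\evalctx}{\dexp_0}$ then $\isComplete{\dexp_0}$; and \emph{(ii)} if $\isComplete{\dexp}$, $\selectEvalCtx{\dexp}{\evalctx}{\dexp_0}$, $\selectEvalCtx{\dexp'}{\evalctx}{\dexp_0'}$, and $\isComplete{\dexp_0'}$, then $\isComplete{\dexp'}$ (both decompositions traverse the same context $\evalctx$, so the induction is parallel and straightforward). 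Both use only that $\isComplete{\cdot}$ on internal expressions is the evident congruence that additionally forbids the hole type $\tehole$ anywhere, every hole-closure form, and every failed-cast form.

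With these in hand, it remains to prove a completeness-preservation lemma for instruction transitions: if $\isComplete{\dexp_0}$ and $\reducesE{}{\dexp_0}{\dexp_0'}$ then $\isComplete{\dexp_0'}$, by cases on the rules of Fig.~\ref{fig:instruction-transitions}. The definition of completeness makes most cases vacuous: Rules~\rulename{ITCastSucceed}, \rulename{ITCastFail}, \rulename{ITGround}, and \rulename{ITExpand} each require a cast one of whose types is $\tehole$, so their left-hand sides are never complete (and \rulename{ITCastFail} would in any case yield a failed cast, which completeness forbids). The surviving cases are \rulename{ITCastId}, where $\dexp_0' = \dexp$ is a subterm of the complete $\dcasttwo{\dexp}{\htau}{\htau}$; \rulename{ITApCast}, where $\dexp_0'$ merely reshuffles the arrow types $\tarr{\htau_1}{\htau_2}$ and $\tarr{\htau_1'}{\htau_2'}$---which are hole-free since the redex is complete---around the complete subterms $\dexp_1$ and $\dexp_2$, so $\isComplete{\dexp_0'}$ by inspection (indeed, since consistency of hole-free types is equality, a well-typed complete term contains only identity casts and \rulename{ITApCast} never fires, though the structural check does not rely on this); and \rulename{ITLam}, where $\dexp_0' = \substitute{\dexp_2}{x}{\dexp_1}$ with $\dexp_1,\dexp_2$ complete.

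The \rulename{ITLam} case needs one last auxiliary lemma: substitution preserves completeness, i.e. if $\isComplete{\dexp_1}$ and $\isComplete{\dexp_2}$ then $\isComplete{\substitute{\dexp_2}{x}{\dexp_1}}$. This goes by structural induction on $\dexp_1$ against the definition of capture-avoiding substitution from the \appendixName; the only non-congruence clauses of substitution are those pushing into hole closures and their environments, and these never arise because a complete $\dexp_1$ contains no hole closures. Assembling: \emph{(i)} gives $\isComplete{\dexp_0}$ from $\isComplete{\dexp}$, the instruction-transition lemma gives $\isComplete{\dexp_0'}$, \emph{(ii)} gives $\isComplete{\dexp'}$, and general Preservation supplies $\hasType{\hDelta}{\emptyset}{\dexp'}{\htau}$.

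I do not expect a genuine obstacle; the argument is bookkeeping once the completeness predicate is pinned down. The only spots that want care are confirming that the four $\tehole$-involving cast rules really cannot fire on a complete redex (so the proof never has to reason about hole-typed casts inside complete programs) and checking that substitution on complete expressions stays inside the congruence fragment---both immediate from the definitions, but easy to gloss over.
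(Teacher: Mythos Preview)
Your proposal is correct. The paper itself does not spell out a proof of Complete Preservation in prose; it states the theorem and defers to the Agda mechanization (Sec.~\ref{sec:agda-mechanization}). Your decomposition---invoking general Preservation for the typing half, then establishing that completeness is preserved by evaluation-context decomposition and recomposition and by each instruction transition, with a substitution-preserves-completeness lemma for \rulename{ITLam}---is the natural route and is essentially what a mechanization would do.

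One small remark: you observe parenthetically that \rulename{ITApCast} cannot actually fire on a well-typed complete redex because Lemma Complete Casts forces every cast in such a term to be an identity cast, contradicting the inequality premise. That observation is correct and is in fact the cleaner argument for that case; the paper records exactly this lemma in \ifarxiv Appendix~\ref{sec:complete-programs}\else the \appendixName\fi. Either your structural argument or the vacuity argument via Complete Casts suffices, so there is no gap either way.
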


The following progress theorem establishes that evaluating a complete
program always results in classic values, not boxed values nor
indeterminate forms.
\begin{thm}[Complete Progress]
  If $\hasType{\hDelta}{\emptyset}{\dexp}{\htau}$ and $\isComplete{\dexp}$
  then either there exists a $\dexp'$ such that
  $\stepsToD{}{\dexp}{\dexp'}$, or $\isValue{\dexp}$.
\end{thm}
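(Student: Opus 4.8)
The plan is to obtain Complete Progress as a refinement of the general Progress theorem. First I would apply Progress to $\hasType{\hDelta}{\emptyset}{\dexp}{\htau}$, which yields three cases: either $\stepsToD{}{\dexp}{\dexp'}$ for some $\dexp'$, or $\isBoxedValue{\dexp}$, or $\isIndet{\dexp}$. In the first case the conclusion holds immediately. So the remaining work is to use the extra hypothesis $\isComplete{\dexp}$ to rule out the indeterminate case entirely and to collapse the boxed-value case to $\isValue{\dexp}$.

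For the indeterminate case, I would establish the auxiliary fact anticipated informally just after Fig.~\ref{fig:isIndet}: that $\isIndet{\dexp}$ implies $\dexp$ is not complete. This proceeds by induction on the derivation of $\isIndet{\dexp}$. Rules \rulename{IEHole} and \rulename{INEHole} produce (empty or non-empty) hole closures, which are incomplete by definition; \rulename{ICastGroundHole} and \rulename{ICastHoleGround} place $\tehole$ in a type annotation, so the term is again incomplete; \rulename{IFailedCast} produces a failed cast, which is incomplete by the stipulation recorded in the \appendixName{} (failed casts never occur in complete internal expressions); and \rulename{IAp} and \rulename{ICastArr} follow from the inductive hypothesis applied to the indeterminate subexpression, since completeness of a term requires completeness of its subterms. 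Together with $\isComplete{\dexp}$ this contradicts $\isIndet{\dexp}$, so this case cannot arise.

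For the boxed-value case, I would argue by cases on the last rule of the derivation of $\isBoxedValue{\dexp}$ that $\isComplete{\dexp}$ forces $\isValue{\dexp}$. Rule \rulename{BVVal} gives $\isValue{\dexp}$ directly. Rule \rulename{BVHoleCast} is impossible: its conclusion is a cast whose target type is $\tehole$, which is not a complete type. Rule \rulename{BVArrCast} is also impossible: there $\dexp = \dcasttwo{\dexp_0}{\tarr{\htau_1}{\htau_2}}{\tarr{\htau_3}{\htau_4}}$ with $\tarr{\htau_1}{\htau_2}\neq\tarr{\htau_3}{\htau_4}$, and since $\dexp$ is both complete and well-typed, inverting the only applicable type-assignment rule \rulename{TACast} gives $\tconsistent{\tarr{\htau_1}{\htau_2}}{\tarr{\htau_3}{\htau_4}}$, while completeness of $\dexp$ makes both of these arrow types hole-free; consistency between two hole-free types is equality (Fig.~\ref{fig:tconsistent}), contradicting $\tarr{\htau_1}{\htau_2}\neq\tarr{\htau_3}{\htau_4}$. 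Hence $\isValue{\dexp}$, as required. As a sanity check, this is consistent with the fact that elaboration of a complete external expression inserts only identity casts, since by Theorem~\ref{thm:typed-elaboration} each inserted cast relates consistent types and consistent hole-free types are equal.

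I expect the main obstacle to be precisely the \rulename{BVArrCast} case: pinning down why a complete internal expression cannot contain a cast between two distinct arrow types. The argument above routes through well-typedness and the degeneration of consistency to equality on hole-free types; the only part that leans on the precise definitions in the \appendixName{} is that the two type annotations on a complete cast are themselves complete (or, equivalently, that the definition of $\isComplete{\dexp}$ simply forbids non-identity casts, in which case the case is immediate). Everything else is a routine structural induction, and the completeness-preservation results needed elsewhere in the development are not required here.
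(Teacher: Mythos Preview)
Your proposal is correct and matches the paper's intended approach: apply the general Progress theorem, discharge the indeterminate case using the observation (stated just after Fig.~\ref{fig:isIndet}) that no complete term is indeterminate, and collapse the boxed-value case to $\isValue{\dexp}$ via exactly the Complete Consistency and Complete Casts lemmas recorded in the \appendixName. Your handling of \rulename{BVArrCast} is precisely the content of the Complete Casts lemma, so nothing further is needed there.
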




\vspace{-3px}
\subsection{Agda Mechanization}
\label{sec:agda-mechanization}
\vspace{-2px}

The archived artifact includes our Agda
mechanization  \cite{norell2009dependently,norell:thesis,Aydemir:2005fk}
of the semantics and metatheory of \HazelnutLive,
including all of the theorems stated above and necessary lemmas.
%
%
%
%
%
Our approach is standard: we model judgements as
inductive datatypes, and rules as dependently typed constructors of these judgements.
We adopt Barendregt's convention for bound variables \cite{urban,barendregt84:_lambda_calculus} and hole names, and avoid certain other complications related to substitution by enforcing the requirement that all bound variables in a term are unique when convenient (this requirement can always be discharged by alpha-variation). We encode typing
contexts and hole contexts using metafunctions.
To support this encoding choice, we postulate function extensionality (which is independent of Agda's axioms) \cite{awodey2012inductive}. We encode finite substitutions as an inductive datatype with a base case representing the identity substitution and an inductive case that records a single substitution. Every finite substitution can be represented this way. This makes it easier for Agda to see why certain inductions that we perform are well-founded.  
The documentation provided with the mechanization has more details.

\vspace{-3px}
\subsection{Implementation and Continuity}\label{sec:implementation}
\vspace{-2px}

The \Hazel implementation described in Sec.~\ref{sec:examples}
includes an unoptimized interpreter, written in OCaml, that implements the semantics as described
in this section, with some simple extensions. As with many full-scale systems, there is not currently a formal
specification for the full \Hazel language, but \ifarxiv Appendix~\ref{sec:extensions} \else the \appendixName~\fi 
discusses how the standard approach for deriving a ``gradualized'' version of a
language construct provides most of the necessary scaffolding \cite{DBLP:conf/popl/CiminiS16}, and provides some examples (sum types and numbers).




The editor component of the \Hazel implementation is derived
from the structure editor calculus of~\Hazelnut, but with support for more natural cursor-based movement and infix operator sequences (the details of which are beyond the scope of this paper). It exposes a language of structured
edit actions that automatically insert empty and non-empty holes as necessary
to guarantee that every edit state has some (possibly incomplete) type. This corresponds to the top-level Sensibility invariant established for the \Hazelnut calculus by \citet{popl-paper}, reproduced below:
\begin{prop}[Sensibility]
  \label{thrm:sensibility}
  If $\hsyn{\Gamma}{\removeSel{\zexp}}{\htau}$ and
    $\performSyn{\Gamma}{\zexp}{\htau}{\alpha}{\zexp'}{\tau'}$ then
    $\hsyn{\hGamma}{\removeSel{\zexp'}}{\htau'}$.
\end{prop}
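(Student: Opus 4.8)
The plan is to follow the established \Hazelnut{} metatheory and prove this not in isolation but jointly with the analytic counterpart --- if $\hana{\hGamma}{\removeSel{\zexp}}{\htau}$ and $\performAna{\hGamma}{\zexp}{\htau}{\alpha}{\zexp'}$ then $\hana{\hGamma}{\removeSel{\zexp'}}{\htau}$ --- together with a (vacuous) statement about the type-action judgement $\performTyp{\ztau}{\alpha}{\ztau'}$, namely that $\removeSel{\ztau'}$ is again a well-formed type, which is immediate from the grammar. The three statements are proved by a single simultaneous rule induction on the derivations of the action-performance judgements. The mutual dependence is genuine: some zipper rules switch between synthetic and analytic mode, the analytic ``subsumption'' action rule appeals to the synthetic case, and expression actions in annotation and ascription positions recurse into the type-action judgement.

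The movement actions are dispatched first via an auxiliary lemma: $\performMove{\zexp}{\aMove{\delta}}{\zexp'}$ implies $\removeSel{\zexp} = \removeSel{\zexp'}$, and similarly $\performTyp{\ztau}{\aMove{\delta}}{\ztau'}$ implies $\removeSel{\ztau} = \removeSel{\ztau'}$, each by induction on the movement derivation. Since the synthetic movement rule returns the same type ($\tau' = \htau$), these cases of the theorem follow immediately from the lemma and the hypothesis; analytic movement is identical.

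The remaining base cases --- deletion, the various construction actions, and finishing --- are handled by direct case analysis against the external type system of Figures~\ref{fig:bidirectional-typing}--\ref{fig:tconsistent}, using standard facts: an empty hole synthesizes $\tehole$ and, by subsumption, analyzes against any type; $\tehole$ is consistent with every type; $\arrmatch{\tehole}{\tarr{\tehole}{\tehole}}$; and $\aFinish$ replaces $\hhole{\hexp}{u}$ by $\hexp$ exactly when $\hexp$ now synthesizes a type consistent with the expected one. In the analytic subsumption action rule, where a term occurring in analytic position is acted on synthetically, one invokes the synthetic induction hypothesis on $\zexp$ to re-synthesize a type for $\removeSel{\zexp'}$ and then re-checks consistency; this is where the two inductive statements feed one another.

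The bulk of the argument, and the main obstacle, is the zipper cases. For each such rule one applies the appropriate induction hypothesis to the sub-derivation to learn that the erasure of the focused subterm remains well-typed --- possibly at a \emph{different} type --- and then rebuilds the typing derivation for the whole erased expression, using the fact that erasure commutes with the zipper constructors. The delicate point is exactly this possible change of type: an action buried inside a term can alter what a subterm synthesizes (for instance, constructing a lambda, or editing an annotation), so one must verify that the enclosing elimination or introduction form still type-checks. The action rules are deliberately equipped with the side-conditions --- consistency checks, arrow (and, in the extended language, other) matching premises, and re-analysis premises for subterms affected by an annotation change --- that make this go through, so no new idea is required beyond the design of the action calculus; the work is the careful, exhaustive reassembly of bidirectional typing derivations through every zipper frame (function position, argument position, ascription, non-empty hole, lambda body, lambda annotation, and likewise for any additional syntactic forms).
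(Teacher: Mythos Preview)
Your outline is a faithful sketch of how Sensibility is established in the original \Hazelnut{} paper, but note that the present paper does not itself prove this proposition: it is explicitly \emph{reproduced} from \citet{popl-paper} as background, and the only role it plays here is as an input to Corollary~\ref{thrm:continuity}. There is therefore no proof in this paper to compare your proposal against. Your mutual-induction strategy, the movement-erasure lemma, and the zipper reassembly are exactly the ingredients used in the cited work (and in its Agda mechanization), so as a reconstruction of that argument your proposal is on target.
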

\noindent
Here, $\zexp$ is an editor state (an expression with a cursor), and $\removeSel{\zexp}$ drops the cursor, producing an expression ($e$ in this paper). So in words, ``if, ignoring the cursor, the editor state, $\removeSel{\zexp}$, initially has type $\htau$ and we perform an edit action $\alpha$ on it, then the resulting editor state, $\removeSel{\zexp'}$, will have type $\htau'$''.

By composing this Sensibility property with the \Property{Elaborability},
\Property{Typed Elaboration}, \Property{Progress} and
\Property{Preservation} properties from this section, we establish a
uniquely powerful Continuity invariant:
\begin{corol}[Continuity]
  \label{thrm:continuity}
  If $\hsyn{\emptyset}{\removeSel{\zexp}}{\htau}$ and
    $\performSyn{\emptyset}{\zexp}{\htau}{\alpha}{\zexp'}{\tau'}$ then
    $\elabSyn{\emptyset}{\removeSel{\zexp'}}{\htau'}{\dexp}{\Delta}$
      for some $\dexp$ and $\Delta$ such that
$\hasType{\Delta}{\emptyset}{\dexp}{\htau'}$
and either
  (a) $\stepsToD{}{\dexp}{\dexp'}$ for some $d'$ such that $\hasType{\Delta}{\emptyset}{\dexp'}{\htau'}$; or
  (b) $\isBoxedValue{\dexp}$ or
  (c) $\isIndet{\dexp}$.
\end{corol}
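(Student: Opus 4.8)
The plan is to obtain Continuity purely by composing results already in hand, threading a single hole context $\Delta$ through the chain. First I would invoke the Sensibility property (Prop.~\ref{thrm:sensibility}) on the two hypotheses $\hsyn{\emptyset}{\removeSel{\zexp}}{\htau}$ and $\performSyn{\emptyset}{\zexp}{\htau}{\alpha}{\zexp'}{\htau'}$ to conclude that the post-edit state still synthesizes a (possibly incomplete) type, i.e. $\hsyn{\emptyset}{\removeSel{\zexp'}}{\htau'}$. This is the only step that concerns the edit calculus; everything afterward is about \HazelnutLive proper.

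Next I would apply the synthetic case of Elaborability (Thm.~\ref{thm:elaborability}) to $\hsyn{\emptyset}{\removeSel{\zexp'}}{\htau'}$, which yields some internal expression $\dexp$ and some hole context $\Delta$ with $\elabSyn{\emptyset}{\removeSel{\zexp'}}{\htau'}{\dexp}{\Delta}$ --- exactly the existential witness demanded by the corollary. Feeding that elaboration derivation into the synthetic case of Typed Elaboration (Thm.~\ref{thm:typed-elaboration}) gives $\hasType{\Delta}{\emptyset}{\dexp}{\htau'}$, which establishes the conclusion common to all three cases. I would then run Progress on $\hasType{\Delta}{\emptyset}{\dexp}{\htau'}$ to split into cases (a), (b), (c): either $\dexp$ steps, or it is a boxed value, or it is indeterminate. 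Cases (b) and (c) are then immediate. In case (a), where $\stepsToD{}{\dexp}{\dexp'}$ for some $\dexp'$, I would finish by applying Preservation to obtain $\hasType{\Delta}{\emptyset}{\dexp'}{\htau'}$, which is precisely the extra clause case (a) requires.

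Since every ingredient is already proved, there is no deep obstacle; the only care needed is bookkeeping. One must check that the $\Delta$ produced by Elaborability is the very same context under which Typed Elaboration, Progress, and Preservation are all stated --- which it is, because those theorems are parametric in an arbitrary hole context and we simply instantiate with the one elaboration hands us. It is also worth noting that the step relation in the statements of Progress and Preservation is the unadorned $\stepsToD{}{\dexp}{\dexp'}$ (the $\Delta$ decoration on $\mapsto$ written elsewhere is vestigial), so the case-(a) witness lines up directly, and that the structural weakening of $\Delta$ needed inside the proof of Preservation --- because hole closures can disappear during evaluation --- is not a further concern here. The closest thing to a genuine subtlety is foundational rather than technical: Sensibility is a theorem of the \Hazelnut edit calculus while the remaining properties are theorems of \HazelnutLive, so the corollary is properly a statement about the ``putative combined calculus'' obtained by juxtaposing the two; the composition is sound precisely because Sensibility's conclusion, $\hsyn{\emptyset}{\removeSel{\zexp'}}{\htau'}$, is exactly the hypothesis form that Elaborability consumes.
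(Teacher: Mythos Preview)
Your proposal is correct and matches the paper's own argument exactly: the paper states that Continuity follows by composing Sensibility with Elaborability, Typed Elaboration, Progress, and Preservation, which is precisely the chain you spell out. Your additional remarks about the shared $\Delta$, the vestigial decoration on $\mapsto$, and the ``putative combined calculus'' are accurate and simply make explicit what the paper leaves implicit.
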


This addresses the gap problem: \emph{every} editor state has a {static
meaning} (so editor services like the type inspector from Fig.~\ref{fig:qsort-type-inspector} are always available) and a non-trivial {dynamic meaning} (a result is always available, evaluation does not stop when a hole or cast failure is encountered, and editor services that rely on hole closures, like the live context inspector from Fig.~\ref{fig:grades-sidebar}, are always available).

In settings where the editor does not maintain this Sensibility invariant, but where programmers can manually insert holes, our approach still helps to reduce the severity of the gap problem, i.e. \emph{more} editor states are dynamically meaningful, even if not \emph{all} of them are.
%






\newcommand{\commutativitySec}{A Contextual Modal Interpretation of Fill-and-Resume}
\section{\commutativitySec}
\label{sec:resumption}

When the programmer performs one or more edit actions to fill in an expression hole in the program, a new result must be computed, ideally quickly \cite{DBLP:conf/icse/Tanimoto13,DBLP:journals/vlc/Tanimoto90}. Na\"ively, the system would need to compute the result ``from scratch'' on each such edit. For small exploratory programming tasks, recomputation is acceptable, but in cases where a large amount of computation might occur, e.g. in data science tasks, a more efficient approach is to resume evaluation from where it left off after an edit that amounts to hole filling. This section develops a foundational account of this feature, which we call \emph{fill-and-resume}. This approach is complementary to, but distinct from, incremental computing (which is concerned with changes in input, not code insertions)~\cite{Hammer2014}.

Formally,
the key idea is to interpret hole environments as \emph{delayed substitutions}. This is the same interpretation suggested for metavariable closures in contextual modal
type theory (CMTT) by \citet{Nanevski2008}.
%
%
\Figref{fig:substitution} defines the hole filling operation~$\instantiate{d}{u}{d'}$
based on the contextual substitution operation of~CMTT.
Unlike usual notions of capture-avoiding substitution,
hole filling imposes no condition on the binder when passing into the
body of a lambda expression---the expression that fills a hole can, of
course, refer to variables in scope where the hole appears.
When hole filling encounters an empty closure for the hole being
instantiated, $\instantiate{d}{u}{\dehole{u}{\sigma}{}}$, the result
is $[\instantiate{d}{u}{\sigma}]d$.
That is, we apply the delayed substitution to the fill expression~$d$
after first recursively filling any instances of hole~$u$ in~$\sigma$.
Hole filling for non-empty closures is analogous, where it discards
the previously-enveloped expression.
%
%
%
This case shows why we cannot interpret a non-empty hole as an empty
hole of arrow type applied to the enveloped expression---the hole
filling operation would not operate as expected under this
interpretation.


\begin{figure}
\small
\judgbox
  {\instantiate{\dexp}{u}{\dexp'} = \dexp''}
  {$\dexp''$ is obtained by filling hole $u$ with $\dexp$ in $\dexp'$}

\vsepRule

\judgbox
  {\instantiate{\dexp}{u}{\sigma} = \sigma'}
  {$\sigma'$ is obtained by filling hole $u$ with $\dexp$ in $\sigma$}
\[
\begin{array}{lcll}
\instantiate{\dexp}{u}{c}
&=&
c\\
{\instantiate{\dexp}{u}{x}}
&=&
x\\
{\instantiate{\dexp}{u}{\halam{x}{\htau}{\dexp'}}}
&=&
{\halam{x}{\htau}{\instantiate{\dexp}{u}{\dexp'}}}\\
{\instantiate{\dexp}{u}{\dap{d_1}{d_2}}}
&=&
{\dapP{\instantiate{\dexp}{u}{d_1}}{\instantiate{\dexp}{u}{d_2}}}
\\
\instantiate{\dexp}{u}{\dehole{u}{\subst}{}}
&=&
[\instantiate{\dexp}{u}{\subst}]\dexp
\\
\instantiate{\dexp}{u}{\dehole{v}{\subst}{}}
&=&
\dehole{v}{\instantiate{\dexp}{u}{\subst}}{}
& \text{when $u \neq v$}
\\
\instantiate{\dexp}{u}{\dhole{\dexp'}{u}{\subst}{}}
&=&
[\instantiate{\dexp}{u}{\subst}]\dexp
\\
\instantiate{\dexp}{u}{\dhole{\dexp'}{v}{\subst}{}}
&=&
\dhole{\instantiate{\dexp}{u}{\dexp'}}{v}{\instantiate{\dexp}{u}{\subst}}{}
& \text{when $u \neq v$}
\\
%
\instantiate{\dexp}{u}{\dcasttwo{\dexp'}{\htau}{\htau'}}
&=&
\dcasttwo{(\instantiate{\dexp}{u}{\dexp'})}{\htau}{\htau'}
\\
\instantiate{\dexp}{u}{\dcastfail{\dexp'}{\htau_1}{\htau_2}}
&=&
\dcastfail{(\instantiate{\dexp}{u}{\dexp'})}{\htau_1}{\htau_2}\\
\instantiate{\dexp}{u}{\cdot} 
&=&
\cdot\\
\instantiate{\dexp}{u}{\sigma, \dexp'/x} 
&=&
\instantiate{\dexp}{u}{\sigma}, \instantiate{\dexp}{u}\dexp'/x
\end{array}
\]
\CaptionLabel{Hole Filling}{fig:substitution}
\vspace{-8px}
\end{figure}

The following theorem characterizes the static behavior of hole filling.
\begin{thm}[Filling]
  If $\hasType{\hDelta, \Dbinding{u}{\hGamma'}{\htau'}}{\hGamma}{\dexp}{\tau}$
  and $\hasType{\hDelta}{\hGamma'}{\dexp'}{\htau'}$
  then $\hasType{\hDelta}{\hGamma}{\instantiate{\dexp'}{u}{\dexp}}{\tau}$.
\end{thm}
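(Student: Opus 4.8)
The plan is to prove the \Property{Filling} theorem by induction on the structure of $\dexp$ (equivalently, on the derivation of $\hasType{\hDelta, \Dbinding{u}{\hGamma'}{\htau'}}{\hGamma}{\dexp}{\tau}$, which is essentially syntax-directed), strengthened to a simultaneous statement that also covers substitution typing: if $\hasType{\hDelta, \Dbinding{u}{\hGamma'}{\htau'}}{\hGamma}{\sigma}{\hGamma''}$ and $\hasType{\hDelta}{\hGamma'}{\dexp'}{\htau'}$ then $\hasType{\hDelta}{\hGamma}{\instantiate{\dexp'}{u}{\sigma}}{\hGamma''}$. The two statements are mutually dependent: the hole-closure rules \rulename{TAEHole} and \rulename{TANEHole} carry a substitution-typing premise, while substitution typing is in turn defined by per-entry expression typing, so a single induction must service both. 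The substitution part follows immediately by induction on the (finite) codomain context $\hGamma''$, using the defining equations $\instantiate{\dexp'}{u}{\cdot} = \cdot$ and $\instantiate{\dexp'}{u}{\sigma, d/x} = \instantiate{\dexp'}{u}{\sigma}, \instantiate{\dexp'}{u}{d}/x$ together with the expression part of the hypothesis.

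The congruence cases --- \rulename{TAConst}, \rulename{TAVar}, \rulename{TALam}, \rulename{TAAp}, \rulename{TACast}, \rulename{TAFailedCast}, and the $v \neq u$ instances of \rulename{TAEHole}/\rulename{TANEHole} --- are routine: hole filling commutes with each term former by the defining equations of Fig.~\ref{fig:substitution}, and the hole context $\hDelta$ retains every binding for names other than $u$, so the derivation is simply reassembled from the inductive hypotheses (applied to the subterms and, for the closure cases, to $\sigma$). For \rulename{TALam} no side condition on the binder is needed, because hole filling passes under binders unconditionally; this is sound since $\dexp'$ is typed in $\hGamma'$, which by Barendregt's convention shares no bound-variable names with the subterms being traversed (the mechanization discharges this obligation by alpha-variation).

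The crux is the case where $\dexp$ is a closure for the very hole being filled, i.e. $\dexp = \dehole{u}{\sigma}{}$ via \rulename{TAEHole}, and symmetrically $\dexp = \dhole{\dexp_0}{u}{\sigma}{}$ via \rulename{TANEHole}, whose enveloped expression $\dexp_0$ is discarded by the filling operation. Inverting the rule yields $\Dbinding{u}{\hGamma_u}{\tau} \in \hDelta, \Dbinding{u}{\hGamma'}{\htau'}$ and $\hasType{\hDelta, \Dbinding{u}{\hGamma'}{\htau'}}{\hGamma}{\sigma}{\hGamma_u}$; since hole names in a well-formed hole context are unique, this binding must be the one we adjoined, so $\hGamma_u = \hGamma'$ and $\tau = \htau'$. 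The substitution part of the induction hypothesis gives $\hasType{\hDelta}{\hGamma}{\instantiate{\dexp'}{u}{\sigma}}{\hGamma'}$. Since $\instantiate{\dexp'}{u}{\dehole{u}{\sigma}{}} = [\instantiate{\dexp'}{u}{\sigma}]\dexp'$, it remains to show $\hasType{\hDelta}{\hGamma}{[\instantiate{\dexp'}{u}{\sigma}]\dexp'}{\htau'}$ from $\hasType{\hDelta}{\hGamma'}{\dexp'}{\htau'}$. This is precisely a \emph{simultaneous} substitution lemma: a substitution $\sigma'$ with $\hasType{\hDelta}{\hGamma}{\sigma'}{\hGamma'}$ sends a term of type $\htau'$ under $\hGamma'$ to one of type $\htau'$ under $\hGamma$. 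I would obtain it by iterating the single-variable substitution lemma already stated in the appendix --- one application per entry of the finite substitution, relying on the distinctness of the entries' domains and the freshness convention (so no later substitution re-captures a variable introduced by an earlier one) so that the iterated single substitutions indeed compute $[\instantiate{\dexp'}{u}{\sigma}]\dexp'$ --- or prove it directly by a parallel induction on $\dexp'$. No separate weakening of $\hDelta$ is needed: the only rules that ever consult the $u$-binding are the two $u$-closure rules, which are exactly the cases just handled.

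The main obstacle I anticipate is not any individual case but the bookkeeping around the two notions of substitution: organizing the mutual induction so the substitution-typing sub-derivations are in scope, and establishing the simultaneous substitution lemma with precisely the freshness hypotheses needed for the iterated single substitutions to coincide with applying $\instantiate{\dexp'}{u}{\sigma}$ as a simultaneous substitution. A secondary subtlety is the uniqueness-of-hole-names argument that forces $\tau = \htau'$ in the $u$-closure cases, which rests on the well-formedness of hole contexts and, in the mechanization, on how $\hDelta$ is represented.
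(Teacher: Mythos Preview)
Your proposal is correct and follows essentially the same approach as the paper: the appendix states the lemma simultaneously for expressions and substitutions and proves it by rule induction on the first typing assumption, ``appealing to the Substitution Lemma as necessary.'' You have simply unpacked what that terse sentence means---in particular, identifying that the $u$-closure case requires a simultaneous substitution lemma (which you propose to obtain by iterating the single-variable one)---which is exactly the intended content.
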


Dynamically, the correctness of fill-and-resume depends on
the following \emph{commutativity} property: if there is some sequence of
steps that go from $d_1$ to $d_2$, then one can fill a hole in these
terms at \emph{either} the beginning or at the end of that step
sequence.
We write $\multiStepsTo{\dexp_1}{\dexp_2}$ for the reflexive,
transitive closure of stepping (see \ifarxiv Appendix~\ref{sec:multi-step}\else the \appendixName\fi).
%
\begin{thm}[Commutativity]
  If $\hasType{\hDelta, \Dbinding{u}{\hGamma'}{\htau'}}{\emptyset}{\dexp_1}{\tau}$
  and $\hasType{\hDelta}{\hGamma'}{\dexp'}{\htau'}$ and $\multiStepsTo{\dexp_1}{\dexp_2}$
  then $\multiStepsTo{\instantiate{\dexp'}{u}{\dexp_1}}
                     {\instantiate{\dexp'}{u}{\dexp_2}}$.
\end{thm}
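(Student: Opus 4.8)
The plan is to prove \Property{Commutativity} by induction on the derivation of $\multiStepsTo{\dexp_1}{\dexp_2}$, reducing it to a \emph{single-step} version: if $\hasType{\hDelta, \Dbinding{u}{\hGamma'}{\htau'}}{\emptyset}{\dexp}{\tau}$ and $\hasType{\hDelta}{\hGamma'}{\dexp'}{\htau'}$ and $\stepsToD{}{\dexp}{\dexp_a}$, then $\multiStepsTo{\instantiate{\dexp'}{u}{\dexp}}{\instantiate{\dexp'}{u}{\dexp_a}}$ --- in fact in either zero or one step. The reflexive base case of the outer induction is immediate since $\instantiate{\dexp'}{u}{\dexp_1} = \instantiate{\dexp'}{u}{\dexp_2}$. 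In the inductive (step-then-rest) case I would apply the single-step lemma to the first step, invoke \Property{Preservation} to re-establish $\hasType{\hDelta, \Dbinding{u}{\hGamma'}{\htau'}}{\emptyset}{\dexp_a}{\tau}$ (note that $\dexp$ is typed under a possibly non-empty hole context, which is exactly the generality \Property{Preservation} provides), apply the induction hypothesis to the remaining steps, and conclude by transitivity of $\multiStepsTo{\cdot}{\cdot}$. The \Property{Filling} theorem additionally guarantees the filled terms are well-typed throughout.

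To prove the single-step lemma I would decompose $\stepsToD{}{\dexp}{\dexp_a}$ via Rule~\rulename{Step} into $\selectEvalCtx{\dexp}{\evalctx}{\dexp_0}$, $\reducesE{}{\dexp_0}{\dexp_0'}$, $\selectEvalCtx{\dexp_a}{\evalctx}{\dexp_0'}$, and establish two auxiliary facts. First, hole filling commutes with evaluation-context decomposition, proved by induction on $\evalctx$: if $\evalctx$ has no frame of the form $\dhole{\cdot}{u}{\cdot}{}$ on its spine, then $\instantiate{\dexp'}{u}{\evalctx}$ is again an evaluation context and $\selectEvalCtx{\instantiate{\dexp'}{u}{\dexp}}{\instantiate{\dexp'}{u}{\evalctx}}{\instantiate{\dexp'}{u}{\dexp_0}}$ (and likewise with $\dexp_0'$), since filling distributes over application, cast, failed-cast, and non-matching non-empty-hole frames; but if the outermost such frame does match, hole filling \emph{discards} the entire subterm sitting under that closure, so $\instantiate{\dexp'}{u}{\dexp} = \instantiate{\dexp'}{u}{\dexp_a}$ outright and zero steps suffice. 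Second, instruction transitions commute with hole filling: by case analysis on $\reducesE{}{\dexp_0}{\dexp_0'}$, every cast-related rule (\rulename{ITApCast}, \rulename{ITCastId}, \rulename{ITCastSucceed}, \rulename{ITCastFail}, \rulename{ITGround}, \rulename{ITExpand}) goes through immediately, because $\instantiate{\dexp'}{u}{\cdot}$ distributes over applications and casts and leaves all type annotations --- hence all ground-ness and disequality side conditions --- untouched; only the beta rule \rulename{ITLam} is interesting. Composing the two facts yields exactly one step in the non-discard case and zero in the discard case.

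The key technical lemma --- and what I expect to be the main obstacle --- is the one that \rulename{ITLam} reduces to: that ordinary capture-avoiding substitution commutes with hole filling, $\instantiate{\dexp'}{u}{([\dexp_a/x]\dexp_b)} = [\,\instantiate{\dexp'}{u}{\dexp_a}/x\,]\,\instantiate{\dexp'}{u}{\dexp_b}$, together with its companion for hole environments, $\instantiate{\dexp'}{u}{([\dexp_a/x]\sigma)} = [\,\instantiate{\dexp'}{u}{\dexp_a}/x\,]\,\instantiate{\dexp'}{u}{\sigma}$. This is a mutual induction on $\dexp_b$ and $\sigma$; the delicate cases are the hole closures $\dehole{u}{\sigma}{}$ and $\dhole{\cdot}{u}{\sigma}{}$, where filling produces $[\instantiate{\dexp'}{u}{\sigma}]\dexp'$, so one must show $[\instantiate{\dexp'}{u}{([\dexp_a/x]\sigma)}]\dexp' = [\dexp_a/x]\bigl([\instantiate{\dexp'}{u}{\sigma}]\dexp'\bigr)$. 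This needs a substitution-composition identity plus the observation that $\fvof{\dexp'} \subseteq \domof{\hGamma'} = \domof{\sigma}$, so that the outer substitution acts only through $\sigma$ --- and this is precisely where the typing hypotheses are used, since they guarantee every hole closure is well formed in the sense $\hasType{\hDelta}{\hGamma'}{\sigma}{\hGamma'}$ with matching domains. The remaining binder cases (lambda bodies, and the inner expression of a non-empty hole closure) are handled by Barendregt's convention exactly as in the Agda development: because hole filling deliberately passes under binders with no side condition, one leans on bound-variable freshness to avoid capture when pushing $[\dexp_a/x]$ inward. Finally, I would remark on why the conclusion is necessarily phrased with $\multiStepsTo{\cdot}{\cdot}$ and not $\stepsToD{}{\cdot}{\cdot}$: a single original step collapses to zero steps after filling when the contracted redex lies inside a non-empty hole closure for $u$, and, dually, filling can expose fresh redexes (e.g. when $\dexp'$ is not final) that the theorem simply places no obligation on us to contract.
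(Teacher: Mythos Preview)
Your proposal is correct and follows essentially the same route as the paper: induction on the multi-step derivation, with the single-step case handled by a dichotomy on whether the evaluation context's mark lies inside a non-empty closure for $u$ (the paper packages this as separate \Property{Filling Totality}, \Property{Discarding}, and \Property{Filling Distribution} lemmas), together with an \Property{Instruction Commutativity} lemma whose only interesting case, \rulename{ITLam}, reduces to the \Property{Substitution Commutativity} lemma you identify. Your diagnosis of where the typing hypotheses are actually needed---to ensure $\domof{\sigma}$ covers $\fvof{\dexp'}$ at each $u$-closure---matches the paper's.
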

The key idea is that we can resume evaluation by replaying the substitutions that were recorded in the closures and then taking the necessary ``catch up'' steps that evaluate the hole fillings that now appear.
The caveat is that resuming from $\instantiate{\dexp'}{u}{d_2}$ will not
reduce sub-expressions in the same order as a ``fresh'' eager left-to-right
reduction sequence starting from $\instantiate{\dexp'}{u}{d_1}$ so 
filling commutes with reduction only for 
languages where evaluation order ``does not matter'', e.g. pure functional languages like \HazelnutLive.\footnote{There are various standard ways to formalize this intuition, e.g. by stating a suitable confluence property. 
For the sake of space, we review confluence in \ifarxiv Appendix \ref{sec:confluence}\else the \appendixName\fi.} Languages with non-commutative effects do not enjoy this property.


We describe the proof, which is straightforward but involves a
number of lemmas and definitions, in \ifarxiv Appendix~\ref{sec:hole-filling}\else the \appendixName\fi.
%
In particular, care is needed to handle the situation where a
now-filled non-empty hole had taken a step in the original evaluation trace.


We do not separately define hole filling in the external language (i.e. we
consider a change to an external expression to be a hole filling if the new
elaboration differs from the previous elaboration up to hole filling).  In
practice, it may be useful to cache more than one recent edit state to take
full advantage of hole filling.  As an example, consider two edits, the
first filling a hole~$u$ with the number~$2$, and the next applying
operator~$+$, resulting in $2 + \dehole{v}{\sigma}{}$.
This second edit is not a hole filling edit with respect to the
immediately preceding edit state, $2$, but it can be understood as filling
hole $u$ from two states back with $2 + \dehole{v}{\sigma}{}$.


Hole filling also allows us to give a contextual modal interpretation
to lab notebook cells  like those of
Jupyter/IPython \cite{PER-GRA:2007} (and read-eval-print loops as a restricted case
where edits to previous cells are impossible).
Each cell can be understood as a series of \li{let}
bindings ending implicitly in a hole, which is filled by the next cell.
The live environment in the subsequent cell is exactly the hole environment of this implicit trailing hole.
Hole filling when a subsequent cell changes avoids recomputing the
environment from preceding cells, without relying on mutable state. Commutativity provides a reproducibility guarantee
missing from Jupyter/IPython, where editing and executing previous cells can cause the state to differ substantially from the state that would result when attempting to run the notebook from the top.

\newcommand{\relatedWorkSection}{Related and Future Work}
\section{\relatedWorkSection} 
\label{sec:relatedWork}

This paper builds directly on the static semantics developed by \citet{popl-paper}. That paper, as well as a subsequent ``vision paper'' introducing \Hazel \cite{HazelnutSNAPL}, suggested as future work  
a corresponding dynamic semantics for the purposes of live programming without gaps. \citet{Bayne:2011:ASD:1985793.1985864} have also extensively argued the value of assigning meaning even to incomplete and erroneous programs. 
This paper delivers conceptual and type-theoretic details necessary to advance these visions.

\parahead{Gradual Type Theory}

The semantics borrows machinery related to type holes from gradual type theory \cite{Siek06a,DBLP:conf/snapl/SiekVCB15}, as discussed
at length in Sec.~\ref{sec:calculus}. The main innovation relative to this
prior work is the treatment of cast failures like holes, rather than errors.
Many of the methods developed to make gradual type systems more expressive and practical are directly relevant to future directions for \Hazel and other implementations of the ideas herein \cite{takikawa_et_al:LIPIcs:2015:5215}. For example, there has been substantial work on the problem of implementing casts efficiently \cite{herman2010space,siek2010threesomes,garcia2013threesomes}. There has also been substantial work on integrating gradual typing with polymorphism \cite{DBLP:conf/esop/XieBO18,DBLP:journals/pacmpl/DevriesePP18,Igarashi:2017:PGT:3136534.3110284}, and with refinement types \cite{DBLP:conf/popl/LehmannT17}.

%
Another interesting future direction would be to move beyond
pure functional programming and carefully integrate imperative features, e.g. ML-style references.
%
%
\citet{DBLP:conf/esop/SiekVCTG15} show how to incorporate such features into gradual type theory; we expect that this approach would also work for the semantics of Sec.~\ref{sec:calculus}, i.e. it would conserve the type safety properties established there, with suitable modifications to account for a store. However, the commutativity property we establish in Sec.~\ref{sec:resumption} will not hold for a language that supports non-commutative effects. We leave to future work the task of defining more restricted special cases of the fill-and-resume operation that respects a suitable commutativity property in effectful settings (perhaps by checkpointing or by using a type and effect system to granularly determine where re-evaluation is needed \cite{burckhardt2013s}).

Going beyond references to incorporate external effects, e.g. IO effects, raises some additional practical concerns, however---we do not want to continue past a hole or error and in so doing haphazardly trigger an unintended effect. In this setting, it is likely better to explicitly ask the programmer whether to continue evaluation past a hole or cast failure. The small step specification in this paper is suitable as the basis for a  step-based evaluator like this. \citet{ocaml-stepper} discuss some unresolved usability issues relevant to single steppers for functional languages. 

We did not give type holes unique names nor did we define a commutative type hole filling operation, but we conjecture that it is possible to do so by using non-empty expression holes to replace casts that go from the filled hole to an inconsistent type.

\parahead{Metavariables in Type Theory}

The other major pillar of related work is the work on contextual modal type theory (CMTT) by \citet{Nanevski2008}, which we also discussed at length throughout the paper. To reiterate, there is a close relationship between expression holes in this paper and metavariables in CMTT: both are associated with a type and a typing context. 
Empty hole closures 
relate to the concept of a {metavariable closure} (delayed substitution) in CMTT, which consists
of a metavariable paired with a substitution for all of the variables in the
typing context associated with that metavariable. 
Empty expression hole filling relates to contextual substitution. 

%
These connections with gradual typing and CMTT are encouraging, but our contributions do not immediately fall out from this prior work.
The problem is first that \citet{Nanevski2008} defined only the logical reductions for CMTT, viewing it as a proof system for intuitionistic contextual modal logic via the propositions-as-types (Curry-Howard) principle. 
The paper therefore proved only a subject reduction property (which is closely related to type preservation) and sketched an interpretation of CMTT into the simply-typed lambda calculus with sums under permutation conversion, 
which has been studied by \citet{DBLP:journals/iandc/Groote02}. 
\citet{DBLP:conf/ppdp/PientkaD08} more directly defines a dynamic semantics for an extension of CMTT. In both cases, the issue is that these systems can evaluate only closed terms, while we need to consider terms with free metavariables. 

There has been much formal work on reduction of open terms. In particular, the typical approach is to define the \emph{weak head normal forms} \cite{barendregt84:_lambda_calculus,DBLP:journals/corr/abs-1009-2789,Abramsky:1990vv}, and a weak head normalization procedure. 
This is useful when using reduction to perform optimizations throughout a program, e.g. when using supercompilation-by-evaluation \cite{DBLP:conf/haskell/BolingbrokeJ10} or symbolic evaluation \cite{King:1976,SurveySymExec-CSUR18}, or when using evaluation in the service of equational reasoning as in dependently typed proof assistants. Notably, \citet{DBLP:journals/corr/abs-1009-2789} considered weak head normalization for CMTT, which forms the basis for equational reasoning in the Beluga proof assistant \cite{DBLP:conf/flops/Pientka10}. There are two points of note here. First, the addition of type holes allow us to express general recursion \cite{Siek06a}, so we cannot rely on normalization arguments and instead need a more conventional dynamic semantics with a progress theorem. Second, we do not want to evaluate under arbitrary binders but rather only around holes when they would otherwise have been evaluated \cite{DBLP:conf/birthday/BlancLM05}. As such, we do not need to consider terms with free variables, and can restrict our interest to only those with free metavariables. These considerations together lead us to the indeterminate forms and to the progress theorem that we established.

There are various other systems similar in many ways to CMTT in 
that they consider the problem of reasoning about metavariables. 
For example, McBride's OLEG is another system for reasoning contextually about metavariables
\cite{DBLP:phd/ethos/McBride00}, and it is the conceptual basis of certain hole refinement features in Idris \cite{brady2013idris}. \citet{DBLP:conf/csl/GeuversJ02} discuss similar ideas, again in the setting of hole refinement in proof assistants. However, these systems do not account for substitutions around metavariables. The systems underlying TypeLab \cite{Strecker:98a} and Alf \cite{magnusson1994implementation} do record substitutions around metavariables. These can be considered predecessors of CMTT, which is unique in that it has a clear Curry-Howard interpretation. The discussion above applies similarly to these systems.

We use the machinery 
borrowed from CMTT only extralinguistically.
A key feature of CMTT that we have not yet touched on is the
\emph{internalization} of metavariable binding and contextual
substitution via the contextual modal types, $[\hGamma]\htau$, which
are introduced by the operation $\mathsf{box}(\hGamma.d)$ and
eliminated by the operation $\mathsf{letbox}(d_1, u.d_2)$.
A program with expression holes can be interpreted as being bound under a number of these $\mathsf{letbox}$ constructs, i.e. CMTT serves as a ``{development calculus}'' \cite{DBLP:phd/ethos/McBride00}. Live programming corresponds to reduction under the metavariable binders interleaved with elimination steps. 
This suggests the possibility of \emph{computing} hole fillings by specifying a non-trivial expression of the appropriate contextual 
modal type in this development calculus.
This could, in turn, serve as the basis for a computational hole
refinement system that supports efficient live programming, extending the capabilities of purely static hole
refinement systems like those available in many languages,
e.g. the editor-integrated system of Idris
\cite{brady2013idris,Korkut:2018:ETE:3240719.3241791} and the hole refinement system of Beluga
\cite{DBLP:conf/flops/Pientka10,pientka2015inductive}.
Each applied hole filling can be interpreted as inducing a new dynamic
\emph{edit stage}.
This contextual modal interpretation of live hole refinement
nicely mirrors the modal interpretation of staging and partial evaluation
\cite{Davies:2001op}, with indeterminate forms corresponding to the residual programs that arise when performing partial evaluation. 
The difference is that staging and partial evaluation systems evaluate around an input that sits outside of a
function \cite{Jones:1993uq}, whereas holes are contextual, i.e. they are located inside the program. 

There is also more general work on explicit substitutions, which records all  substitutions, not just substitutions around metavariables, explicitly \cite{Abadi:1991fr,levy1999explicit}.  \citet{DBLP:journals/corr/abs-1009-2789} have developed a theory of explicit substitutions for CMTT, which, following other work on explicit substitutions and environmental abstract machines \cite{DBLP:journals/tcs/Curien91}, could be useful in implementing \HazelnutLive more efficiently by delaying both standard and contextual substitutions until needed during evaluation. We leave this and other questions of fast implementation (e.g. using thunks to encapsulate indeterminate sub-expressions) to future work. 




\parahead{Type Error Messages}

A key feature of our semantics is that it permits evaluation not only of terms with empty holes, but also terms with non-empty holes, i.e. reified static type inconsistencies. DuctileJ \cite{Bayne:2011:ASD:1985793.1985864} and GHC \cite{DBLP:conf/icfp/VytiniotisJM12} 
have also considered this problem, but have taken
an ``exceptional approach'' --- these systems can defer static type errors until run-time, but do not continue further once the term containing the error
 has been evaluated.

Understanding and debugging static type errors is notoriously difficult,
particularly for novices.
A variety of approaches have been
proposed to better localize
and explain type errors \cite{Seminal,ChenErwig2014,DBLP:journals/jfp/ChenE18,Pavlinovic2015,sherrloc}.
One of these approaches, by \citet{Seidel2016},  uses symbolic execution and program synthesis to generate a dynamic witness
that demonstrates a run-time failure that could be caused by the static type error.
\HazelnutLive{} has similar motivations in that it can run programs with type errors  and provide concrete feedback about the values that erroneous terms take during evaluation (Sec.~\ref{sec:static-errors}-\ref{sec:dynamic-type-errors}). However, no attempt is made to synthesize examples that do not already appear in the program. 
%

\parahead{Coroutines} 
The fill-and-resume interaction is reminiscent of the interactions that occur when using coroutines \cite{DBLP:conf/ifip/KahnM77} and related mechanisms (e.g. algebraic effects) --  a coroutine might yield to the caller until a value is sent back in and then continue in the suspended environment. The difference is that 
the hole filling can make use of the context
around the hole. 

\parahead{Dynamic Error Propagation} 
\citet{DBLP:conf/sp/HritcuGKPM13} consider the problem of dynamic violations of information-flow control (IFC) policies. An exceptional approach here is problematic in practice, because it would lead critical systems to shutdown entirely. Instead, the authors develop several mechanisms for 
propagating errors through subsequent computations (in a manner that preserves non-interference properties). The most closely related is the NaV-lax approach, which turns errors into special ``not-a-value'' (NaV) terms that consume other terms that they interact with (like floating point NaN). Our approach differs in that terms are not consumed. Furthermore, we track hole closures and consider static and dynamic type errors, but not exception propagation.  

\parahead{Debuggers}

Our approach is reminiscent of the workflow that debuggers make available using breakpoints \cite{fitzgerald2008debugging,DBLP:journals/jfp/TolmachA95}, visualizers of program state \cite{Nelson2017,Guo13}, and a variety of logging and tracing capabilities.
Debuggers do not directly support incomplete programs, so the programmer first needs to insert suitable dummy exceptions as discussed in Sec.~\ref{sec:intro}.
Beyond that, there are two main distinctions. First, evaluation does not stop at each hole and so it is straightforward to explore the \emph{space} of values that a variable takes. Second, breakpoints, logs and tracing tools convey the values of a variable at a position in the evaluation trace. Hole closures, on the other hand, convey information from a syntactic position in the \emph{result} of evaluation. The result is, by nature, a simpler object than the trace. 

Some debuggers support ``edit-and-continue'', e.g. Visual Studio \cite{VSEditAndContinue} and Flutter \cite{flutter}, based on the dynamic software update (DSU) capabilities of the underlying run-time system \cite{DBLP:journals/toplas/StoyleHBSN07,DBLP:conf/vstte/HaydenMHFF12,DBLP:journals/toplas/HicksN05}. These features do not come with any guarantee that rerunning the program will produce the same result.

\parahead{Structure Editors}

Holes play a prominent role in structure editors, and indeed the prior work on \Hazelnut was primarily motivated by this application \cite{popl-paper}. 
Most work on structure editors has focused on the user interfaces that they
present. This is important work---presenting a fluid user interface involving
exclusively structural edit actions is a non-trivial problem that has not yet
been fully resolved, though recent studies have started to show productivity
gains for blocks-based structure editors like Scratch for novice programmers \cite{Resnick:2009:SP:1592761.1592779,DBLP:conf/chi/WeintropASFLSF18,DBLP:conf/acmidc/WeintropW15}, and for keyboard-driven structure
editors like \li{mbeddr} in professional programming settings~\cite{DBLP:conf/vl/Asenov014,DBLP:conf/sle/VolterSBK14,voelter_mbeddr:_2012}.
Some structure editors support live programming in various ways, e.g. Lamdu is a structure editor for a typed functional language that displays variable values inline based on recorded execution traces (see above) \cite{lamdu}. However, Lamdu takes the exceptional approach to holes. Scratch will execute a program with holes by simply skipping over incomplete commands, but this is a limited protocol because rerunning the program after filling the hole may produce a result unrelated to initial result. Though in some situations, skipping over problematic commands has been observed to work surprisingly well \cite{DBLP:conf/dac/Rinard12}, this paper focuses on a sound approach. 

\parahead{Scaling Challenges}

We make no empirical claims regarding the usability of the particular user interface presented in this paper. The \Hazel user interface as presented is a proof of concept
that demonstrates (1) a comprehensive solution to the gap problem, as described in Sec.~\ref{sec:implementation}; and 
(2) one possible user interface for presenting hole closures, including deeply nested hole closures, to the programmer. We leave detailed empirical evaluation to future work, and note that the purpose of defining a formal semantics is to provide a foundation for a variety of user interface experiments.

For larger programs, there are some limitations that will certainly require further UI refinement. 
Indeterminate results can get to be quite large, particularly when a hole or failed cast
appears in guard position as in Fig.~\ref{fig:cast-errors}. 
One approach is for the pretty printer can elide irrelevant branches. The semantics in Sec.~\ref{sec:calculus} would also allow for evaluation to pause when such
a form is produced, continuing only if requested. 

In larger programs, there can be many dozens of variables in scope, so search and sorting tools in the live context inspector would be useful. It would also be useful
to support the evaluation of arbitrary ``watched'' expressions in the selected 
closure. There are other ways to display the variable values, e.g. intercalated
into the code \cite{DBLP:journals/corr/abs-1806-07449,lamdu}.

Finally, we have not yet investigated nor attempted to optimize the memory and performance overhead of tracking hole closures. We note simply that it is often the case that 
programmers use small example inputs during initial development, providing larger inputs
only after the program is complete. There is no run-time overhead when there are no holes remaining.

\parahead{Program Slicing}

Hole-like constructs also appear in work on program slicing
\cite{DBLP:conf/icfp/PereraACL12,DBLP:journals/pacmpl/RicciottiSPC17}, where empty expression holes arise as a technical device to determine which parts of a complete program do not impact a selected part of the result of evaluating that program. In other words, the focus is on explaining a result that is presumed to exist, whereas the work in this paper is focused on producing results where they would not have otherwise been available. Combining the strengths of these approaches is another fruitful avenue for future research.

\parahead{Program Synthesis}

Expression holes also often appear in the context of program synthesis, serving as
placeholders in \emph{templates}~\cite{srivastava2013template} or
\emph{sketches}~\cite{solar2009sketching} to be filled in by an expression
synthesis engine. We leave to future work the exciting possibility of combining these approaches. For example, one can imagine running an incomplete program as described in this paper and then adding tests or assertions about the value that a hole should take on under each of the different associated hole closures, via the live context inspector. These could serve as  constraints for use by a type-and-example-driven synthesis engine \cite{DBLP:conf/popl/FrankleOWZ16}. Relatedly, \emph{prorogued programming} proposes soliciting an external source, e.g. the programmer, for a suitable value when a hole instance is encountered \cite{DBLP:conf/oopsla/AfshariBS12}.




\newcommand{\discussionSection}{Conclusion}
\section{\discussionSection} 
\label{sec:discussion}

\vspace{3pt}
\begin{quote}
\textit{``[H]ow truly sad it is that just at the very moment
when the computer has something important to tell us, it starts
speaking gibberish.''}

\vspace{3pt}

\hfill{}--- Gerald Weinberg, The Psychology of Computer Programming \cite{weinberg1971psychology}
\end{quote}
\vspace{3pt}

\noindent
Weinberg's sentiment applies to countless situations that programmers face as they work to develop and critically investigate the mental model of the program they are writing---at the very moment where rich feedback might be most helpful, e.g. when there is an error in the program or when the programmer is unsure how to fill a hole, the computer often has comparatively little feedback to offer (perhaps just a parser error message, or an austere explanation of a type error). 
%
%
%
It is our hope that the well-behaved  type-theoretic foundations developed here will enable not only \Hazel but live programming tools of a wide variety of other designs to further narrow the temporal and perceptive gap and provide meaningful feedback to programmers at these important   moments.

\newcommand{\acksSection}{Acknowledgments}
\section*{\acksSection} 
\label{sec:acks}

We thank Brigitte Pientka, Jonathan Aldrich, Joshua Sunshine, Michael Hilton, Claire LeGoues, Conor McBride, 
the participants of the PL reading group at UChicago,
the participants of the LIVE 2017 and Off the Beaten Track (OBT)
2017 workshops, and the anonymous referees for their insights and feedback on various iterations of this work.
This material was supported by a gift from
Facebook, from the National Science Foundation under grant
numbers CCF-1619282, SHF-1814900 and SHF-1817145, and from AFRL and DARPA under agreement \#FA8750-16-2-0042. 
 The U.S. Government is authorized to reproduce and distribute reprints for 
Governmental purposes notwithstanding any copyright notation thereon.
Any opinions, findings, and conclusions or recommendations expressed
in this material are those of the authors and do not necessarily
reflect the views of Facebook, NSF, DARPA, AFRL or the U.S. Government. 
%








\bibliography{all.short}

\ifarxiv
\clearpage
\appendix

\newcommand{\additionalDefnsSec}{Additional Definitions for Hazelnut Live}
\section{\protect\additionalDefnsSec} 
\label{sec:additional-defns}

\subsection{Substitution}
\label{sec:substitution}
\judgbox
  {[\dexp/x]\dexp' = \dexp''}
  {$\dexp''$ is obtained by substituting $\dexp$ for $x$ in $\dexp'$}

\vspace{5pt}
\judgbox
  {[\dexp/x]\sigma = \sigma'}
  {$\sigma'$ is obtained by substituting $\dexp$ for $x$ in $\sigma$}
\[
\begin{array}{lcll}
\substitute{\dexp}{x}{c}
&=&
c\\
\substitute{\dexp}{x}{x}
&=&
\dexp
\\%
\substitute{\dexp}{x}{y}
&=&
y
& \text{when $x \neq y$}
\\
\substitute{\dexp}{x}{\halam{x}{\htau}{\dexp'}}
&=&
\halam{x}{\htau}{\dexp'}\\
\substitute{\dexp}{x}{\halam{y}{\htau}{\dexp'}}
&=&
\halam{y}{\htau}{\substitute{\dexp}{x}{\dexp'}}
& \text{when $x \neq y$ and $y \notin \fvof{d}$}
\\
\substitute{\dexp}{x}{\dap{\dexp_1}{\dexp_2}}
&=&
\dap{(\substitute{\dexp}{x}{\dexp_1})}{\substitute{\dexp}{x}{\dexp_2}}
\\
\substitute{\dexp}{x}{\dehole{u}{\subst}{}}
&=&
\dehole{u}{\substitute{\dexp}{x}{\subst}}{}
\\
\substitute{\dexp}{x}{\dhole{\dexp'}{u}{\subst}{}}
&=&
\dhole{\substitute{\dexp}{x}{\dexp'}}{u}{\substitute{\dexp}{x}{\subst}}{}
\\
\substitute{\dexp}{x}{\dcasttwo{\dexp'}{\htau_1}{\htau_2}}
&=&
\dcasttwo{(\substitute{\dexp}{x}{\dexp'})}{\htau_1}{\htau_2}
\\
\substitute{\dexp}{x}{\dcastfail{\dexp'}{\htau_1}{\htau_2}}
&=&
\dcastfail{(\substitute{\dexp}{x}{\dexp'})}{\htau_1}{\htau_2}
\\[6px]
\substitute{\dexp}{x}{\cdot}
&=&
\cdot\\
\substitute{\dexp}{x}{\sigma, d/y}
&=&
\substitute{\dexp}{x}{\sigma}, \substitute{\dexp}{x}{d}/y
\end{array}
\]

\vspace{5pt}
\begin{lem}[Substitution] \label{thm:substitution}~
\begin{enumerate}[nolistsep]
\item If $\hasType{\hDelta}{\hGamma, x : \htau'}{d}{\htau}$ and $\hasType{\hDelta}{\hGamma}{d'}{\htau'}$ then $\hasType{\hDelta}{\hGamma}{[d'/x]d}{\htau}$.
\item If $\hasType{\hDelta}{\hGamma, x : \htau'}{\sigma}{\hGamma'}$ and $\hasType{\hDelta}{\hGamma}{d'}{\htau'}$ then $\hasType{\hDelta}{\hGamma}{[d'/x]\sigma}{\hGamma'}$.
\end{enumerate}
\end{lem}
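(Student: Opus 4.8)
The plan is to prove both parts by a single mutual induction: part~(1) by induction on the derivation of $\hasType{\hDelta}{\hGamma, x : \htau'}{\dexp}{\htau}$, and part~(2) by induction on the structure of the finite substitution $\subst$ (equivalently, on the derivation of $\hasType{\hDelta}{\hGamma, x : \htau'}{\subst}{\hGamma'}$), each part free to invoke the other at a structurally smaller argument. Throughout, the hole context $\hDelta$ is never touched, so the only contexts being manipulated are the typing contexts $\hGamma$; treating these as finite maps (as in the mechanization) makes the needed weakening and exchange essentially free, modulo the postulated function extensionality. We also tacitly appeal to Barendregt's convention to keep every bound variable distinct from $x$ and out of $\fvof{\dexp'}$, which is exactly what licenses pushing $\substitute{\dexp'}{x}{-}$ under binders as in the appendix's definition of substitution.

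For part~(1), the routine cases are \rulename{TAConst} (substitution fixes $c$), \rulename{TAAp} (recurse into both subterms and reapply \rulename{TAAp}), and \rulename{TACast} and \rulename{TAFailedCast} (recurse into the boxed expression and reapply the same rule, since the consistency and groundness side conditions mention only the unchanged types). In \rulename{TAVar} we case split on the variable: if it is $x$, uniqueness of bound variables forces the derived type to be $\htau'$ and $\substitute{\dexp'}{x}{x} = \dexp'$, so the second hypothesis closes the goal; if it is some $y \neq x$, then $y : \htau \in \hGamma$ and $\substitute{\dexp'}{x}{y} = y$, so \rulename{TAVar} applies. \rulename{TALam} is where the context bookkeeping lives: from $\hasType{\hDelta}{(\hGamma, x : \htau'), y : \htau_1}{\dexp_1}{\htau_2}$ we reorder to $\hasType{\hDelta}{(\hGamma, y : \htau_1), x : \htau'}{\dexp_1}{\htau_2}$, weaken the second hypothesis to $\hasType{\hDelta}{\hGamma, y : \htau_1}{\dexp'}{\htau'}$, apply the induction hypothesis, and reapply \rulename{TALam}. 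The hole-closure cases \rulename{TAEHole} and \rulename{TANEHole} are where the mutual induction pays off: substitution distributes into the stored environment, e.g. $\substitute{\dexp'}{x}{\dehole{\mvar}{\subst}{}} = \dehole{\mvar}{\substitute{\dexp'}{x}{\subst}}{}$, so we invoke part~(2) on $\subst$ (and, for \rulename{TANEHole}, part~(1) on the enveloped expression), observe that the lookup $\Dbinding{u}{\hGamma''}{\htau} \in \hDelta$ is unchanged, and reapply the rule.

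Part~(2) is then immediate. Unfolding the definition, $\hasType{\hDelta}{\hGamma, x : \htau'}{\subst}{\hGamma'}$ says $\domof{\subst} = \domof{\hGamma'}$ and, for each $z : \htau_z \in \hGamma'$, the binding $d_z/z$ lies in $\subst$ with $\hasType{\hDelta}{\hGamma, x : \htau'}{d_z}{\htau_z}$. Since $\substitute{\dexp'}{x}{\subst}$ acts pointwise and preserves the domain, $\substitute{\dexp'}{x}{d_z}/z \in \substitute{\dexp'}{x}{\subst}$, and part~(1) gives $\hasType{\hDelta}{\hGamma}{\substitute{\dexp'}{x}{d_z}}{\htau_z}$; in the inductive encoding of $\subst$ this is a two-line induction with the identity substitution as the vacuous base case.

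The main difficulty is bureaucratic rather than conceptual: getting the \rulename{TALam} case to go through cleanly relies on having the weakening and exchange infrastructure for $\hGamma$ already in place and on the freshness side conditions of Barendregt's convention being stated precisely enough to justify commuting substitution past $\lambda$. The only structural subtlety is keeping the mutual induction well-founded, which the paper's inductive (rather than metafunctional) representation of finite substitutions is specifically designed to make transparent.
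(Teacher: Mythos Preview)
The paper does not give a textual proof of this lemma; it is merely stated in the appendix and established in the Agda mechanization. Your mutual-induction argument is the standard one and is correct, and it matches what the mechanization must be doing (the paper's remark in Sec.~\ref{sec:agda-mechanization} that the inductive encoding of finite substitutions ``makes it easier for Agda to see why certain inductions that we perform are well-founded'' is referring to exactly the well-foundedness concern you flag at the end). One small quibble: the appendix presents the base case of $\sigma$ as the empty substitution $\cdot$, for which the base case really is vacuous, whereas the mechanization description uses the identity substitution as the base constructor; in the latter presentation the base case is not literally vacuous but reduces immediately to instances of your \rulename{TAVar} analysis, so nothing changes.
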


\subsection{Canonical Forms}
\label{sec:canonical-forms}

\begin{lem}[Canonical Value Forms]\label{thm:canonincal-value-forms}
  If $\hasType{\hDelta}{\emptyset}{\dexp}{\htau}$ and $\isValue{\dexp}$
  then $\htau\neq\tehole$ and
  \begin{enumerate}[nolistsep]
    \item If $\htau=b$ then $\dexp=c$.
    \item If $\htau=\tarr{\htau_1}{\htau_2}$
          then $\dexp=\halam{x}{\htau_1}{\dexp'}$
          where $\hasType{\hDelta}{x : \htau_1}{\dexp'}{\htau_2}$.
  \end{enumerate}
\end{lem}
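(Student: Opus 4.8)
The statement follows by a direct case analysis, with no real difficulty. The plan is first to invert the value judgement $\isValue{\dexp}$: by the only two rules that conclude it, \rulename{VConst} and \rulename{VLam} (Fig.~\ref{fig:isValue}), $\dexp$ is either the constant $c$ or a half-annotated lambda $\halam{x}{\htau_1}{\dexp'}$. In each of these two cases I would then invert the type assignment derivation $\hasType{\hDelta}{\emptyset}{\dexp}{\htau}$. This inversion is clean because the internal type assignment system of Fig.~\ref{fig:hasType} is syntax-directed --- crucially, it has no subsumption rule --- so the outermost syntactic form of $\dexp$ uniquely determines which rule was used last (this is also the content of the Type Assignment Unicity property stated just above).

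In the case $\dexp = c$, the only rule whose conclusion has the form $\hasType{\cdot}{\cdot}{c}{\cdot}$ is \rulename{TAConst}, so $\htau = b$. Hence $\htau \neq \tehole$, conclusion (1) holds (with $\dexp = c$), and conclusion (2) is vacuous since $b$ is not an arrow type. In the case $\dexp = \halam{x}{\htau_1}{\dexp'}$, the only applicable rule is \rulename{TALam}, whose premise (with empty typing context) gives $\htau = \tarr{\htau_1}{\htau_2}$ for some $\htau_2$ with $\hasType{\hDelta}{x : \htau_1}{\dexp'}{\htau_2}$. Hence $\htau \neq \tehole$, conclusion (1) is vacuous, and conclusion (2) holds with this $\dexp'$ and $\htau_2$. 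Since the value judgement admits no other forms, these two cases are exhaustive.

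\textbf{Main obstacle.} There is essentially none for this particular lemma: it is immediate from inversion. The one point worth emphasizing in the write-up is \emph{why} inversion is so direct here --- it is precisely because the internal language's type assignment relation is syntax-directed and lacks subsumption, which is in turn the motivation for elaborating away ascriptions and inserting casts explicitly. (The analogous ``canonical forms'' statement for \emph{external} expressions would fail, and the companion lemma needed for the progress theorem --- a canonical-\emph{boxed}-value-forms lemma accounting for casts wrapping values --- requires more work, but that is outside the scope of what is asked here.)
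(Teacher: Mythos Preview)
Your proof is correct and is exactly the standard inversion argument one would expect; the paper states this lemma without an inline proof (deferring to the Agda mechanization), and your case analysis on \rulename{VConst}/\rulename{VLam} followed by syntax-directed inversion of \rulename{TAConst}/\rulename{TALam} is precisely how that mechanization proceeds. Your remark about why inversion is clean---the absence of subsumption in the internal type assignment system---is apt and worth keeping.
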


\begin{lem}[Canonical Boxed Forms]\label{thm:canonical-boxed-forms}
  If $\hasType{\hDelta}{\emptyset}{\dexp}{\htau}$ and $\isBoxedValue{\dexp}$
  then
  \begin{enumerate}[nolistsep]
    \item If $\htau=b$ then $\dexp=c$.
    \item If $\htau=\tarr{\htau_1}{\htau_2}$ then either
      \begin{enumerate}
        \item
          $\dexp=\halam{x}{\htau_1}{\dexp'}$
          where $\hasType{\hDelta}{x : \htau_1}{\dexp'}{\htau_2}$, or
        \item
          $\dexp=\dcasttwo{\dexp'}{\tarr{\htau_1'}{\htau_2'}}{\tarr{\htau_1}{\htau_2}}$
          where $\tarr{\htau_1'}{\htau_2'}\neq\tarr{\htau_1}{\htau_2}$
          and $\hasType{\hDelta}{\emptyset}{\dexp'}{\tarr{\htau_1'}{\htau_2'}}$.
      \end{enumerate}
    \item If $\htau=\tehole$
          then $\dexp=\dcasttwo{\dexp'}{\htau'}{\tehole}$
          where $\isGround{\htau'}$
          and $\hasType{\hDelta}{\emptyset}{\dexp'}{\htau'}$.
  \end{enumerate}
\end{lem}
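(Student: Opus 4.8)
The plan is to do a straightforward case analysis on the derivation of $\isBoxedValue{\dexp}$, which is generated by exactly the three rules \rulename{BVVal}, \rulename{BVArrCast}, and \rulename{BVHoleCast}. In each case the shape of $\dexp$ is fixed, and we invert the typing derivation $\hasType{\hDelta}{\emptyset}{\dexp}{\htau}$ to pin down $\htau$ (and the type of the immediate subterm, when $\dexp$ is a cast). Although the two cast rules carry a premise $\isBoxedValue{\dexp'}$ about the subterm, no induction hypothesis is needed: everything we must report about the subterm is its type, and that is supplied by inverting the typing of the enclosing cast.

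\emph{Case \rulename{BVVal}.} Here $\isValue{\dexp}$, and the conclusion follows immediately from the Canonical Value Forms lemma, which already gives $\htau \neq \tehole$ (so the clause for $\htau = \tehole$ holds vacuously), $\dexp = c$ when $\htau = b$, and $\dexp = \halam{x}{\htau_1}{\dexp'}$ with $\hasType{\hDelta}{x : \htau_1}{\dexp'}{\htau_2}$ when $\htau = \tarr{\htau_1}{\htau_2}$. \emph{Case \rulename{BVArrCast}.} Here $\dexp = \dcasttwo{\dexp'}{\tarr{\htau_1'}{\htau_2'}}{\tarr{\htau_1}{\htau_2}}$ with $\tarr{\htau_1'}{\htau_2'} \neq \tarr{\htau_1}{\htau_2}$. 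The only typing rule matching a cast form is \rulename{TACast} (the rule \rulename{TAFailedCast} governs the syntactically distinct failed-cast form $\dcastfail{\dexp'}{\htau_1}{\htau_2}$), so inversion yields $\htau = \tarr{\htau_1}{\htau_2}$ and $\hasType{\hDelta}{\emptyset}{\dexp'}{\tarr{\htau_1'}{\htau_2'}}$; since an arrow type is neither $b$ nor $\tehole$, the other two clauses are vacuous and we land exactly in the second sub-case of clause~2. \emph{Case \rulename{BVHoleCast}.} Here $\dexp = \dcasttwo{\dexp'}{\htau'}{\tehole}$ with $\isGround{\htau'}$; inverting \rulename{TACast} gives $\htau = \tehole$ and $\hasType{\hDelta}{\emptyset}{\dexp'}{\htau'}$, which is exactly clause~3 (clauses~1 and~2 being vacuous since $\tehole$ is neither $b$ nor an arrow type).

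There is no real obstacle here; the only care required is the bookkeeping that, in each case, exactly one clause of the conclusion is non-vacuous, which means discharging the other two by disequality of the type constructors $b$, $\tarr{\cdot}{\cdot}$, and $\tehole$. One should also check that the consistency premise extracted by inverting \rulename{TACast} (namely $\tconsistent{\htau_1}{\htau_2}$ between the two cast types) does not conflict with the side conditions of the boxed-value rules -- it does not, since consistency is trivially satisfied when the target is $\tehole$ and is implied by the original well-typedness in the arrow-to-arrow case. In the mechanization the lemma is a direct structural match on the two given derivations, requiring no recursion.
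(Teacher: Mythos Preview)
Your proof is correct and follows the standard approach. The paper itself states this lemma in the appendix without giving an explicit on-paper proof (it is discharged in the Agda mechanization), but a case analysis on the derivation of $\isBoxedValue{\dexp}$ together with inversion on the typing derivation---appealing to Canonical Value Forms in the \rulename{BVVal} case---is exactly how such a lemma is proved, and matches what the mechanization does.
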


\begin{lem}[Canonical Indeterminate Forms]
  If $\hasType{\hDelta}{\emptyset}{\dexp}{\htau}$
  and $\isIndet{\dexp}$
  then
  \begin{enumerate}[nolistsep]
    \item 
      If $\htau = b$ then either
        \begin{enumerate}
          \item $\dexp = \dehole{u}{\subst}{}$ where $\Dbinding{u}{\Gamma}{b} \in \hDelta$ and $\hasType{\hDelta}{\emptyset}{\subst}{\hGamma}$, or
          \item $\dexp = \dhole{\dexp'}{u}{\subst}{}$ where $\hasType{\hDelta}{\emptyset}{\dexp'}{\htau'}$ and $\isFinal{\dexp'}$ and $\Dbinding{u}{\Gamma}{b} \in \hDelta$ and $\hasType{\hDelta}{\emptyset}{\subst}{\hGamma}$, or
          \item $\dexp = \dap{\dexp_1}{\dexp_2}$ where $\hasType{\hDelta}{\emptyset}{\dexp_1}{\tarr{\htau_2}{b}}$ and $\hasType{\hDelta}{\emptyset}{\dexp_2}{\htau_2}$ and $\isIndet{\dexp_1}$ and $\isFinal{\dexp_2}$ and $\dexp_1 \neq \dcasttwo{\dexp_1'}{\tarr{\htau_3}{\htau_4}}{\tarr{\htau_3'}{\htau_4'}}$ for any $\dexp_1', \htau_3, \htau_4, \htau_3', \htau_4'$, or 
          \item $\dexp = \dcasttwo{\dexp'}{\tehole}{b}$ where $\hasType{\hDelta}{\emptyset}{\dexp'}{\tehole}$ and $\isIndet{\dexp'}$ and $\dexp' \neq \dcasttwo{\dexp''}{\htau'}{\tehole}$ for any $\dexp'', \htau'$, or 
          \item $\dexp = \dcastfail{\dexp'}{\htau'}{b}$ where $\hasType{\hDelta}{\emptyset}{\dexp'}{\htau'}$ and $\isGround{\htau'}$ and $\htau' \neq b$.
        \end{enumerate}
    \item 
      If $\htau = \tarr{\htau_1}{\htau_2}$ then either 
        \begin{enumerate}
          \item $\dexp = \dehole{u}{\subst}{}$ where $\Dbinding{u}{\Gamma}{\tarr{\htau_1}{\htau_2}} \in \hDelta$ and $\hasType{\hDelta}{\emptyset}{\subst}{\hGamma}$, or
          \item $\dexp = \dhole{\dexp'}{u}{\subst}{}$ where $\hasType{\hDelta}{\emptyset}{\dexp'}{\htau'}$ and $\isFinal{\dexp'}$ and $\Dbinding{u}{\Gamma}{\tarr{\htau_1}{\htau_2}} \in \hDelta$ and $\hasType{\hDelta}{\emptyset}{\subst}{\hGamma}$, or
          \item $\dexp = \dap{\dexp_1}{\dexp_2}$ where $\hasType{\hDelta}{\emptyset}{\dexp_1}{\tarr{\htau_2'}{\tarr{\htau_1}{\htau_2}}}$ and $\hasType{\hDelta}{\emptyset}{\dexp_2}{\htau_2'}$ and $\isIndet{\dexp_1}$ and $\isFinal{\dexp_2}$ and $\dexp_1 \neq \dcasttwo{\dexp_1'}{\tarr{\htau_3}{\htau_4}}{\tarr{\htau_3'}{\htau_4'}}$ for any $\dexp_1', \htau_3, \htau_4, \htau_3', \htau_4'$, or 
          \item $\dexp = \dcasttwo{\dexp'}{\tarr{\htau_1'}{\htau_2'}}{\tarr{\htau_1}{\htau_2}}$ where $\hasType{\hDelta}{\emptyset}{\dexp'}{\tarr{\htau_1'}{\htau_2'}}$ and $\isIndet{\dexp'}$ and $\tarr{\htau_1'}{\htau_2'} \neq \tarr{\htau_1}{\htau_2}$, or 
          \item $\dexp = \dcasttwo{\dexp'}{\tehole}{\tarr{\tehole}{\tehole}}$ and $\htau_1 = \tehole$ and $\htau_2 = \tehole$ where $\hasType{\hDelta}{\emptyset}{\dexp'}{\tehole}$ and $\isIndet{\dexp'}$ and $\dexp' \neq \dcasttwo{\dexp''}{\htau'}{\tehole}$ for any $\dexp'', \htau'$, or 
          \item $\dexp = \dcastfail{\dexp'}{\htau'}{\tarr{\tehole}{\tehole}}$ and $\htau_1 = \tehole$ and $\htau_2 = \tehole$ where $\hasType{\hDelta}{\emptyset}{\dexp'}{\htau'}$ and $\isGround{\htau'}$ and $\htau' \neq \tarr{\tehole}{\tehole}$.
        \end{enumerate}
    \item 
      If $\htau = \tehole$ then either 
        \begin{enumerate}
          \item $\dexp = \dehole{u}{\subst}{}$ where $\Dbinding{u}{\Gamma}{\tehole} \in \hDelta$ and $\hasType{\hDelta}{\emptyset}{\subst}{\hGamma}$, or
          \item $\dexp = \dhole{\dexp'}{u}{\subst}{}$ where $\hasType{\hDelta}{\emptyset}{\dexp'}{\htau'}$ and $\isFinal{\dexp'}$ and $\Dbinding{u}{\Gamma}{\tehole} \in \hDelta$ and $\hasType{\hDelta}{\emptyset}{\subst}{\hGamma}$, or
          \item $\dexp = \dap{\dexp_1}{\dexp_2}$ and $\hasType{\hDelta}{\emptyset}{\dexp_1}{\tarr{\htau_2}{\tehole}}$ and $\hasType{\hDelta}{\emptyset}{\dexp_2}{\htau_2}$ and $\isIndet{\dexp_1}$ and $\isFinal{\dexp_2}$ and $\dexp_1 \neq \dcasttwo{\dexp_1'}{\tarr{\htau_3}{\htau_4}}{\tarr{\htau_3'}{\htau_4'}}$ for any $\dexp_1', \htau_3, \htau_4, \htau_3', \htau_4'$, or 
          \item $\dexp = \dcasttwo{\dexp'}{\htau'}{\tehole}$ where $\hasType{\hDelta}{\emptyset}{\dexp'}{\htau'}$ and $\isGround{\htau'}$ and $\isIndet{\dexp'}$.
        \end{enumerate}
  \end{enumerate}
\end{lem}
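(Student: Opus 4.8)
The plan is to prove the lemma by case analysis on the derivation of $\isIndet{\dexp}$ (Fig.~\ref{fig:isIndet}), inverting the type assignment derivation $\hasType{\hDelta}{\emptyset}{\dexp}{\htau}$ in each case to pin down $\htau$ and to recover the stated side conditions. No induction is needed: the statement only refers to immediate sub-terms of $\dexp$, and the hypotheses about those sub-terms ($\isIndet{\cdot}$, $\isFinal{\cdot}$, their types, their non-cast-ness) are available directly from the premises of the applicable $\isIndet{\cdot}$ rule together with inversion of the applicable type assignment rule. The key structural fact that makes this clean is that the type assignment system of Fig.~\ref{fig:hasType} is syntax-directed --- every syntactic form of $\dexp$ is the conclusion of exactly one rule, and in particular \rulename{TACast} and \rulename{TAFailedCast} govern the disjoint forms $\dcasttwo{\cdot}{\cdot}{\cdot}$ and $\dcastfail{\cdot}{\cdot}{\cdot}$ --- so inversion is unambiguous and yields precisely the premises of that one rule.

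Concretely, I would walk through the seven $\isIndet{\cdot}$ rules. Rule \rulename{IEHole} gives $\dexp = \dehole{u}{\subst}{}$; inverting \rulename{TAEHole} yields $\Dbinding{u}{\hGamma}{\htau} \in \hDelta$ and $\hasType{\hDelta}{\emptyset}{\subst}{\hGamma}$, which is alternative (a) under whichever of the three cases $\htau$ falls into. Rule \rulename{INEHole} gives $\dexp = \dhole{\dexp'}{u}{\subst}{}$ with $\isFinal{\dexp'}$; inverting \rulename{TANEHole} supplies a type for $\dexp'$ and the hole-context membership, giving alternative (b). Rule \rulename{IAp} gives $\dexp = \dap{\dexp_1}{\dexp_2}$ with $\isIndet{\dexp_1}$, $\isFinal{\dexp_2}$, and $\dexp_1$ not an arrow cast; inverting \rulename{TAAp} supplies the arrow type of $\dexp_1$ and the domain type of $\dexp_2$ --- here the result type $\htau$ is unconstrained, so this case contributes alternative (c) in all three of cases 1--3. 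The remaining four rules concern casts, and each constrains the target type: \rulename{ICastGroundHole} forces the target to be $\tehole$, landing in case 3, alternative (d); \rulename{ICastArr} forces an arrow target, landing in case 2, alternative (d); \rulename{IFailedCast} (via \rulename{TAFailedCast}) forces a \emph{ground} target $\htau_2$, so a further split on $\htau_2 \in \{b, \tarr{\tehole}{\tehole}\}$ produces alternative (e) of case 1 or alternative (f) of case 2; and \rulename{ICastHoleGround}, with ground target, similarly splits into alternative (d) of case 1 or alternative (e) of case 2. In every cast case the inversion of \rulename{TACast} or \rulename{TAFailedCast} also discharges the type-consistency (or ground-inequality) side conditions, and the syntactic side conditions of the $\isIndet{\cdot}$ rules --- e.g. $\dexp' \neq \dcasttwo{\dexp''}{\htau'}{\tehole}$ in \rulename{ICastHoleGround} --- carry over verbatim into the statement.

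The proof is essentially bookkeeping, so the main obstacle is organizational rather than mathematical: each top-level case is indexed by the shape of $\htau$, but its list of alternatives is assembled from contributions of several different $\isIndet{\cdot}$ rules, so one must check that every $\isIndet{\cdot}$ rule lands in the correct case(s) and that no spurious alternative is listed or omitted (the sub-splits on ground types for \rulename{IFailedCast} and \rulename{ICastHoleGround} are the easiest places to slip). A secondary concern is that this lemma is consumed by the cast cases of the Progress theorem, so the alternatives must be stated with exactly the side conditions (finality of sub-terms, ground-ness of the cast types, the ``not a cast to hole'' condition that distinguishes \rulename{ICastHoleGround} from a reducible redex) that Progress needs; confirming the statement is tight enough for that downstream use is the one step that requires thought beyond mechanical inversion.
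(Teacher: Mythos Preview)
Your proposal is correct and is precisely the approach one would expect: a case analysis on the $\isIndet{\dexp}$ derivation, followed by inversion of the syntax-directed type assignment rules, with the two ground-target cast rules (\rulename{ICastHoleGround} and \rulename{IFailedCast}) further split on whether the ground target is $b$ or $\tarr{\tehole}{\tehole}$. The paper does not spell out a proof in prose --- it states the lemma in the appendix and defers to the Agda mechanization --- but that mechanization follows exactly this structure, so your write-up matches the paper's (implicit) argument.
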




\subsection{Complete Programs}
\label{sec:complete-programs}

\begin{figure}[h]
\judgbox{\isComplete{\htau}}{$\htau$ is complete}
\begin{mathpar}
\inferrule[TCBase]{ }{
  \isComplete{b}
}

\inferrule[TCArr]{
  \isComplete{\htau_1}\\
  \isComplete{\htau_2}
}{
  \isComplete{\tarr{\htau_1}{\htau_2}}
}
\end{mathpar}

\vsepRule

\judgbox{\isComplete{\hexp}}{$\hexp$ is complete}
\begin{mathpar}
\inferrule[ECVar]{ }{
  \isComplete{x}
}

\inferrule[ECConst]{ }{
  \isComplete{c}
}

\inferrule[ECLam1]{
  \isComplete{\htau}\\
  \isComplete{\hexp}
}{
  \isComplete{\halam{x}{\htau}{\hexp}}
}

\inferrule[ECLam2]{
  \isComplete{\hexp}
}{
  \isComplete{\hlam{x}{\hexp}}
}

\inferrule[ECAp]{
  \isComplete{\hexp_1}\\
  \isComplete{\hexp_2}
}{
  \isComplete{\hap{\hexp_1}{\hexp_2}}
}

\inferrule[ECAsc]{
  \isComplete{\hexp}\\
  \isComplete{\htau}
}{
  \isComplete{\hexp : \htau}
}
\end{mathpar}

\vsepRule

\judgbox{\isComplete{\dexp}}{$\dexp$ is complete}
\begin{mathpar}
\inferrule[DCVar]{ }{
  \isComplete{x}
}

\inferrule[DCConst]{ }{
  \isComplete{c}
}

\inferrule[DCLam]{
  \isComplete{\htau}\\
  \isComplete{\dexp}
}{
  \isComplete{\dlam{x}{\htau}{\dexp}}
}

\inferrule[DCAp]{
  \isComplete{\dexp_1}\\
  \isComplete{\dexp_2}
}{
  \isComplete{\dap{\dexp_1}{\dexp_2}}
}

\inferrule[DCCast]{
  \isComplete{\dexp}\\
  \isComplete{\htau_1}\\
  \isComplete{\htau_2}
}{
  \isComplete{\dcasttwo{\dexp}{\htau_1}{\htau_2}}
}
\end{mathpar}

\caption{Complete types, external expressions, and internal expressions}
\label{fig:complete}
\end{figure}

We define $\isComplete{\hGamma}$ as follows.
\begin{defn}[Typing Context Completeness]
$\isComplete{\hGamma}$ iff for each $x : \htau \in \hGamma$, we have $\isComplete{\htau}$.
\end{defn}

When two types are complete and consistent, they are equal.

\begin{lem}[Complete Consistency]\label{lem:complete-consistency} If $\tconsistent{\htau_1}{\htau_2}$ and $\isComplete{\htau_1}$ and $\isComplete{\htau_2}$ then $\htau_1 = \htau_2$.
\end{lem}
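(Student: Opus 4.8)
The plan is to proceed by induction on the derivation of $\tconsistent{\htau_1}{\htau_2}$. The single crucial observation is that the completeness judgement for types has no rule concluding $\isComplete{\tehole}$ (only \textsc{TCBase} and the arrow rule are available), so a complete type is never the hole type; this is the fact that discharges the two hole cases of consistency. I prefer induction on the consistency derivation rather than structural induction on the types because the consistency relation is not syntax-directed (\textsc{TCRefl} overlaps with every other rule), and induction on the derivation sidesteps that overlap cleanly.

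The cases are as follows. \textsc{TCRefl}: here $\htau_1 = \htau_2$ syntactically, so there is nothing to prove. \textsc{TCHole1}: here $\htau_1 = \tehole$, but then $\isComplete{\htau_1}$ is underivable, contradicting the hypothesis, so this case is vacuous; \textsc{TCHole2} is symmetric using $\isComplete{\htau_2}$. \textsc{TCArr}: here $\htau_1 = \tarr{\htau_{11}}{\htau_{12}}$ and $\htau_2 = \tarr{\htau_{21}}{\htau_{22}}$, with subderivations $\tconsistent{\htau_{11}}{\htau_{21}}$ and $\tconsistent{\htau_{12}}{\htau_{22}}$. Inverting $\isComplete{\tarr{\htau_{11}}{\htau_{12}}}$ — the only applicable rule is the arrow completeness rule — yields $\isComplete{\htau_{11}}$ and $\isComplete{\htau_{12}}$, and likewise inverting $\isComplete{\tarr{\htau_{21}}{\htau_{22}}}$ yields $\isComplete{\htau_{21}}$ and $\isComplete{\htau_{22}}$. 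Applying the induction hypothesis to each of the two premises gives $\htau_{11} = \htau_{21}$ and $\htau_{12} = \htau_{22}$, hence $\htau_1 = \htau_2$.

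There is essentially no obstacle here: the proof is a short, routine induction. The only pieces of bookkeeping are (i) recording that $\isComplete{\cdot}$ never holds of $\tehole$, used in the two hole cases, and (ii) a trivial inversion lemma stating that $\isComplete{\tarr{\htau_{a}}{\htau_{b}}}$ implies $\isComplete{\htau_a}$ and $\isComplete{\htau_b}$, used in the arrow case. Both are immediate from the definition of type completeness, so if anything this lemma serves mainly to record the expected sanity property that consistency collapses to equality on the fragment of types without holes.
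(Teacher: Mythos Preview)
Your proof is correct and takes essentially the same approach as the paper, which simply records it as a ``straightforward rule induction.'' Your case analysis spells out exactly what that rule induction amounts to.
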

\begin{proof} By straightforward rule induction. \end{proof}

This implies that in a well-typed and complete internal expression, every cast is 
an identity cast.

\begin{lem}[Complete Casts] If $\hasType{\hGamma}{\hDelta}{\dcasttwo{\dexp}{\htau_1}{\htau_2}}{\htau_2}$ and $\isComplete{\dcasttwo{\dexp}{\htau_1}{\htau_2}}$ then $\htau_1 = \htau_2$. \end{lem}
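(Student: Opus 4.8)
The plan is to proceed by inversion on the two hypotheses. First I would invert the typing derivation $\hasType{\hGamma}{\hDelta}{\dcasttwo{\dexp}{\htau_1}{\htau_2}}{\htau_2}$. Among the type assignment rules of Fig.~\ref{fig:hasType}, the only one whose conclusion has the syntactic shape of an ordinary (non-failed) cast $\dcasttwo{\dexp}{\htau_1}{\htau_2}$ is \rulename{TACast}; in particular \rulename{TAFailedCast} concludes with the distinct form $\dcastfail{\dexp}{\htau_1}{\htau_2}$, so it does not apply here. Hence inversion yields $\hasType{\hGamma}{\hDelta}{\dexp}{\htau_1}$ together with the side condition $\tconsistent{\htau_1}{\htau_2}$.

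Next I would invert the completeness hypothesis $\isComplete{\dcasttwo{\dexp}{\htau_1}{\htau_2}}$. The only rule concluding completeness of a cast is \rulename{DCCast}, whose premises include $\isComplete{\htau_1}$ and $\isComplete{\htau_2}$. Having thereby obtained $\tconsistent{\htau_1}{\htau_2}$ with both $\htau_1$ and $\htau_2$ complete, I would conclude by appealing directly to the Complete Consistency lemma (Lemma~\ref{lem:complete-consistency}), which gives $\htau_1 = \htau_2$, as required. There is no real obstacle: the argument is a two-step inversion followed by a one-line invocation of an already-established lemma. The only point requiring any care is to note that the failed-cast form $\dcastfail{\dexp}{\htau_1}{\htau_2}$ is syntactically distinct from the ordinary cast form, so inversion on the typing judgement is unambiguous and \rulename{TACast} is forced.
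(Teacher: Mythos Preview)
Your proposal is correct and matches the paper's approach: the paper's proof is simply ``By straightforward rule induction and Lemma~\ref{lem:complete-consistency},'' which for a term already fixed to be a cast amounts exactly to the two inversions you describe followed by an appeal to Complete Consistency. Your write-up is in fact more explicit than the paper's one-line sketch, but the underlying argument is the same.
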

\begin{proof} By straightforward rule induction and Lemma~\ref{lem:complete-consistency}. \end{proof}

\subsection{Multiple Steps}
\label{sec:multi-step}

\begin{figure}[h]
\judgbox{\multiStepsTo{\dexp}{\dexp'}}{$\dexp$ multi-steps to $\dexp'$}
\begin{mathpar}
\inferrule[MultiStepRefl]{~}{
  \multiStepsTo{\dexp}{\dexp}
}

\inferrule[MultiStepSteps]{
  \stepsToD{}{\dexp}{\dexp'}\\
  \multiStepsTo{\dexp'}{\dexp''}
}{
  \multiStepsTo{\dexp}{\dexp''}
}
\end{mathpar}
\CaptionLabel{Multi-Step Transitions}{fig:multi-step}
\end{figure}

\subsection{Hole Filling}\label{sec:hole-filling}
\begin{lem}[Filling] ~
  \begin{enumerate}[nolistsep]
  \item If $\hasType{\hDelta, \Dbinding{u}{\hGamma'}{\htau'}}{\hGamma}{\dexp}{\tau}$
  and $\hasType{\hDelta}{\hGamma'}{\dexp'}{\htau'}$
  then $\hasType{\hDelta}{\hGamma}{\instantiate{\dexp'}{u}{\dexp}}{\tau}$.
  \item If $\hasType{\hDelta, \Dbinding{u}{\hGamma'}{\htau'}}{\hGamma}{\sigma}{\hGamma''}$
  and $\hasType{\hDelta}{\hGamma'}{\dexp'}{\htau'}$
  then $\hasType{\hDelta}{\hGamma}{\instantiate{\dexp'}{u}{\sigma}}{\hGamma''}$.
  \end{enumerate}
\end{lem}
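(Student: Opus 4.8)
The plan is to prove the two parts by a single mutual induction: part~(1) by rule induction on the derivation of $\hasType{\hDelta, \Dbinding{u}{\hGamma'}{\htau'}}{\hGamma}{\dexp}{\tau}$, and part~(2) by induction on the structure of $\sigma$. This is well-founded on the combined size of $\dexp$ (resp.\ $\sigma$): when part~(1) appeals to part~(2) the environment $\sigma$ is a proper sub-term of the hole closure under consideration, and when part~(2) appeals to part~(1) the component expression is a proper sub-term of $\sigma$. For the non-hole-closure cases of part~(1)---constants, variables, $\lambda$-abstractions, applications, casts, and failed casts---the argument is routine: invert the corresponding rule from Fig.~\ref{fig:hasType}, apply the induction hypotheses to the immediate sub-terms, and re-assemble with the same rule. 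Two observations: in the $\lambda$ case $\instantiate{\dexp'}{u}{-}$ descends through the binder unconditionally (no freshness side-condition), while the hole context is unchanged and only $\hGamma$ grows to $\hGamma, x : \htau_1$, so the induction hypothesis applies directly; and casts and failed casts carry their types syntactically, so typing is preserved without further work.

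The interesting cases of part~(1) are the four hole-closure forms. First consider $\dexp = \dehole{u}{\sigma}{}$, a closure for the very hole being filled. Inverting \rulename{TAEHole} gives $\Dbinding{u}{\hGamma_0}{\tau} \in \hDelta, \Dbinding{u}{\hGamma'}{\htau'}$ and $\hasType{\hDelta, \Dbinding{u}{\hGamma'}{\htau'}}{\hGamma}{\sigma}{\hGamma_0}$; since the lookup of $u$ in the extended hole context yields exactly $\Dbinding{u}{\hGamma'}{\htau'}$, we get $\hGamma_0 = \hGamma'$ and $\tau = \htau'$. By the induction hypothesis for part~(2), $\hasType{\hDelta}{\hGamma}{\instantiate{\dexp'}{u}{\sigma}}{\hGamma'}$. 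Since $\instantiate{\dexp'}{u}{\dehole{u}{\sigma}{}} = [\instantiate{\dexp'}{u}{\sigma}]\dexp'$, it remains to show $\hasType{\hDelta}{\hGamma}{[\instantiate{\dexp'}{u}{\sigma}]\dexp'}{\htau'}$ from $\hasType{\hDelta}{\hGamma'}{\dexp'}{\htau'}$. This is precisely a \emph{simultaneous substitution lemma}---that applying a substitution typed against a context to an expression typed in that context preserves typing---proved by a routine induction on the typing derivation of $\dexp'$ (or by iterating Lemma~\ref{thm:substitution} with weakening of $\hGamma$). The case $\dexp = \dhole{d_0}{u}{\sigma}{}$ is identical: $\instantiate{\dexp'}{u}{-}$ discards the enveloped $d_0$, so only the $\sigma$ sub-derivation is used, and the same appeal closes the case---this is exactly why a non-empty hole cannot be modeled as an empty hole of arrow type applied to $d_0$. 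For a closure on a different hole name---$\dexp = \dehole{v}{\sigma}{}$ or $\dexp = \dhole{d_0}{v}{\sigma}{}$ with $v \neq u$---inversion places the binding $\Dbinding{v}{\hGamma_0}{\tau}$ in $\hDelta$ itself (since $v \neq u$); apply part~(2) to $\sigma$ (and, in the non-empty case, part~(1) to $d_0$) and re-assemble with \rulename{TAEHole} (resp.\ \rulename{TANEHole}). No weakening of $\hDelta$ is needed, since we are removing, not adding, a hypothesis.

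For part~(2), induct on $\sigma$. If $\sigma = \cdot$ then $\domof{\sigma} = \emptyset$, so $\hGamma'' = \emptyset$ and $\instantiate{\dexp'}{u}{\cdot} = \cdot$ is vacuously typed. If $\sigma = \sigma_0, d_0/x$, the definition of substitution typing gives $\hGamma'' = \hGamma''_0, x : \htau_0$ with $\hasType{\hDelta, \Dbinding{u}{\hGamma'}{\htau'}}{\hGamma}{\sigma_0}{\hGamma''_0}$ and $\hasType{\hDelta, \Dbinding{u}{\hGamma'}{\htau'}}{\hGamma}{d_0}{\htau_0}$; by the induction hypotheses (part~(2) on $\sigma_0$, part~(1) on $d_0$) we obtain $\hasType{\hDelta}{\hGamma}{\instantiate{\dexp'}{u}{\sigma_0}}{\hGamma''_0}$ and $\hasType{\hDelta}{\hGamma}{\instantiate{\dexp'}{u}{d_0}}{\htau_0}$, so $\instantiate{\dexp'}{u}{\sigma} = \instantiate{\dexp'}{u}{\sigma_0}, \instantiate{\dexp'}{u}{d_0}/x$ is typed against $\hGamma''$.

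The main obstacle is marshalling the simultaneous substitution lemma and applying it correctly in the two ``matching-$u$'' closure cases: one must see that after recursively filling hole $u$ inside the stored environment, the resulting well-typed substitution is exactly the context morphism we push through the fill expression $\dexp'$, whose typing context $\hGamma'$ was recorded in the hole context precisely so that this step type-checks. Everything else is bookkeeping that mirrors the subject-reduction argument for contextual substitution in CMTT~\cite{Nanevski2008} and is discharged in the Agda development by structural recursion over the inductive representation of substitutions.
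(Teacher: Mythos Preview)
Your proposal is correct and follows essentially the same approach as the paper: mutual rule induction on the typing derivation (resp.\ structural induction on $\sigma$), with the Substitution Lemma invoked in the hole-closure cases to push the recorded environment through the fill expression. The paper's proof sketch is terse (``rule induction on the first assumption, appealing to the Substitution Lemma as necessary''), but your expanded argument---including the observation that the matching-$u$ case reduces to a simultaneous substitution lemma obtainable by iterating Lemma~\ref{thm:substitution}---is exactly what that sketch unpacks to.
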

\begin{proof}
In each case, we proceed by rule induction on the first assumption, appealing to the Substitution Lemma as necessary.
\end{proof}

To prove the Commutativity theorem, we need the auxiliary definitions in Fig.~\ref{fig:evalctx-filling}, which lift hole filling to evaluation contexts taking care to consider the special situation where the mark is inside the hole that is being filled.

\begin{figure}[h]
\judgbox{\inhole{u}{\evalctx}}{The mark in $\evalctx$ is inside non-empty hole closure $u$}
\begin{mathpar}
\inferrule[InHoleNEHole]{~}{
  \inhole{u}{\dhole{\evalctx}{u}{\subst}{}}
}

\inferrule[InHoleAp1]{
  \inhole{u}{\evalctx}
}{
  \inhole{u}{\hap{\evalctx}{\dexp}}
}

\inferrule[InHoleAp2]{
  \inhole{u}{\evalctx}
}{
  \inhole{u}{\hap{\dexp}{\evalctx}}
}
\\
\inferrule[InHoleCast]{
  \inhole{u}{\evalctx}
}{
  \inhole{u}{\dcasttwo{\evalctx}{\htau_1}{\htau_2}}
}

\inferrule[InHoleFailedCast]{
  \inhole{u}{\evalctx}
}{
  \inhole{u}{\dcastfail{\evalctx}{\htau_1}{\htau_2}}
}
\end{mathpar}

\vsepRule

\judgbox{\instantiate{d}{u}{\evalctx} = \evalctx'}{$\evalctx'$ is obtained by filling hole $u$ in $\evalctx$ with $d$}
\begin{mathpar}
\inferrule[EFillMark]{~}{
  \instantiate{d}{u}{\evalhole} = \evalhole
}

\inferrule[EFillAp1]{
  \instantiate{d}{u}{\evalctx} = \evalctx'
}{
  \instantiate{d}{u}{\hap{\evalctx}{d_2}} = \hap{\evalctx'}{
  	\instantiate{d}{u}{d_2}}
}

\inferrule[EFillAp2]{
  \instantiate{d}{u}{\evalctx} = \evalctx'
}{
  \instantiate{d}{u}{\hap{d_1}{\evalctx}} = \hap{(
  	\instantiate{d}{u}{d_1})}{\evalctx'}
}

\inferrule[EFillNEHole]{
  u \neq v\\
  \instantiate{d}{u}{\evalctx}={\evalctx'}
}{
  \instantiate{d}{u}{\dhole{\evalctx}{v}{\sigma}{}} = \dhole{\evalctx'}{v}{\instantiate{d}{u}{\sigma}{}}{}
}

\inferrule[EFillCast]{
	\instantiate{d}{u}{\evalctx} = \evalctx'
}{
	\instantiate{d}{u}{\dcasttwo{\evalctx}{\htau_1}{\htau_2}} = 
	\dcasttwo{\evalctx'}{\htau_1}{\htau_2}
}

\inferrule[EFillFailedCast]{
	\instantiate{d}{u}{\evalctx} = \evalctx'
}{
	\instantiate{d}{u}{\dcastfail{\evalctx}{\htau_1}{\htau_2}} = 
	\dcastfail{\evalctx'}{\htau_1}{\htau_2}
}
\end{mathpar}
\CaptionLabel{Evaluation Context Filling}{fig:evalctx-filling}
\end{figure}

We also need the following lemmas, which characterize how hole filling interacts with substitution and instruction transitions. 
\begin{lem}[Substitution Commutativity]
  If
  \begin{enumerate}[nolistsep]
  	\item $\hasType{\hDelta, \Dbinding{u}{\hGamma'}{\htau'}}{x : \htau_2}{\dexp_1}{\tau}$ and
  	\item $\hasType{\hDelta, \Dbinding{u}{\hGamma'}{\htau'}}{\emptyset}{\dexp_2}{\htau_2}$ and
  	\item $\hasType{\hDelta}{\hGamma'}{\dexp'}{\htau'}$
  \end{enumerate}

  then  $\instantiate{d'}{u}{\substitute{d_2}{x}{d_1}} = \substitute{\instantiate{d'}{u}{d_2}}{x}{\instantiate{d'}{u}{d_1}}$.
\end{lem}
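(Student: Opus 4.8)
The plan is to prove the identity by structural induction on $d_1$, carried out \emph{simultaneously} with the analogous statement for substitutions, $\instantiate{d'}{u}{\substitute{d_2}{x}{\sigma}} = \substitute{\instantiate{d'}{u}{d_2}}{x}{\instantiate{d'}{u}{\sigma}}$ (under the corresponding hypotheses), since the hole-closure cases of the expression induction reduce to the substitution statement, and the non-empty-hole case recurses back to the expression statement on the enveloped term. The typing hypotheses are used only to extract the structural facts we actually need: from (2), that $d_2$ is closed (it is typed under the empty typing context, so $\fvof{d_2}=\emptyset$); from (1), that every hole closure occurring in $d_1$ carries a substitution whose domain is the typing context recorded for that hole in $\hDelta$ (so in particular a closure for $u$ has a substitution with domain $\domof{\hGamma'}$); and from (3), that $\fvof{d'}\subseteq\domof{\hGamma'}$.

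The variable and constant cases are immediate from the definitions of $\substitute{\cdot}{\cdot}{\cdot}$ and $\instantiate{\cdot}{\cdot}{\cdot}$ ($d_1=x$ gives $\instantiate{d'}{u}{d_2}$ on both sides). For $d_1=\halam{z}{\htau}{d_1'}$ the variable conventions from Sec.~\ref{sec:agda-mechanization} (Barendregt plus the uniqueness-of-binders invariant) let us take $z$ distinct from $x$ and fresh for $d_2$, $d'$, and for the domain of every hole context; hole filling descends into the body with no side condition, and the side condition on substitution holds because $d_2$ is closed and $\fvof{\instantiate{d'}{u}{d_2}}\subseteq\fvof{d'}\cup\fvof{d_2}$ contains no binder. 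We then apply the induction hypothesis to $d_1'$. The application, cast, and failed-cast cases are congruences of the same kind, closed under the induction hypothesis on each subterm; a hole closure $\dehole{v}{\sigma}{}$ (or $\dhole{d_1'}{v}{\sigma}{}$) with $v\neq u$ is likewise easy, since both operations then merely recurse into $\sigma$ (and into $d_1'$), where the substitution-level (and expression-level) induction hypothesis applies after unfolding.

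The one case with real content is a closure for the hole being filled, $d_1=\dehole{u}{\sigma}{}$ (and, analogously, $\dhole{d_1'}{u}{\sigma}{}$). Unfolding, the left-hand side becomes $[\,\instantiate{d'}{u}{\substitute{d_2}{x}{\sigma}}\,]\,d'$ and the right-hand side becomes $[\instantiate{d'}{u}{d_2}/x]\bigl([\,\instantiate{d'}{u}{\sigma}\,]\,d'\bigr)$. The bridge is an auxiliary substitution-composition fact, itself proved by a short induction on its expression argument: $[e/x]\bigl([\rho]\,d\bigr)=[\,\substitute{e}{x}{\rho}\,]\,d$ whenever $x\notin\domof{\rho}$ and $x\notin\fvof{d}$. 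Both side conditions hold here: $\domof{\instantiate{d'}{u}{\sigma}}=\domof{\sigma}=\domof{\hGamma'}$, and $x\notin\domof{\hGamma'}$ by freshness, whence $x\notin\fvof{d'}$ by (3). This rewrites the right-hand side to $[\,[\instantiate{d'}{u}{d_2}/x](\instantiate{d'}{u}{\sigma})\,]\,d'$, and the substitution-level induction hypothesis gives $[\instantiate{d'}{u}{d_2}/x](\instantiate{d'}{u}{\sigma})=\instantiate{d'}{u}{\substitute{d_2}{x}{\sigma}}$, closing the case.

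The main obstacle is precisely this hole-being-filled case: one must track carefully how the ordinary substitution $[d_2/x]$ and the closure's delayed substitution $\sigma$ compose once both have been ``applied'' to the fill term $d'$, and one must know that $\instantiate{d'}{u}{\cdot}$ never injects a fresh free occurrence of $x$ into the range of $\sigma$. Both points reduce to $x$ being fresh for $\hGamma'$ (hence for $d'$); in the use of this lemma inside the proof of \textbf{Commutativity}, $x$ is the binder of a $\beta$-redex and so is fresh by construction, and outside that use the global variable-uniqueness convention supplies the same freshness, so no extra hypothesis is required. Everything else is routine unfolding, with the \textbf{Substitution} lemma discharging the well-typedness side conditions encountered along the way.
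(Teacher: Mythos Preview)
Your proposal is correct and follows essentially the same approach as the paper, which merely states that the proof proceeds ``by structural induction on $d_1$ and rule induction on the typing premises, which serve to ensure that the free variables in $d'$ are accounted for by every closure for $u$.'' You supply the details the paper omits---the simultaneous induction with the substitution-level statement, the auxiliary substitution-composition fact, and the explicit appeal to Barendregt to secure $x\notin\domof{\hGamma'}$---all of which are natural elaborations rather than a different route.
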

\begin{proof}
We proceed by structural induction on $d_1$ and rule induction on the typing premises, which serve to ensure that the free
variables in $d'$ are accounted for by every closure for $u$.
\end{proof}

\begin{lem}[Instruction Commutativity]
  If
  \begin{enumerate}[nolistsep]
  	\item $\hasType{\hDelta, \Dbinding{u}{\hGamma'}{\htau'}}{\emptyset}{\dexp_1}{\tau}$ and
  	\item $\hasType{\hDelta}{\hGamma'}{\dexp'}{\htau'}$ and
  	\item $\reducesE{}{\dexp_1}{\dexp_2}$
  \end{enumerate}

  then $\reducesE{}{\instantiate{\dexp'}{u}{\dexp_1}}
                     {\instantiate{\dexp'}{u}{\dexp_2}}$.
\end{lem}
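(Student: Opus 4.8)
The plan is to proceed by case analysis on the derivation of $\reducesE{}{\dexp_1}{\dexp_2}$, i.e.\ on which of the rules of Fig.~\ref{fig:instruction-transitions} is applied. In each case I unfold the definition of hole filling (Fig.~\ref{fig:substitution}) on $\dexp_1$, observe that the very same instruction rule applies to the result, and check that the term it produces is syntactically identical to the unfolding of $\instantiate{\dexp'}{u}{\dexp_2}$ (using $\dcastthree{\dexp}{\htau_1}{\htau_2}{\htau_3} \defeq \dcasttwo{\dcasttwo{\dexp}{\htau_1}{\htau_2}}{\htau_2}{\htau_3}$ where the rule manipulates triple casts).

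All the cast-related cases---\rulename{ITApCast}, \rulename{ITCastId}, \rulename{ITCastSucceed}, \rulename{ITCastFail}, \rulename{ITGround}, and \rulename{ITExpand}---are routine bookkeeping. Hole filling commutes homomorphically with the application, cast, and failed-cast forms, and it never touches types, so every side condition of these rules (the ground-type judgements $\isGround{\cdot}$, the type disequalities, and the matched-ground-type judgement $\groundmatch{\cdot}{\cdot}$) carries over verbatim. For instance, in the \rulename{ITApCast} case where $\dexp_1 = \hap{\dcasttwo{\dexp_{11}}{\tarr{\htau_1}{\htau_2}}{\tarr{\htau_1'}{\htau_2'}}}{\dexp_{12}}$, pushing hole filling inward yields $\hap{\dcasttwo{(\instantiate{\dexp'}{u}{\dexp_{11}})}{\tarr{\htau_1}{\htau_2}}{\tarr{\htau_1'}{\htau_2'}}}{(\instantiate{\dexp'}{u}{\dexp_{12}})}$, which takes an \rulename{ITApCast} step (the type disequality is unchanged) to a term that, after unfolding hole filling through the outer cast and the inner application of $\dexp_2$, coincides with $\instantiate{\dexp'}{u}{\dexp_2}$; the remaining cast cases are entirely analogous.

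The one case requiring the typing hypotheses is \rulename{ITLam}, where $\dexp_1 = \hap{(\halam{x}{\htau}{\dexp_{11}})}{\dexp_{12}}$ and $\dexp_2 = \substitute{\dexp_{12}}{x}{\dexp_{11}}$. Unfolding hole filling gives $\instantiate{\dexp'}{u}{\dexp_1} = \hap{(\halam{x}{\htau}{\instantiate{\dexp'}{u}{\dexp_{11}}})}{(\instantiate{\dexp'}{u}{\dexp_{12}})}$, which takes an \rulename{ITLam} step to $\substitute{\instantiate{\dexp'}{u}{\dexp_{12}}}{x}{\instantiate{\dexp'}{u}{\dexp_{11}}}$; I must show this equals $\instantiate{\dexp'}{u}{\substitute{\dexp_{12}}{x}{\dexp_{11}}}$, which is exactly the Substitution Commutativity lemma. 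To apply that lemma I discharge its premises by inverting assumption~(1): \rulename{TAAp} followed by \rulename{TALam} gives some $\htau_2$ with $\hasType{\hDelta, \Dbinding{u}{\hGamma'}{\htau'}}{x : \htau_2}{\dexp_{11}}{\tau}$ and $\hasType{\hDelta, \Dbinding{u}{\hGamma'}{\htau'}}{\emptyset}{\dexp_{12}}{\htau_2}$, and assumption~(2) supplies $\hasType{\hDelta}{\hGamma'}{\dexp'}{\htau'}$.

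I expect the main (and only genuine) obstacle to be this dependence on Substitution Commutativity: the reason ordinary capture-avoiding substitution commutes with $\instantiate{\dexp'}{u}{\cdot}$ is that each closure for $u$ records a delayed substitution whose domain is exactly $\hGamma'$, so the free variables of $\dexp'$ are always accounted for wherever $\dexp'$ lands; this is why the typing hypotheses appear in the statement even though the conclusion is purely syntactic. Once that lemma is available, Instruction Commutativity is a mechanical rule-by-rule check, and the bracketed finality premises play no role since the metatheory omits them, so each instruction rule fires unconditionally on the hole-filled redex.
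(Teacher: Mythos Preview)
Your proposal is correct and follows exactly the paper's own approach: a case analysis on the instruction transition (no induction needed), deferring the \rulename{ITLam} case to the Substitution Commutativity lemma and observing that the remaining cast cases follow directly from the definition of hole filling. You have in fact spelled out more detail than the paper's terse proof sketch, including the inversion of the typing assumption to discharge the premises of Substitution Commutativity.
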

\begin{proof}
We proceed by cases on the instruction transition assumption (no induction is needed). For Rule \rulename{ITLam}, we defer to the Substitution Commutativity lemma above. For the remaining cases, the conclusion follows from the definition of hole filling.
\end{proof}

\begin{lem}[Filling Totality]
Either $\inhole{u}{\evalctx}$ or $\instantiate{d}{u}{\evalctx}=\evalctx'$ for some $\evalctx'$.
\end{lem}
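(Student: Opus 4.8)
The plan is to proceed by a structural induction on the evaluation context $\evalctx$, handling the two disjuncts of the conclusion in tandem. When $\evalctx = \evalhole$ the mark is at top level, so \rulename{EFillMark} gives $\instantiate{d}{u}{\evalhole} = \evalhole$ and the second disjunct holds. For the ``congruence'' forms $\hap{\evalctx_1}{d_2}$, $\hap{d_1}{\evalctx_2}$, $\dcasttwo{\evalctx_1}{\htau_1}{\htau_2}$, and $\dcastfail{\evalctx_1}{\htau_1}{\htau_2}$, I apply the induction hypothesis to the unique sub-context $\evalctx_i$: an $\inhole{u}{\evalctx_i}$ derivation lifts to $\inhole{u}{\evalctx}$ via \rulename{InHoleAp1}/\rulename{InHoleAp2}/\rulename{InHoleCast}/\rulename{InHoleFailedCast}, and a filling $\instantiate{d}{u}{\evalctx_i} = \evalctx_i'$ lifts via the matching \rulename{EFill} rule. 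In the latter case I also use that hole filling $\instantiate{d}{u}{d'}$ on internal expressions---and on substitutions $\sigma$---is a \emph{total} function (it is defined by plain structural recursion in Fig.~\ref{fig:substitution}, with no freshness side conditions), so the expression and substitution components appearing alongside the sub-context in the filled result are always well-defined.

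The case $\evalctx = \dhole{\evalctx_1}{v}{\sigma}{}$ is the one that needs care, and I split on whether $v = u$. If $v = u$, then the mark lies inside an occurrence of hole closure $u$ regardless of $\evalctx_1$, so \rulename{InHoleNEHole} immediately gives the first disjunct with no appeal to the induction hypothesis. If $v \neq u$, I apply the induction hypothesis to $\evalctx_1$: if it yields a filling $\instantiate{d}{u}{\evalctx_1} = \evalctx_1'$, then the side condition $u \neq v$ of \rulename{EFillNEHole} is satisfied and $\instantiate{d}{u}{\evalctx} = \dhole{\evalctx_1'}{v}{\instantiate{d}{u}{\sigma}}{}$; if instead it yields $\inhole{u}{\evalctx_1}$, then the mark is still nested inside a $u$-closure within $\evalctx$, so $\inhole{u}{\evalctx}$ must follow---by the $\inhole{}$ rule that propagates the judgement through a non-empty hole closure whose name differs from $u$.

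That final sub-case is the expected obstacle, and it amounts to checking that the $\inhole{}$ and \rulename{EFill} rule sets \emph{jointly exhaust} all evaluation contexts: one wants the spine of $\evalctx$ from its root down to the mark either to pass through some occurrence of a $u$-named closure (so $\inhole{u}{\evalctx}$, and every attempted \rulename{EFill} stalls at that closure because \rulename{EFillNEHole} requires a differently named hole) or not (so $\instantiate{d}{u}{\evalctx}$ is defined); these alternatives are then also mutually exclusive. That dichotomy is exactly what keeps the downstream Commutativity argument clean: either the selected evaluation context can be filled compositionally, or the redex it selects sits inside a copy of hole $u$ and is discarded wholesale by $\instantiate{\dexp'}{u}{\cdot}$---the ``a now-filled non-empty hole had taken a step'' situation flagged just above. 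No typing hypotheses are needed anywhere in this argument.
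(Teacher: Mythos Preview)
Your approach—structural induction on $\evalctx$—matches the paper's, and your case analysis is considerably more thorough than the paper's one-line sketch. However, there is a genuine issue in your final sub-case: you invoke ``the $\inhole{}$ rule that propagates the judgement through a non-empty hole closure whose name differs from $u$,'' but no such rule appears in Fig.~\ref{fig:evalctx-filling}. The only $\inhole{}$ rule for non-empty holes is \rulename{InHoleNEHole}, which fires only when the closure's name \emph{equals} $u$; there is no congruence rule of the form ``from $\inhole{u}{\evalctx}$ conclude $\inhole{u}{\dhole{\evalctx}{v}{\sigma}{}}$ when $u \neq v$.''

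This is not a flaw in your reasoning so much as an apparent omission in the paper's rule set, and you have in fact put your finger on it. Take $\evalctx = \dhole{\dhole{\evalhole}{u}{\sigma_1}{}}{v}{\sigma_2}{}$ with $v \neq u$: the inner context satisfies $\inhole{u}{\dhole{\evalhole}{u}{\sigma_1}{}}$ by \rulename{InHoleNEHole}, yet neither $\inhole{u}{\evalctx}$ nor $\instantiate{d}{u}{\evalctx} = \evalctx'$ is derivable from the rules as written (the only \rulename{EFill} rule for non-empty holes, \rulename{EFillNEHole}, needs the \emph{inner} context to be fillable, which it is not). So the lemma cannot hold without the congruence rule you posit. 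The fix is exactly the rule you describe, and it is harmless for the downstream Discarding lemma. Rather than citing the rule as though it were given, you should flag explicitly that Fig.~\ref{fig:evalctx-filling} is missing it (it is plausibly present in the Agda development).
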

\begin{proof} We proceed by structural induction on $\evalctx$. Every case is handled by one of the two judgements. \end{proof}

\begin{lem}[Discarding] If
	\begin{enumerate}[nolistsep]
	\item $\selectEvalCtx{d_1}{\evalctx}{\dexp_1'}$ and
	\item $\selectEvalCtx{d_2}{\evalctx}{\dexp_2'}$ and
	\item $\inhole{u}{\evalctx}$
	\end{enumerate}

	then $\instantiate{d}{u}{d_1} = \instantiate{d}{u}{d_2}$.
\end{lem}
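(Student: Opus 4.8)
The plan is to proceed by rule induction on the derivation of $\inhole{u}{\evalctx}$ (equivalently, structural induction on $\evalctx$ restricted to those contexts for which $\inhole{u}{\evalctx}$ holds), generalizing over $d_1$, $d_2$, $\dexp_1'$, $\dexp_2'$, and $d$. The crucial observation is that the sole base case, \rulename{InHoleNEHole}, is exactly the place where hole filling \emph{discards} the enveloped expression. If $\evalctx = \dhole{\evalctx_0}{u}{\subst}{}$, then inverting the two decomposition premises (the only applicable rule is \rulename{FHNEHoleInside}, since $\evalctx$ is headed by a non-empty hole closure) forces $d_1 = \dhole{d_{1,0}}{u}{\subst}{}$ and $d_2 = \dhole{d_{2,0}}{u}{\subst}{}$ for some $d_{1,0}, d_{2,0}$, where the hole name $u$ and the substitution $\subst$ are literally those recorded in $\evalctx$. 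By the defining clause $\instantiate{d}{u}{\dhole{d'}{u}{\subst}{}} = [\instantiate{d}{u}{\subst}]d$, both $\instantiate{d}{u}{d_1}$ and $\instantiate{d}{u}{d_2}$ reduce to $[\instantiate{d}{u}{\subst}]d$, which does not mention $d_{1,0}$ or $d_{2,0}$ at all; the equality is immediate.

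The inductive cases — \rulename{InHoleAp1}, \rulename{InHoleAp2}, \rulename{InHoleCast}, \rulename{InHoleFailedCast} — are all congruence cases and follow a uniform pattern, illustrated by \rulename{InHoleAp1}: here $\evalctx = \hap{\evalctx_0}{\dexp_2}$ with $\inhole{u}{\evalctx_0}$. Inverting the decomposition premises (necessarily via \rulename{FHAp1}) gives $d_1 = \hap{d_{1,0}}{\dexp_2}$ and $d_2 = \hap{d_{2,0}}{\dexp_2}$ — with the \emph{same} argument $\dexp_2$, since it is carried by $\evalctx$ — together with $\selectEvalCtx{d_{1,0}}{\evalctx_0}{\dexp_1'}$ and $\selectEvalCtx{d_{2,0}}{\evalctx_0}{\dexp_2'}$. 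The induction hypothesis applied at $\evalctx_0$ yields $\instantiate{d}{u}{d_{1,0}} = \instantiate{d}{u}{d_{2,0}}$, and since $\instantiate{\cdot}{u}{\cdot}$ commutes with application we get $\instantiate{d}{u}{d_1} = \hap{\instantiate{d}{u}{d_{1,0}}}{\instantiate{d}{u}{\dexp_2}} = \hap{\instantiate{d}{u}{d_{2,0}}}{\instantiate{d}{u}{\dexp_2}} = \instantiate{d}{u}{d_2}$. The cases for \rulename{InHoleAp2}, \rulename{InHoleCast}, and \rulename{InHoleFailedCast} are identical modulo the relevant congruence clause for $\instantiate{\cdot}{u}{\cdot}$ (over $\hap{\dexp}{\cdot}$, $\dcasttwo{\cdot}{\htau_1}{\htau_2}$, and $\dcastfail{\cdot}{\htau_1}{\htau_2}$ respectively); the bracketed finality premise in \rulename{FHAp2} plays no role, as it is omitted in the metatheory.

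The main obstacle — really the only delicate point — is making the inversions precise: for each nontrivial shape of $\evalctx$ one must note that the decomposition judgement $\selectEvalCtx{\cdot}{\evalctx}{\cdot}$ admits exactly one applicable rule, so that $d_1$ and $d_2$ are forced to share their top-level constructor and all of the ``side data'' that $\evalctx$ fixes (the argument term, the cast types, the hole name, and the substitution). Once that is established the equality propagates mechanically. Note that, as a purely syntactic fact about evaluation contexts and hole filling, this lemma requires no typing hypotheses, no appeal to the Substitution lemma, and no use of Filling Totality.
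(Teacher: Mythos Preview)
Your proof is correct and follows essentially the same approach as the paper, which also proceeds by structural induction on $\evalctx$ together with rule induction on the premises, appealing to the definitions of instruction selection and hole filling. Your write-up simply makes the inversions and the discarding step in the \rulename{InHoleNEHole} case explicit where the paper's proof is terse.
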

\begin{proof} We proceed by structural induction on $\evalctx$ and rule induction on all three assumptions. Each case follows from the definition of instruction selection and hole filling. \end{proof}

\begin{lem}[Filling Distribution] If
	$\selectEvalCtx{d_1}{\evalctx}{d_1'}$ and $\instantiate{d}{u}{\evalctx}=\evalctx'$ then $\selectEvalCtx{\instantiate{d}{u}{d_1}}{\evalctx'}{\instantiate{d}{u}{d_1'}}$.
\end{lem}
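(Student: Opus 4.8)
The plan is to prove the lemma by structural induction on the evaluation context $\evalctx$ --- equivalently, by rule induction on the derivation of $\selectEvalCtx{d_1}{\evalctx}{d_1'}$, whose shape is driven entirely by the shape of $\evalctx$ --- while inverting the hypothesis $\instantiate{d}{u}{\evalctx}=\evalctx'$ in parallel at each step. The two filling operations, the one on internal expressions in Fig.~\ref{fig:substitution} and the one on evaluation contexts in Fig.~\ref{fig:evalctx-filling}, were written to agree on their common syntactic fragment, so each inductive step amounts to pushing $\instantiate{d}{u}{-}$ uniformly through one layer of syntax across all three derivations and re-assembling. No typing hypotheses are required; the statement is purely syntactic.

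In the base case $\evalctx = \evalhole$, rule \rulename{FHOuter} forces $d_1 = d_1'$ and rule \rulename{EFillMark} forces $\evalctx' = \evalhole$, so the goal $\selectEvalCtx{\instantiate{d}{u}{d_1}}{\evalhole}{\instantiate{d}{u}{d_1'}}$ is immediate by \rulename{FHOuter}. For $\evalctx = \hap{\evalctx_0}{d_2}$, inverting \rulename{FHAp1} gives $d_1 = \hap{d_{1,0}}{d_2}$ with $\selectEvalCtx{d_{1,0}}{\evalctx_0}{d_1'}$, inverting \rulename{EFillAp1} gives $\evalctx' = \hap{\evalctx_0'}{\instantiate{d}{u}{d_2}}$ with $\instantiate{d}{u}{\evalctx_0}=\evalctx_0'$, and since filling distributes componentwise over applications we have $\instantiate{d}{u}{d_1} = \hap{\instantiate{d}{u}{d_{1,0}}}{\instantiate{d}{u}{d_2}}$; applying the induction hypothesis to $\evalctx_0$ and re-assembling with \rulename{FHAp1} closes the case. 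The case $\evalctx = \hap{d_a}{\evalctx_0}$ (\rulename{FHAp2}/\rulename{EFillAp2}) is symmetric, and the cast and failed-cast cases $\evalctx = \dcasttwo{\evalctx_0}{\htau_1}{\htau_2}$ and $\evalctx = \dcastfail{\evalctx_0}{\htau_1}{\htau_2}$ go through identically via \rulename{FHCastInside}/\rulename{EFillCast} and \rulename{FHFailedCast}/\rulename{EFillFailedCast}. (In the metatheory of Sec.~\ref{sec:resumption} the bracketed finality premises are excluded, so re-assembly imposes no side condition --- which matters, since filling does not in general preserve finality.)

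The only case carrying content is the non-empty hole closure $\evalctx = \dhole{\evalctx_0}{v}{\sigma}{}$. Here I would use the fact that the hypothesis $\instantiate{d}{u}{\evalctx}=\evalctx'$ can only have been derived by \rulename{EFillNEHole}, whose first premise forces $u \neq v$; the alternative $v = u$ would instead put $\evalctx$ in the $\inhole{u}{\evalctx}$ regime handled separately by the Discarding lemma. With $u \neq v$ in hand, both filling operations act on the closure by recursing into the body and into $\sigma$, so $\instantiate{d}{u}{d_1} = \dhole{\instantiate{d}{u}{d_{1,0}}}{v}{\instantiate{d}{u}{\sigma}}{}$ and $\evalctx' = \dhole{\evalctx_0'}{v}{\instantiate{d}{u}{\sigma}}{}$ with matching environments, and the induction hypothesis on $\evalctx_0$ composed with \rulename{FHNEHoleInside} yields the conclusion. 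I expect this distinctness obligation in the non-empty-hole case --- and the observation that it is discharged by the shape of the context-filling derivation, rather than by a side condition on $d_1$ --- to be the only subtle point; every other step is a routine congruence.
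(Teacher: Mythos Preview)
Your proposal is correct and takes essentially the same approach as the paper: the paper's proof is the one-line ``rule induction on both assumptions, each case follows from the definition of instruction selection and hole filling,'' and your argument spells out exactly those cases, correctly observing that the non-empty-hole case is where the side condition $u \neq v$ is supplied by inverting \rulename{EFillNEHole}. Your remark about the bracketed finality premises being excluded in the metatheory is also on point and matches the paper's convention.
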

\begin{proof} We proceed by rule induction on both assumptions. Each case follows from the definition of instruction selection and hole filling. \end{proof}

\begin{thm}[Commutativity]
  If
  \begin{enumerate}[nolistsep]
  \item $\hasType{\hDelta, \Dbinding{u}{\hGamma'}{\htau'}}{\emptyset}{\dexp_1}{\tau}$ and
  \item $\hasType{\hDelta}{\hGamma'}{\dexp'}{\htau'}$ and
  \item $\multiStepsTo{\dexp_1}{\dexp_2}$
\end{enumerate}

  then $\multiStepsTo{\instantiate{\dexp'}{u}{\dexp_1}}
                     {\instantiate{\dexp'}{u}{\dexp_2}}$.
\end{thm}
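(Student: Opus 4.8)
The plan is to proceed by rule induction on the derivation of $\multiStepsTo{\dexp_1}{\dexp_2}$, reducing the statement to a single-step commutation lemma and then composing. The base case \rulename{MultiStepRefl} is immediate, since $\multiStepsTo{\instantiate{\dexp'}{u}{\dexp_1}}{\instantiate{\dexp'}{u}{\dexp_1}}$. For \rulename{MultiStepSteps} we have $\stepsToD{}{\dexp_1}{\dexp_m}$ and $\multiStepsTo{\dexp_m}{\dexp_2}$; by Preservation at hole context $\hDelta, \Dbinding{u}{\hGamma'}{\htau'}$ we get $\hasType{\hDelta, \Dbinding{u}{\hGamma'}{\htau'}}{\emptyset}{\dexp_m}{\tau}$, so the induction hypothesis yields $\multiStepsTo{\instantiate{\dexp'}{u}{\dexp_m}}{\instantiate{\dexp'}{u}{\dexp_2}}$. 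It therefore suffices to prove the single-step lemma: under the same typing hypotheses, if $\stepsToD{}{\dexp_1}{\dexp_1'}$ then \emph{either} $\instantiate{\dexp'}{u}{\dexp_1} = \instantiate{\dexp'}{u}{\dexp_1'}$ \emph{or} $\stepsToD{}{\instantiate{\dexp'}{u}{\dexp_1}}{\instantiate{\dexp'}{u}{\dexp_1'}}$. Applied to the leading step $\stepsToD{}{\dexp_1}{\dexp_m}$, the first alternative lets us finish by rewriting the IH conclusion, and the second lets us prepend one step via \rulename{MultiStepSteps}.

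To prove the single-step lemma, invert \rulename{Step}: there are an evaluation context $\evalctx$ and a redex $\dexp_0$ with $\selectEvalCtx{\dexp_1}{\evalctx}{\dexp_0}$, $\reducesE{}{\dexp_0}{\dexp_0'}$, and $\selectEvalCtx{\dexp_1'}{\evalctx}{\dexp_0'}$. Since none of the evaluation-context formers bind a variable, decomposing the typing derivation of $\dexp_1$ along $\evalctx$ yields $\hasType{\hDelta, \Dbinding{u}{\hGamma'}{\htau'}}{\emptyset}{\dexp_0}{\tau_0}$ for some $\tau_0$. Now apply the Filling Totality lemma to $\evalctx$: either $\instantiate{\dexp'}{u}{\evalctx} = \evalctx'$ for some $\evalctx'$, or $\inhole{u}{\evalctx}$, i.e. the mark of $\evalctx$ sits inside a copy of the non-empty hole closure $u$.

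In the first subcase, the Filling Distribution lemma gives $\selectEvalCtx{\instantiate{\dexp'}{u}{\dexp_1}}{\evalctx'}{\instantiate{\dexp'}{u}{\dexp_0}}$ and $\selectEvalCtx{\instantiate{\dexp'}{u}{\dexp_1'}}{\evalctx'}{\instantiate{\dexp'}{u}{\dexp_0'}}$, and the Instruction Commutativity lemma (applicable because $\dexp_0$ is closed and well typed at the extended hole context) gives $\reducesE{}{\instantiate{\dexp'}{u}{\dexp_0}}{\instantiate{\dexp'}{u}{\dexp_0'}}$; reassembling with \rulename{Step} gives $\stepsToD{}{\instantiate{\dexp'}{u}{\dexp_1}}{\instantiate{\dexp'}{u}{\dexp_1'}}$. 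In the second subcase, the step $\reducesE{}{\dexp_0}{\dexp_0'}$ happened inside the enveloped expression of a copy of $u$, which hole filling discards, since $\instantiate{\dexp'}{u}{\dhole{\dexp''}{u}{\sigma}{}}$ equals $[\instantiate{\dexp'}{u}{\sigma}]\dexp'$ and never mentions $\dexp''$. Formally, the Discarding lemma applied to $\selectEvalCtx{\dexp_1}{\evalctx}{\dexp_0}$, $\selectEvalCtx{\dexp_1'}{\evalctx}{\dexp_0'}$, and $\inhole{u}{\evalctx}$ gives $\instantiate{\dexp'}{u}{\dexp_1} = \instantiate{\dexp'}{u}{\dexp_1'}$, so no step is needed.

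I expect the main obstacle to be exactly this $\inhole{u}{\evalctx}$ subcase: a redex buried in the part of a now-filled non-empty hole that gets discarded must be shown to simply vanish under filling, and getting this right is precisely why one introduces the $\inhole$ judgement and the evaluation-context filling operation of Fig.~\ref{fig:evalctx-filling} and proves the Filling Totality and Discarding lemmas. A secondary, more routine difficulty is the typing bookkeeping: Instruction Commutativity, and beneath it Substitution Commutativity for the \rulename{ITLam} case, are stated with full typing derivations so that every surviving copy of $u$ is guaranteed to carry a closure environment accounting for the free variables of $\dexp'$; keeping those derivations available across the induction is what the appeal to Preservation at hole context $\hDelta, \Dbinding{u}{\hGamma'}{\htau'}$ provides.
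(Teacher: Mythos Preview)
Your proposal is correct and follows essentially the same route as the paper: rule induction on the multi-step derivation, Preservation to maintain the typing hypotheses, inversion of \rulename{Step}, then the Filling Totality case split with Filling Distribution and Instruction Commutativity in one branch and Discarding in the other. Your handling of the $\inhole{u}{\evalctx}$ branch is in fact slightly more explicit than the paper's sketch (which says only ``conclude by \rulename{MultiStepRefl}''), since you correctly note that the equality from Discarding is used to rewrite the induction hypothesis rather than to finish outright.
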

\begin{proof}
By rule induction on assumption (3). The reflexive case is immediate. In the inductive case, we proceed by rule induction on the stepping premise. There is one rule, Rule~\rulename{Step}. By Filling Totality, either $\inhole{u}{\evalctx}$ or $\instantiate{d}{u}{\evalctx} = \evalctx'$. In the former case, by Discarding, we can conclude by \rulename{MultiStepRefl}. In the latter case, by Instruction Commutativity and Filling Distribution we can take a \rulename{Step}, and we can conclude via \rulename{MultiStepSteps} by applying Filling, Preservation and then the induction hypothesis.
\end{proof}


\subsection{Confluence and Resumption}\label{sec:confluence}
There are various ways to encode the intuition that ``evaluation order does not matter''. One way to do so is
by establishing a confluence property (which is closely related to
the Church-Rosser property \cite{church1936some}).

The most general confluence property does not hold for the dynamic
semantics in Sec.~\ref{sec:calculus} for the usual reason: we do not
reduce under binders (\citet{DBLP:conf/birthday/BlancLM05} discuss the
standard counterexample).
We could recover confluence by specifying reduction under binders,
either generally or in a more restricted form where only closed
sub-expressions are
reduced \cite{DBLP:journals/tcs/CagmanH98,DBLP:conf/birthday/BlancLM05,levy1999explicit}.
However, reduction under binders conflicts with the standard implementation approaches
for most programming languages \cite{DBLP:conf/birthday/BlancLM05}.
A more satisfying approach considers confluence modulo equality \cite{Huet:1980ng}.
The simplest such approach restricts our interest to top-level expressions
of base type that result in values, in which case the following
special case of confluence does hold (trivially when the only base
type has a single value, but also more generally for other base
types).
\begin{lem}[Base Confluence]
  If $\hasType{\Delta}{\emptyset}{\dexp}{b}$ and
  $\multiStepsTo{\dexp}{\dexp_1}$
  and $\isValue{\dexp_1}$
  and $\multiStepsTo{\dexp}{\dexp_2}$
  then $\multiStepsTo{\dexp_2}{\dexp_1}$.
\end{lem}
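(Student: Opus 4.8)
The plan is to treat this as the expected special case of confluence. First I would pin down the endpoints: by \emph{Preservation} (applied along both reduction sequences) $\dexp_1$ and $\dexp_2$ both have type $b$, and by \emph{Canonical Value Forms} the only value of type $b$ is $c$, so $\dexp_1 = c$; by \emph{Finality}, $c$ takes no step, so it is a $\mapsto$-normal form — indeed it contains no redex at all. Hence the whole lemma reduces to exhibiting a common reduct of $c$ and $\dexp_2$, since any such common reduct must equal $c$, yielding $\multiStepsTo{\dexp_2}{\dexp_1}$ as required.

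Obtaining a common reduct is where the real work lies, because weak reduction $\mapsto$ is \emph{not} confluent in general (one can duplicate a redex into the body of a $\lambda$, where $\mapsto$ refuses to reduce it, as in the discussion preceding the lemma). I would route around this by passing through the fully congruent reduction $\to$ on internal expressions — $\mapsto$ extended with congruence rules under $\lambda$ and inside closure environments — which is an orthogonal-enough extension of the cast calculus already shown well-behaved \cite{DBLP:conf/snapl/SiekVCB15}, and which I would prove Church--Rosser by the Tait--Martin-L\"of parallel-reduction technique; the only genuine checking is that the cast-rewrite rules (\rulename{ITApCast}, \rulename{ITCastId}, \rulename{ITGround}, \rulename{ITExpand}, \rulename{ITCastSucceed}, \rulename{ITCastFail}) and the recorded substitutions inside hole closures produce no non-joinable overlaps with $\beta$ or with one another. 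Since $\mapsto\;\subseteq\;\to$, both $c$ and $\dexp_2$ are $\to$-reducts of $\dexp$, so Church--Rosser supplies a common $\to$-reduct; as $c$ has no $\to$-redex, that reduct is $c$, i.e. $\dexp_2 \to^* c$. It then remains to descend back to weak reduction, i.e. to promote $\dexp_2 \to^* c$ to $\multiStepsTo{\dexp_2}{c}$. Here I would use that, by \emph{Progress} together with \emph{Canonical Boxed Forms} and \emph{Canonical Indeterminate Forms}, every closed well-typed term of type $b$ is either $c$, or indeterminate, or $\mapsto$-reducible; and an indeterminate term of type $b$ is headed by an empty hole closure, a non-empty hole closure, a failed cast, a hole-to-$b$ cast, or an application of an indeterminate — a ``neutral'' head that is stable under $\to$ and therefore can never $\to$-reduce to the constant $c$. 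Consequently no term along $\dexp_2 \to^* c$ is indeterminate, so the leftmost/outermost reduction strategy starting from $\dexp_2$ only ever contracts redexes sitting in weak evaluation positions (a genuine $\mapsto$ step), and by the normalization theorem for $\to$ this strategy, applied to the $\to$-normalizing term $\dexp_2$, terminates at the normal form $c$ — that strategy is the desired witness $\multiStepsTo{\dexp_2}{c}$.

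The main obstacle is this last descent step: reconciling the generic failure of weak confluence with the hypothesis that the computation reaches a \emph{base-type value}, which is exactly what forbids a ``stuck $\lambda$ enclosing a live redex'' from surviving along $\dexp_2$'s reduction. (Discharging Church--Rosser for the cast-plus-closure relation is a secondary chore, but routine given the existing cast-calculus metatheory.) A much cheaper argument is available if one restricts to complete top-level programs: \emph{Complete Preservation} keeps every reduct free of holes and failed casts, so $\dexp_2$ is never indeterminate, and the hole-free, identity-cast-only internal fragment is essentially the simply typed $\lambda$-calculus and hence strongly normalizing — so $\dexp_2$ must reduce to a value, necessarily $c$, with no confluence machinery at all. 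This is the sense in which the statement is nearly trivial when the base type has a single inhabitant; the full claim for richer base types is the genuine confluence result sketched above.
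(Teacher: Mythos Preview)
The paper does not actually prove this lemma: it states it and adds the parenthetical remark that it holds ``trivially when the only base type has a single value, but also more generally for other base types.'' There is no argument to compare against beyond that assertion.

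Your proposal is considerably more careful and is essentially the right blueprint. You correctly observe that even with a single base value the result is \emph{not} literally trivial: pinning down $\dexp_1 = c$ is immediate, but obtaining $\multiStepsTo{\dexp_2}{c}$ is not, because the unbracketed $\mapsto$ used in the paper's metatheory is non-deterministic and not confluent (the standard ``duplicate a redex under a $\lambda$'' obstruction, which the paper itself cites). Routing through a fully congruent reduction $\to$, establishing Church--Rosser by parallel reduction, noting that indeterminate forms of type $b$ are neutral-headed and hence cannot $\to$-reduce to $c$, and then descending to $\mapsto$ via a standardization/normalization argument is the standard and correct strategy. The one place I would ask you to sharpen is the descent step: the claim that leftmost-outermost $\to$-reduction coincides with a $\mapsto$-step whenever the current term is neither a value nor indeterminate relies on a standardization theorem for this particular calculus (with casts, failed casts, and substitutions recorded in closure environments), which neither you nor the paper actually develops; it is plausible but not free. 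Your closing ``cheaper'' argument for complete programs is correct but does not cover the lemma as stated (arbitrary $\Delta$), so it does not vindicate the paper's triviality claim either --- if anything, your analysis suggests that claim is an overstatement.
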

We can then prove the following property, which establishes that fill-and-resume is sound.
\begin{thm}[Resumption]
  If $\hasType{\hDelta, \Dbinding{u}{\hGamma'}{\htau'}}{\emptyset}{\dexp}{b}$
  and $\hasType{\hDelta}{\hGamma'}{\dexp'}{\htau'}$
  and $\multiStepsTo{\dexp}{\dexp_1}$
  and $\multiStepsTo{\instantiate{\dexp'}{u}{\dexp}}{\dexp_2}$
  and $\isValue{\dexp_2}$
  then $\multiStepsTo{\instantiate{\dexp'}{u}{\dexp_1}}{\dexp_2}$.
  \begin{proof}
    By Commutativity,
    $\multiStepsTo{\instantiate{\dexp'}{u}{\dexp}}
                  {\instantiate{\dexp'}{u}{\dexp_1}}$.
    By Base Confluence, we can conclude.
  \end{proof}
\end{thm}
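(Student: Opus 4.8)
The plan is to obtain Resumption directly from the \textbf{Commutativity} theorem and the \textbf{Base Confluence} lemma, using the \textbf{Filling} theorem only to discharge a well-typedness side condition. First I would observe that Base Confluence is stated only for well-typed top-level expressions of base type, so I must establish $\hasType{\hDelta}{\emptyset}{\instantiate{\dexp'}{u}{\dexp}}{b}$; this follows immediately from Filling applied to the two typing hypotheses $\hasType{\hDelta, \Dbinding{u}{\hGamma'}{\htau'}}{\emptyset}{\dexp}{b}$ and $\hasType{\hDelta}{\hGamma'}{\dexp'}{\htau'}$.

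Next, I would apply Commutativity to those same two typing hypotheses together with $\multiStepsTo{\dexp}{\dexp_1}$, obtaining $\multiStepsTo{\instantiate{\dexp'}{u}{\dexp}}{\instantiate{\dexp'}{u}{\dexp_1}}$. At this point we have, from the single well-typed base-type term $\instantiate{\dexp'}{u}{\dexp}$, two divergent multi-step reductions: one to the value $\dexp_2$ (the fourth and fifth hypotheses) and one to the resumed term $\instantiate{\dexp'}{u}{\dexp_1}$. Applying Base Confluence with its generic source term instantiated to $\instantiate{\dexp'}{u}{\dexp}$, its value instantiated to $\dexp_2$, and its other reduct instantiated to $\instantiate{\dexp'}{u}{\dexp_1}$, yields $\multiStepsTo{\instantiate{\dexp'}{u}{\dexp_1}}{\dexp_2}$, which is exactly the desired conclusion.

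I do not expect a genuine obstacle inside this argument itself: the theorem is essentially the composition of Commutativity and Base Confluence, and the one thing to be careful about is re-establishing well-typedness via Filling before invoking Base Confluence (Commutativity also carries typing premises, but it threads them rather than requiring well-typedness of the target term). The real difficulty is hidden in the two results being invoked. Commutativity needs a delicate induction over the multi-step and single-step derivations, and in particular the case where a now-filled non-empty hole closure had been the site of a step in the original trace; Base Confluence rests on the confluence-modulo-base-type-equality reasoning sketched in this subsection (rather than full Church--Rosser, which fails here precisely because \HazelnutLive{} does not reduce under binders). Those are where I would concentrate the effort; the Resumption statement is then a short corollary.
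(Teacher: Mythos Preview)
Your proposal is correct and follows exactly the paper's approach: apply Commutativity to obtain $\multiStepsTo{\instantiate{\dexp'}{u}{\dexp}}{\instantiate{\dexp'}{u}{\dexp_1}}$, then invoke Base Confluence to conclude. Your explicit use of Filling to discharge the typing premise of Base Confluence is a detail the paper leaves implicit, but it is the right thing to check.
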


\newcommand{\extensionsSec}{Extensions to Hazelnut Live}
\section{\protect\extensionsSec} 
\label{sec:extensions}

We give two extensions here, numbers and sum types, to maintain parity with \Hazelnut as specified by \citet{popl-paper}.

It is worth observing that these extensions do not make explicit mention of expression holes. The ``non-obvious'' machinery is almost entirely related to casts. Fortunately, there has been progress on the problem of generating ``gradualized'' specifications from standard specifications \cite{DBLP:conf/popl/CiminiS16,DBLP:conf/popl/CiminiS17}. The extensions below, and other extensions of interest, closely follow the output of the gradualizer: \url{http://cimini.info/gradualizerDynamicSemantics/} (which, like our work, is based on the refined account of gradual typing by \cite{DBLP:conf/snapl/SiekVCB15}). The rules for indeterminate forms are mainly where the gradualizer is not sufficient. We leave to future work the related question of automatically generating a hole-aware static and dynamic semantics from a standard language specification.


\subsection{Numbers}
We extend the syntax as follows, assuming $n$ ranges over mathematical numbers of suitable sort:
\[
\begin{array}{rllllll}
\mathsf{HTyp} & \htau & ::= \cdots ~\vert~ \tnum
\\
\mathsf{HExp} & \hexp & ::= \cdots
~\vert~ \hnum{n}
~\vert~ \hadd{\hexp}{\hexp}
\\
\mathsf{IHExp} & \dexp & ::= \cdots
~\vert~ \dnum{n}
~\vert~ \dadd{\dexp}{\dexp}
\end{array}
\]


\vsepRule

\judgbox{
  \hsyn{\hGamma}{\hexp}{\htau}
}{$\hexp$ synthesizes type $\htau$}
\begin{mathpar}
\inferrule[]{~}{\hsyn{\hGamma}{\hnum{n}}{\tnum}}

\inferrule[]{
  \hana{\hGamma}{\hexp_1}{\tnum}\\
  \hana{\hGamma}{\hexp_2}{\tnum}
}{
  \hsyn{\hGamma}{\hadd{\hexp_1}{\hexp_2}}{\tnum}
}
\end{mathpar}

\vsepRule

\judgbox
  {\elabSyn{\hGamma}{\hexp}{\htau}{\dexp}{\Delta}}
  {$\hexp$ synthesizes type $\htau$ and elaborates to $\dexp$}
\begin{mathpar}
\inferrule[]{ ~ }{
  \elabSyn{\hGamma}{\hnum{n}}{\tnum}{\hnum{n}}{\EmptyDelta}
}


\inferrule[]{
  \elabAna{\hGamma}{\hexp_1}{\tnum}{\dexp_1}{\tnum}{\Delta_1}
  \\
  \elabAna{\hGamma}{\hexp_2}{\tnum}{\dexp_2}{\tnum}{\Delta_2}
}{
  \elabSyn{\hGamma}
            {\dadd{\hexp_1}{\hexp_2}}
            {\tnum}
            {\dadd{\dexp_1}{\dexp_2}}
            {\Delta_1 \cup \Delta_2}
}
\end{mathpar}


\judgbox{\hasType{\Delta}{\hGamma}{\dexp}{\htau}}{$\dexp$ is assigned type $\htau$}
\begin{mathpar}
\inferrule[]{
  ~
}{
  \hasType{\Delta}{\hGamma}{\hnum{n}}{\tnum}
}

\inferrule[]{
  \hasType{\Delta}{\hGamma}{\dexp_1}{\tnum}
  \\
  \hasType{\Delta}{\hGamma}{\dexp_2}{\tnum}
}{
  \hasType{\Delta}{\hGamma}{\dadd{\dexp_1}{\dexp_2}}{\tnum}
}

\end{mathpar}


\judgbox{\isValue{\dexp}}{$\dexp$ is a value}

\begin{mathpar}
\inferrule[]
{~}
{\isValue{\dnum{n}}}
\end{mathpar}

\vsepRule

\judgbox{\isGround{\htau}}{$\htau$ is a ground type}
\begin{mathpar}
\inferrule[]{~}{
  \isGround{\tnum}
}
\end{mathpar}

\vsepRule

\judgbox{\isIndet{\dexp}}{$\dexp$ is indeterminate}
\begin{mathpar}
\inferrule[]
{
 \isIndet{\dexp_1}
 \\
 \isFinal{\dexp_2}
}
{\isIndet{\dadd{\dexp_1}{\dexp_2}}}

\inferrule[]
{
 \isFinal{\dexp_1}
 \\
 \isIndet{\dexp_2}
}
{\isIndet{\dadd{\dexp_1}{\dexp_2}}}
\end{mathpar}


\begin{mathpar}
\arraycolsep=4pt\begin{array}{rllllll}
\mathsf{EvalCtx} & \evalctx & ::= & \cdots ~\vert~
  \dadd{\evalctx}{\dexp_2}
  ~\vert~
  \dadd{\dexp_1}{\evalctx}
\end{array}
\end{mathpar}


\judgbox{\selectEvalCtx{\dexp}{\evalctx}{\dexp'}}{$\dexp$ is obtained by placing $\dexp'$ at the mark in $\evalctx$}
\begin{mathpar}
\inferrule[]
{\selectEvalCtx{\dexp_1}{\evalctx}{\dexp_1'}}
{\selectEvalCtx{\dadd{\dexp_1}{\dexp_2}}
               {(\dadd{\evalctx}{\dexp_2})}
               {\dexp_1'}}

\inferrule[]
{
\maybePremise{\isFinal{\dexp_1}}\\
\selectEvalCtx{\dexp_2}{\evalctx}{\dexp_2'}}
{\selectEvalCtx{\dadd{\dexp_1}{\dexp_2}}
               {(\dadd{\dexp_1}{\evalctx})}
               {\dexp_2'}}
\end{mathpar}


\judgbox{\reducesE{}{\dexp_1}{\dexp_2}}{$\dexp_1$ transitions to $\dexp_2$}
\begin{mathpar}
\inferrule[]
{ n_1 + n_2 = n_3 }
{\reducesE{\Delta}
  {\dadd{\dnum{n_1}}{\dnum{n_2}}}
  {\dnum{n_3}}
}
\end{mathpar}



\subsection{Sum Types}

We extend the syntax for sum types as follows:
\[
\begin{array}{rllllll}
\mathsf{HTyp} & \htau & ::= \cdots ~\vert~ {\htau + \htau} &
\\
\mathsf{HExp} & \hexp & ::= \cdots
~\vert~ \hinL{\hexp}
~\vert~ \hinR{\hexp}
~\vert~ \hcase{\hexp}{x}{\hexp}{y}{\hexp}
\\
\mathsf{IHExp} & \dexp & ::= \cdots
~\vert~ \dinL{\htau}{\dexp}
~\vert~ \dinR{\htau}{\dexp}
~\vert~ \dcase{\dexp}{x}{\dexp}{y}{\dexp}
\end{array}
\]



\judgbox
 {\JoinTypes{\htau_1}{\htau_2} = \htau_3}
 {Types~$\htau_1$ and $\htau_2$ join consistently, forming type~$\htau_3$}
\[
\begin{array}{lcl}
\JoinTypes{\htau}{\htau} &=&  \htau
\\
\JoinTypes{\tehole}{\htau} &=&  \htau
\\
\JoinTypes{\htau}{\tehole} &=&  \htau
\\
\JoinTypes
{\tarr{\htau_1}{\htau_2}}
{\tarr{\htau_1}{\htau_2}}
&=&
\tarr{\JoinTypes{\htau_1}{\htau_2}}
     {\JoinTypes{\htau_1}{\htau_2}}
\\
\JoinTypes
{\tsum{\htau_1}{\htau_2}}
{\tsum{\htau_1}{\htau_2}}
&=&
\tsum{\JoinTypes{\htau_1}{\htau_2}}
     {\JoinTypes{\htau_1}{\htau_2}}
\end{array}
\]

\begin{thm}[Joins]
If $\JoinTypes{\tau_1}{\tau_2} = \tau$
then types $\tau_1$, $\tau_2$ and $\tau$ are pair-wise consistent, i.e.,
$\tconsistent{\tau_1}{\tau_2}$,
$\tconsistent{\tau_1}{\tau}$ and
$\tconsistent{\tau_2}{\tau}$.
\begin{proof}
By induction on the derivation of $\JoinTypes{\tau_1}{\tau_2} = \tau$.
\end{proof}
\end{thm}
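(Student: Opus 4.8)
The plan is to prove all three consistencies simultaneously by rule induction on the derivation of $\JoinTypes{\tau_1}{\tau_2} = \tau$ --- equivalently, by cases on which defining clause of the (partial) $\mathsf{join}$ function applies, with an induction hypothesis available whenever a clause recurses on $\mathsf{join}$ at structurally smaller arguments. Because type consistency is symmetric, in each case it suffices to establish the three goals $\tconsistent{\tau_1}{\tau_2}$, $\tconsistent{\tau_1}{\tau}$, and $\tconsistent{\tau_2}{\tau}$ independently; I will \emph{not} derive one of them from the other two, since, as noted in Sec.~\ref{sec:calculus}, consistency is not transitive.

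The two ``base'' clauses are immediate from the axioms of Fig.~\ref{fig:tconsistent}. For $\JoinTypes{\tau}{\tau}=\tau$, all three goals are instances of \rulename{TCRefl}. For $\JoinTypes{\tehole}{\tau}=\tau$, the goals $\tconsistent{\tehole}{\tau}$ and $\tconsistent{\tehole}{\tau}$ follow from \rulename{TCHole1}, and $\tconsistent{\tau}{\tau}$ from \rulename{TCRefl}; the clause $\JoinTypes{\tau}{\tehole}=\tau$ is the mirror image, using \rulename{TCHole2}. For the congruence clauses --- the arrow clause and the sum clause --- the result is formed by joining corresponding components, so the induction hypotheses supply pairwise consistency of each component triple, and then the congruence rule for the relevant connective (\rulename{TCArr} for arrows, and the analogous congruence rule for $+$ that accompanies the sum-type extension) assembles these into the three required consistencies for the composite types. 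One bookkeeping point: the displayed $\mathsf{join}$ clauses write the same metavariables on both operands of a congruence clause; reading these as the intended ``match positions, recurse componentwise'' rule, the joined components are genuinely smaller than the inputs, so the induction is well-founded.

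I do not anticipate a real obstacle: the statement is a routine structural induction that essentially retraces the clauses of the $\mathsf{join}$ definition against the axioms and congruence rules of the consistency judgement. The only points worth verifying carefully are (i) that the consistency judgement contains a congruence rule for every type former that can appear on the right-hand side of a $\mathsf{join}$ clause --- in particular for $+$, once the sum extension is in scope --- and (ii) that each of the three consistencies is proved on its own rather than via transitivity, which this relation does not enjoy.
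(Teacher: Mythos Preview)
Your proposal is correct and follows exactly the approach the paper indicates: induction on the derivation of $\JoinTypes{\tau_1}{\tau_2} = \tau$. The paper's own proof is a single line stating this induction, so your case analysis (base clauses via \rulename{TCRefl}/\rulename{TCHole1}/\rulename{TCHole2}, congruence clauses via the IH and \rulename{TCArr}/its sum analogue) simply spells out the routine details, including the sensible observation that transitivity must not be invoked.
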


\vsepRule

\judgbox
 {\summatch{\htau}{\tsum{\htau_1}{\htau_2}}}
 {Type~$\htau$ has matched sum type~$\tsum{\htau_1}{\htau_2}$}
\begin{mathpar}
\inferrule[]{~}{\summatch{\tsum{\tau_1}{\tau_2}}{\tsum{\tau_1}{\tau_2}}}
\and
\inferrule[]{~}{\summatch{\tehole}{\tsum{\tehole}{\tehole}}}
\end{mathpar}

\vsepRule

\judgbox
  {\hana{\hGamma}{\hexp}{\htau}}
  {$\hexp$ analyzes against type $\htau$}
\begin{mathpar}
\inferrule[]{
  \summatch{\htau}{\tsum{\htau_1}{\htau_2}}\\
  \hana{\hGamma}{\hexp}{\htau_1}
}{
  \hana{\hGamma}{\hinL{\hexp}}{\htau}
}

\inferrule[]{
  \summatch{\htau}{\tsum{\htau_1}{\htau_2}}\\
  \hana{\hGamma}{\hexp}{\htau_2}
}{
  \hana{\hGamma}{\hinR{\hexp}}{\htau}
}

\inferrule[]{
  \hsyn{\hGamma}{\hexp_1}{\htau_1}\\
  \summatch{\htau_1}{\tsum{\htau_{11}}{\htau_{12}}}\\\\
  \hana{\hGamma, x : \htau_{11}}{\hexp_2}{\htau}\\\\
  \hana{\hGamma, y : \htau_{12}}{\hexp_3}{\htau}
}{
  \hana{\hGamma}{
    \hcase{\hexp_1}{x}{\hexp_2}{y}{\hexp_3}
  }{
    \htau
  }
}
\end{mathpar}

\vsepRule

\judgbox
  {\elabAna{\hGamma}{\hexp}{\htau_1}{\dexp}{\htau_2}{\Delta}}
  {$\hexp$ analyzes against type $\htau_1$ and
   elaborates to $\dexp$ of consistent type $\htau_2$}
\begin{mathpar}
\inferrule[]{
  \summatch{\htau}{\tsum{\htau_1}{\htau_2}}
  \\
  \elabAna{\hGamma}{\hexp}{\htau_1}{\dexp}{\htau'_1}{\Delta}
}{
  \elabAna{\hGamma}{\hinL{\hexp}}{\htau}{\dinL{\tau_2}{\dexp}}{\tsum{\htau_1'}{\htau_2}}{\Delta}
}

\inferrule[]{
  \summatch{\htau}{\tsum{\htau_1}{\htau_2}}
  \\
  \elabAna{\hGamma}{\hexp}{\htau_2}{\dexp}{\htau'_2}{\Delta}
}{
  \elabAna{\hGamma}{\hinR{\hexp}}{\htau}{\dinL{\tau_1}{\dexp}}{\tsum{\htau_1}{\htau'_2}}{\Delta}
}

\inferrule[]{
  \elabSyn{\hGamma}{\hexp_1}{\htau_1}{\dexp_1}{\Delta_1}
  \\
  \summatch{\htau_1}{\tsum{\htau_{11}}{\htau_{12}}}
  \\\\
  \elabAna{\hGamma, x:\htau_{11}}{\hexp_2}{\htau}{\dexp_2}{\htau_2}{\Delta_2}
  \\
  \elabAna{\hGamma, y:\htau_{12}}{\hexp_3}{\htau}{\dexp_3}{\htau_3}{\Delta_3}
  \\\\
  \JoinTypes{\htau_2}{\htau_3} = {\htau'}
  \\
  \Delta = \Delta_1 \cup \Delta_2 \cup \Delta_3
}{
  \elabAna{\hGamma}
            {\hcase{e_1}{x}{e_2}{y}{e_3}}
            {\htau}
            {\dcase
                {\dcasttwo{d_1}{\htau_1}{\tsum{\htau_{11}}{\htau_{12}}}}
                {x}{\dcasttwo{d_2}{\htau_2}{\htau'}}
                {y}{\dcasttwo{d_3}{\htau_3}{\htau'}}
            }
            {\htau'}
            {\Delta}
            }
\end{mathpar}


\judgbox{\hasType{\Delta}{\hGamma}{\dexp}{\htau}}{$\dexp$ is assigned type $\htau$}
\begin{mathpar}
\inferrule[]{
  \hasType{\Delta}{\hGamma}{\dexp}{\htau_1}
}{
  \hasType{\Delta}{\hGamma}{\dinL{\tau_2}{\dexp}}{\tsum{\htau_1}{\htau_2}}
}

\inferrule[]{
  \hasType{\Delta}{\hGamma}{\dexp}{\htau_2}
}{
  \hasType{\Delta}{\hGamma}{\dinR{\tau_1}{\dexp}}{\tsum{\htau_1}{\htau_2}}
}

\inferrule[]{
  \hasType{\Delta}{\hGamma}{\dexp_1}{\tsum{\htau_1}{\htau_2}}
  \\
  \hasType{\Delta}{\hGamma,x:\htau_1}{\dexp_2}{\htau}
  \\
  \hasType{\Delta}{\hGamma,y:\htau_2}{\dexp_3}{\htau}
}{
  \hasType{\Delta}{\hGamma}{\dcase{\dexp_1}{x}{\dexp_2}{y}{\dexp_3}}{\htau}
}
\end{mathpar}


\judgbox{\isValue{\dexp}}{$\dexp$ is a value}
\begin{mathpar}
\inferrule[]
{\isValue{\dexp}}
{\isValue{\dinL{\htau}{\dexp}}}

\inferrule[]
{\isValue{\dexp}}
{\isValue{\dinR{\htau}{\dexp}}}
\end{mathpar}

\vsepRule

\judgbox{\isGround{\htau}}{$\htau$ is a ground type}
\begin{mathpar}
\inferrule[]{~}{
  \isGround{\tsum{\tehole}{\tehole}}
}
\end{mathpar}

\judgbox{\groundmatch{\htau}{\htau'}}{$\htau$ has matched ground type $\htau'$}
\begin{mathpar}
\inferrule[]{
  \tsum{\htau_1}{\htau_2}\neq\tsum{\tehole}{\tehole}
}{
  \groundmatch{\tsum{\htau_1}{\htau_2}}{\tsum{\tehole}{\tehole}}
}
\end{mathpar}

\vsepRule

\judgbox{\isBoxedValue{\dexp}}{$\dexp$ is a boxed value}
\begin{mathpar}
\inferrule[]
{\isBoxedValue{\dexp}}
{\isBoxedValue{\dinL{\htau}{\dexp}}}

\inferrule[]
{\isBoxedValue{\dexp}}
{\isBoxedValue{\dinR{\htau}{\dexp}}}

\inferrule[]
{\tsum{\htau_1}{\htau_2} \ne
 \tsum{\htau_1'}{\htau_2'}
 \\
 \isBoxedValue{\dexp}
}
{\isBoxedValue{\dcasttwo{\dexp}
    {\tsum{\htau_1}{\htau_2}}
    {\tsum{\htau_1'}{\htau_2'}}
}}
\end{mathpar}

\vsepRule

\judgbox{\isIndet{\dexp}}{$\dexp$ is indeterminate}
\begin{mathpar}
\inferrule[]
{\isIndet{\dexp}}
{\isIndet{\dinL{\htau}{\dexp}}}

\inferrule[]
{\isIndet{\dexp}}
{\isIndet{\dinR{\htau}{\dexp}}}

\inferrule[]
{\tsum{\htau_1}{\htau_2} \ne
 \tsum{\htau_1'}{\htau_2'}
 \\
 \isIndet{\dexp}
}
{\isIndet{\dcasttwo{\dexp}
    {\tsum{\htau_1}{\htau_2}}
    {\tsum{\htau_1'}{\htau_2'}}
}}

\inferrule[]
{
  \dexp_1 \ne \dinL{\tau}{\dexp_1'}
  \\
  \dexp_1 \ne \dinR{\tau}{\dexp_1'}
  \\
  \dexp_1 \ne \dcasttwo{\dexp_1'}{\tsum{\htau_1}{\htau_2}}{\tsum{\htau_1'}{\htau_2'}}
  \\
  \isIndet{\dexp_1}
}
{
  \dcase{\dexp_1}{x}{\dexp_2}{y}{\dexp_3}
}
\end{mathpar}


\begin{mathpar}
\arraycolsep=4pt\begin{array}{rllllll}
\mathsf{EvalCtx} & \evalctx & ::= & \cdots ~\vert~
  \dinL{\htau}{\evalctx}
  ~\vert~
  \dinR{\htau}{\evalctx}
  ~\vert~
  \dcase{\evalctx}{x}{\dexp_1}{y}{\dexp_2}
\end{array}
\end{mathpar}


\judgbox{\selectEvalCtx{\dexp}{\evalctx}{\dexp'}}{$\dexp$ is obtained by placing $\dexp'$ at the mark in $\evalctx$}
\begin{mathpar}
\inferrule[]
{\selectEvalCtx{\dexp_1}{\evalctx}{\dexp_1'}}
{\selectEvalCtx{\dcase{\dexp_1}{x}{\dexp_2}{y}{\dexp_3}}
               {\dcase{\evalctx}{x}{\dexp_2}{y}{\dexp_3}}{\dexp_1'}}
\end{mathpar}

\judgbox{\reducesE{\Delta}{\dexp_1}{\dexp_2}}{$\dexp_1$ transitions to $\dexp_2$}
\begin{mathpar}
\inferrule[]
{\maybePremise{\isFinal{\dexp_1}}}
{\reducesE{\Delta}
  {\dcase{\dinL{\htau}{\dexp_1}}{x}{\dexp_2}{y}{\dexp_3}}
  {\DoSubst{\dexp_1}{x}{\dexp_2}}}

\inferrule[]
{\maybePremise{\isFinal{\dexp_1}}}
{\reducesE{\Delta}
  {\dcase{\dinR{\htau}{\dexp_1}}{x}{\dexp_2}{y}{\dexp_3}}
  {\DoSubst{\dexp_1}{y}{\dexp_3}}}

\inferrule[]
{\maybePremise{\isFinal{\dexp_1}}}
{\reducesE{\Delta}
  {\dcase
    {\dcasttwo{\dexp_1}{\tsum{\tau_1}{\tau_2}}{\tsum{\tau_1'}{\tau_2'}}}
    {x}{\dexp_2}{y}{\dexp_3}}
  {\dcase
    {\dexp_1}
    {x}{\DoSubst{\dcasttwo{x}{\tau_1}{\tau_1'}}{x}{\dexp_2}}
    {y}{\DoSubst{\dcasttwo{y}{\tau_2}{\tau_2'}}{y}{\dexp_3}}}
}
\end{mathpar}





\else

\fi

\end{document}